\pgfplotsset{compat=1.10}
\def\IN{m_{\mathrm{in}}}
\def\OUT{m_{\mathrm{out}}}
\def\eps{\varepsilon}
\newcommand\MM{\mathrm{MM}}
\newcommand\AETriangle{\textsc{AE-Triangle}}
\newcommand\PSAETriangle{\textsc{PS-AE-Triangle}}
\newcommand\tOh{\widetilde{O}}
\crefname{fact}{Fact}{Facts}
\Crefname{fact}{Fact}{Facts}
\crefname{hypothesis}{Hypothesis}{Hypotheses}
\Crefname{hypothesis}{Hypothesis}{Hypotheses}
\crefname{figure}{Figure}{Figures}
\Crefname{figure}{Figure}{Figures}
\title{The Time Complexity of Fully Sparse Matrix Multiplication}
\author{%
	\parbox[t]{.38\textwidth}{\centering%
		Amir Abboud%
		\footnote{This work is part of the project CONJEXITY that has received funding from the European Research Council (ERC) under the European Union's Horizon Europe research and innovation programme (grant agreement No.~101078482). Supported by an Alon scholarship and a research grant from the Center for New Scientists at the Weizmann Institute of Science.}
		\\[.2ex]\small Weizmann Institute of Science}
	\and\parbox[t]{.38\textwidth}{\centering%
		Karl Bringmann%
		\footnote{This work is part of the project TIPEA that has received funding from the European Research Council (ERC) under the European Unions Horizon 2020 research and innovation programme (grant agreement No. 850979).}
		\\[.2ex]\small Saarland University
		\\\small Max Planck Institute for Informatics
		\\[-.6ex]\mbox{}}
	\and\parbox[t]{.38\textwidth}{\centering
		Nick Fischer%
		\footnote{Supported by the project CONJEXITY as above.}
		\\[.2ex]\small Weizmann Institute of Science}
	\and\parbox[t]{.38\textwidth}{\centering%
		Marvin Künnemann%
		\footnote{Research partially supported by the Deutsche Forschungsgemeinschaft (DFG, German Research Foundation) – 462679611.}
		\\[.2ex]\small RPTU Kaiserslautern-Landau}}
\date{}
\begin{document}
\maketitle

\begin{abstract}
\noindent 
What is the time complexity of matrix multiplication of sparse integer matrices with $\IN$ nonzeros in the input and $\OUT$ nonzeros in the output? This paper provides improved upper bounds for this question for almost any choice of~$\IN$ vs.~$\OUT$, and provides evidence that these new bounds might be optimal up to further progress on fast matrix multiplication.

Our main contribution is a new algorithm that reduces sparse matrix multiplication to dense (but smaller) rectangular matrix multiplication. Our running time thus depends on the optimal exponent $\omega(a,b,c)$ of multiplying \emph{dense}~\makebox{$n^a \times n^b$} by $n^b \times n^c$ matrices. We discover that when $\OUT=\Theta(\IN^r)$ the time complexity of sparse matrix multiplication is $\Order(\IN^{\sigma +\epsilon})$, for all $\epsilon > 0$, where $\sigma$ is the solution to the equation~\makebox{$\omega(\sigma-1,2-\sigma,1+r-\sigma) = \sigma$}. No matter what $\omega(\cdot,\cdot,\cdot)$ turns out to be, and for all $r\in(0,2)$, the new bound beats the state of the art, and we provide evidence that it is optimal based on the complexity of the all-edge triangle problem.
  
In particular, in terms of the input plus output size $m = \IN + \OUT$ our algorithm runs in time $O(m^{1.3459})$.
Even for Boolean matrices, this improves over the previous \smash{$m^{\frac{2\omega}{\omega+1}+\epsilon}=O(m^{1.4071})$} bound [Amossen, Pagh; 2009], which was a natural barrier since it coincides with the longstanding bound of all-edge triangle in sparse graphs [Alon, Yuster, Zwick; 1994]. We find it interesting that matrix multiplication can be solved faster than triangle detection in this natural setting. In fact, we establish an equivalence to a \emph{special case} of the all-edge triangle problem.
\end{abstract}

\thispagestyle{empty}
\newpage
\setcounter{page}{1}

\section{Introduction}
Matrix multiplication is one of the most fundamental and important computational problems. Countless papers are devoted it (e.g., the surveys~\cite{Blaser13,GaoJCHWLW23}), including celebrated algorithms that bound the famous exponent $2\leq \omega < 2.3719$ of its time complexity on worst-case $n\times n$ matrices~\cite{Strassen69,Pan78,BiniCRL79,Pan80,Schonhage81,Romani82,CoppersmithW82,Strassen86,CoppersmithW90,CohnU03,CohnKSU05,Stothers10,Williams12,CohnU13,AlmanW21,DuanWZ23} as well as some lower bounds~\cite{Blaser99,Raz03,Shpilka03,Landsberg14,Landsberg18}. Like any other problem, in most applications (theoretical or practical) the input matrices of interest could be \emph{sparse} (or could be sparsified without much harm) in the sense that the number of nonzero entries is $\IN= o(n^2)$. Consequently, the following question has been repeatedly raised by researchers in various domains, seeking a quantitative understanding of the gains from sparsity: 

\begin{center}
	\textit{What is the time complexity of matrix multiplication if the matrices are sparse?}
\end{center}

It is natural to expect a fine-grained answer in the form of a bound that depends on $\IN$ and~$\omega$. However, the answer may not be satisfying if we do not consider the additional parameter $\OUT$ -- the number of nonzeros in the \emph{output} matrix.\footnote{We assume that the matrices are represented as lists of nonzero entries. Logarithmic factors will not matter in our bounds.} This is because even two very sparse matrices with $\IN=O(n)$ could, in pathological cases, produce a very dense output matrix with $\OUT=\Omega(n^2)$, giving a trivial lower bound of $\Omega(\IN^2)$ just to write the output. But this lower bound is too simplistic: in almost all interesting cases $\OUT$ is much closer to $\IN$. For example, two matrices with $O(n)$ nonzeros in \emph{random} locations can be multiplied in $O(n)$ expected time.\footnote{For any $(i,k)$ such that $A[i,k]=1$, in expectation there are only $O(1)$ relevant entries $(k,j)$ such that $B[k,j]=1$.} There is a rich landscape between these two extremes that we cannot capture if we ignore $\OUT$. The dream goal is an ``instance optimal'' algorithm that achieves the best possible time complexity for any input matrices, based on their specific properties. Towards that goal, we are seeking a bound that includes 
\emph{both} $\IN$ \emph{and} $\OUT$.

Considerable effort has gone towards this question, from multiple communities, leading to a state of affairs with several incomparable and complicated bounds. Some of the bounds are discussed below and summarized in Table~\ref{tab:related-work}. In this paper, we clean up the picture by giving (1) a new algorithm for sparse matrix multiplication, (2) an upper bound on its complexity for any setting of~$\IN$ vs.~$\OUT$, and (3) evidence that the achieved bound is tight no matter what the complexity of \emph{dense} (rectangular) matrix multiplication turns out to be.

\subsection{Previous Work}
We briefly review relevant results for sparse matrix multiplication. Let $A,B$ be $n\times n$ matrices, let~$\IN$ denote the number of nonzeros in $A$ and $B$, and~$\OUT$ the number of nonzeros in $A\cdot B$.

The goal of early works was to achieve bounds for \emph{input-sparse} matrix multiplication that get as close as possible to the $O(n^2)$ bound, under minimal assumptions on $\IN$ and $\omega$. The starting point is a folklore approach, first described in~\cite{Gustavson78}, of computing~\smash{$C[i,j] = \sum_{k: A[i,k]\ne 0} A[i,k]\cdot B[k,j]$} over all $i,j$ in time $O(\IN \cdot n)$. An analysis of this approach for uniformly distributed input is given in~\cite{Schorr82}. Using fast rectangular matrix multiplication, a seminal algorithm due to Yuster and Zwick~\cite{YusterZ05} computes $A\cdot B$ faster than dense matrix multiplication whenever~\smash{$\IN = O(n^{\frac{\omega+1}{2}-\epsilon})$}. Their arguments have been adapted to rectangular input matrices in~\cite{KaplanSV06}. Based on recent assumptions from fine-grained complexity, it can be shown that the Yuster-Zwick algorithm is best-possible; see the discussion in \cref{sec:yuster-zwick}.

To obtain further improvements, many subsequent works also exploit sparsity of the output, i.e., take $\OUT$ into account, and thus could hope to achieve subquadratic running times. Randomized algorithms are given in~\cite{Lingas11, Pagh13,GuchtWWZ15,JacobS15, Roche18} and deterministic algorithms in~\cite{AmossenP09,Kutzkov13,GasieniecLLPT17,Kunnemann18,DeepHK20}.  We give a summary in Table~\ref{tab:related-work}. Even if~\makebox{$\IN,\OUT = \Theta(n)$}, only three of the above algorithms can truly beat quadratic running time~$n^{2\pm o(1)}$ (achieved by dense matrix multiplication if $\omega=2$): the Amossen-Pagh bound~\cite{AmossenP09}\footnote{\label{note:ap-bound}The conference version of this paper claimed a better time bound of $O(\IN^{2/3} \OUT^{2/3} + \IN^{0.862} \OUT^{0.408})$ without the restriction $\IN \le \OUT$. On the authors' website this was corrected to a time bound of $O(\IN^{2/3} \OUT^{2/3} + \IN^{0.862} \OUT^{0.546})$ with the restriction $\IN \le \OUT$. (This time bound is the same as what we wrote in Table~\ref{tab:related-work} apart from using updated values for $\alpha,\omega$.) In~\cite{DeepHK20}, the analysis of the Amossen-Pagh algorithm was extended to the case $\OUT \leq \IN$ achieving a running time of \smash{$\IN \cdot (\OUT)^{\frac{2\omega}{\omega+1}-1+\order(1)} = \Order(\IN \cdot \OUT^{0.407})$}.},  and the bound achieved by van Gucht et al.~\cite{GuchtWWZ15} and Roche~\cite{Roche18}\footnote{\label{note:roche-bound}Roche gives a more refined bound, involving an additional parameter $r$. If we only assume that $\IN \ge n$ and use the worst-case choice for $r$, we obtain the stated bound.}. Since our main interest is in subquadratic running times, we will only compare our results to these two bounds.

Some works study additional parameters such as the distribution of nonzeros over the rows and columns~\cite{IwenS09,Roche18}. The closely related setting of error correction for matrix products has been studied in~\cite{GasieniecLLPT17,Roche18}.\footnote{Over rings, this turns out to be essentially equivalent to sparse matrix multiplication, see~\cite{Kunnemann18} and Section~\ref{sec:introduction:sec:apps}.} The communication complexity of output-sensitive matrix multiplication has been studied in~\cite{WilliamsY14}. Finally, output-sensitive quantum algorithms have been given, e.g., in~\cite{BuhrmanS06,VassilevskaWW18,LeGall12}.

\begin{table}[t]
\setlength\extrarowheight{1ex}
\caption{Overview over sparse matrix multiplication algorithms. We list time bounds $t$ with the understanding that there are algorithms with running time $O(t^{1+\epsilon})$ for arbitrarily small $\epsilon > 0$. For the deterministic output-sensitive algorithms in~\cite{Kutzkov13,Kunnemann18}, we assume that a close upper bound on $\OUT$ is given. We highlight in italic the only algorithms that run in time $O(n^{2-\epsilon})$ in the natural setting $\IN,\OUT = \Theta(n)$. To get the numerical bounds, we plug in the current bounds of~\makebox{$\omega \le 2.3719$}~\cite{DuanWZ23} and $\alpha \ge 0.3138$~\cite{LeGallU18}. Related results are given by~\cite{IwenS09, WilliamsY14, GasieniecLLPT17, BuhrmanS06,VassilevskaWW18,LeGall12}.} \label{tab:related-work}
\small
\begin{tabular*}{\textwidth}{p{.33\textwidth}p{.33\textwidth}p{.33\textwidth}}
    \toprule
    Source & Running Time & Notes \\[.5ex]
    \midrule
    Dense matrix multiplication & $n^{\omega}$ & \\
    Gustavson~\cite{Gustavson78} (folklore) & $n\cdot \IN$ &  \\
    Yuster and Zwick~\cite{YusterZ05} & \makecell[lt]{$\IN^{\frac{2(\omega-2)}{\omega-1-\alpha}}n^{\frac{2-\alpha\omega}{\omega-1-\alpha}}+n^{2}$\\$\qquad\le \IN^{0.703}n^{1.187}+n^{2}$}  & only relevant if $\omega > 2$\\ 
	\makecell[lt]{\emph{Amossen and Pagh}{\textsuperscript{\ref{note:ap-bound}}}\\\cite{AmossenP09,DeepHK20}} & \makecell[lt]{$\IN^{2/3} \OUT^{2/3} + \IN^{\frac{(2-\alpha)\omega-2}{(1+\omega)(1-\alpha)}} \OUT^{\frac{2-\alpha\omega}{(1+\omega)(1-\alpha)}}$\\$\qquad\le \IN^{2/3} \OUT^{2/3} + \IN^{0.865} \OUT^{0.543}$} & only if $\IN\le \OUT$, Boolean \\
    Lingas~\cite{Lingas11} & \makecell[lt]{$n^2 \OUT^{\omega/2-1}$\\$\qquad\le n^2\OUT^{0.186}$} & Boolean, randomized\\
    Pagh~\cite{Pagh13} & $\IN + n\cdot \OUT$ & real-valued, randomized\\
    Kutzkov~\cite{Kutzkov13} & $n^2+ n\cdot \OUT^2$ & real-valued\\ 
    Jacob and Stöckel~\cite{JacobS15} & \makecell[lt]{$\IN + n^2(\frac{\OUT}{n})^{\omega-2}$\\$\qquad \le \IN+n^2(\frac{\OUT}{n})^{0.3719}$} & field-valued, randomized\\
    \makecell[lt]{\emph{van Gucht, Williams, Woodruff,}\\\emph{Zhang}~\cite{GuchtWWZ15}} & $\OUT + \sqrt{\OUT}\IN$ & Boolean, randomized \\
    Künnemann~\cite{Kunnemann18} & $\sqrt{\OUT}n^2 + \OUT^2$ & integer-valued\\
	\emph{Roche}{\textsuperscript{\ref{note:roche-bound}}}~\cite{Roche18} & $\OUT + \sqrt{\OUT}\IN$ & field-valued, randomized\\
    \bottomrule
\end{tabular*}
\end{table}

\paragraph{Combinatorial Algorithms}
There has been interest in ``combinatorial'' matrix multiplication algorithms \cite{ArlazarovDKF70,BansalW12,VassilevskaWW18,Chan15,Yu18,DasKS18} that do not exploit algebraic ideas, partly because these algorithms can be more efficient in practice. So far, such methods are only able to save polylogarithmic factors over the naive $O(n^3)$ time complexity in the dense case. In the sparse case, the $\tOh(\OUT + \sqrt{\OUT}\IN)$ bound of \cite{GuchtWWZ15} is combinatorial, and it cannot be improved by more than $n^{o(1)}$ factors without breaking through the cubic bound in the dense case.\footnote{The \emph{Combinatorial Boolean matrix multiplication} conjecture implies that multiplying an $n^a \times n^b$ by an $n^b \times n^a$ matrix requires time $n^{2a+b-\order(1)}$. Since for this problem the input size is trivially bounded by $n^{a+b}$ and the output size is bounded by $n^{2a}$, a combinatorial algorithm for sparse matrix multiplication in time $\Order((\sqrt{\OUT} \IN)^{1-\epsilon})$ would imply a combinatorial algorithm for the aforementioned problem in time \smash{$\Order(n^{2a+b-\epsilon'})$}; a contradiction.}
The goal of this paper is to quantify how well non-combinatorial methods perform in the sparse case.

\paragraph{Rectangular Matrix Multiplication and the \boldmath$\omega(a,b,c)$ Notation}
Our algorithms reduce sparse matrix multiplication to dense (but smaller) \emph{rectangular} matrix multiplication. The running times thus depend on the optimal exponent $\omega(a,b,c)$ of multiplying \emph{dense} $n^a \times n^b$ by $n^b \times n^c$ matrices, for certain values of $a,b,c \geq 0$ depending on the setting. The square case of $\omega=\omega(1, 1, 1)$ implies certain ``naive'' bounds on $\omega(a, b, c)$ in a black-box way by partitioning rectangles into squares. Much better bounds, however, can be obtained in a non-black-box application of the techniques. The most important constant related to rectangular matrix multiplication is $\alpha\leq 1$ defined as the largest constant such that $\omega(1,\alpha,1)=2$. Note that~$\alpha=1$ if and only if $\omega=2$. It has been known for a while that $\alpha > 0$~\cite{Coppersmith82,Coppersmith97,HuangP98,LeGall12}, and the current bound is $\alpha \geq 0.3138$~\cite{LeGallU18}. Other interesting bounds on $\omega(a, b, c)$ where $a, c$ are not $1$ have also been obtained~\cite{LeGall12,LeGallU18}. In a sense, the goal of this work is to settle the sparse setting up to further progress on the rectangular setting, thus reducing the number of research topics by one.

\subsection{Our Results: The Fully Sparse Setting}
In \cref{sec:introduction:sec:general} we present our general bound for any $\OUT=\IN^r$ in the entire feasible range $r \in [0,2]$, and evidence that it is optimal. To highlight our contributions as clearly as possible, we focus in this section on the complexity in terms of the single parameter $m:=\IN+\OUT \approx \max(\IN,\OUT)$ that represents the size of the input plus the output. We call this the \emph{fully sparse} setting because~$m$ bounds the sparsity of both input and output. A high-degree/low-degree algorithm achieves time~\makebox{$O(m^{3/2})$} and the trivial lower bound is $\Omega(m)$; we would like to know the precise exponent. Note that this is a special case of our general result where $r=1$ because in the worst-case~\makebox{$\IN,\OUT=\Theta(m)$}.

The fully sparse setting is not only elegant from a theoretical point of view (because there is only one parameter) but it also received special attention due to natural applications for joins in databases and for transitive closure computation in graphs. See Section~\ref{sec:introduction:sec:apps} for more details. Let us also remark that for the extensively-studied \emph{sparse convolution} problem~\cite{ColeH02,ArnoldR15,ChanL15,Nakos20,BringmannFN21,BringmannFN22}, which is a one-dimensional analogue of our problem, the fully sparse setting is the predominantly studied sparse setting.

\paragraph{Boolean Matrix Multiplication: Beating the Triangle Bound}
Previous work on sparse matrix multiplication has beaten the $m^{3/2}$ bound in the important special case of Boolean matrices over $\{0,1\}$; the state of the art (before our paper) was time \smash{$m^{\frac{2\omega}{\omega+1}+\epsilon} \le O(m^{1.4071})$}, which becomes~$m^{4/3+\epsilon}$ if $\omega=2$~\cite{AmossenP09}. This is a natural barrier because it coincides with the longstanding upper bound for detecting a triangle in a graph on~$m$ edges~\cite{AlonYZ97}.

Triangle detection and (Boolean) matrix multiplication are intimately connected by fine-grained reductions~\cite{VassilevskaWW18} in the dense case. Also in the sparse case, the triangle detection problem can be described as a natural \emph{subset} matrix multiplication problem~\cite{GoldsteinKLP16}: Given two matrices~$A,B$ with $m$ nonzeros, and given a subset~\makebox{$S \subseteq [n] \times [n]$} of $m$ entries, determine if any of the entries in~$(AB)[i,j]$ with $(i,j) \in S$ are nonzero. 

A priori, it is not clear if fully sparse matrix multiplication should be easier or harder than subset matrix multiplication. On the one hand, it feels harder because we not only need to compute $m$ entries in $A B$, we also don't know their locations. But on the other hand, we are promised that~$A B$ only has $m$ nonzeros whereas in the subset problem~$A B$ could be dense.

Our first main result is an algorithm for fully sparse matrix multiplication that beats the triangle bound, improving the complexity from $O(m^{1.4071})$ to $\Order(m^{1.3459})$. To our knowledge, this is the first instance of a natural setting where matrix multiplication is \emph{faster than triangle detection}. To state the precise running time, we introduce another constant related to rectangular matrix multiplication: Let $\mu$ be the (unique) solution to the equation $\omega(\mu, 1, 1) = 2\mu + 1$. The currently best bounds on $\mu$ are $\frac12 \leq \mu \leq 0.5286$~\cite{LeGallU18}, and if $\omega = 2$ then $\mu = \frac12$. This constant naturally appears in several algorithms, including various settings of the All-Pairs Shortest Paths (APSP) problem~\cite{Zwick99,Zwick02,AlonY07,ChanWX21}---most notably, Zwick's algorithm for directed, unweighted APSP in time $O(n^{2+\mu+\epsilon})$---as well as algorithms for dynamic transitive closure~\cite{DemetrescuI05,Sankowski04}.

\begin{theorem}[Fully Sparse Boolean Matrix Multiplication]
\label{thm:det-sparseMM}
Sparse Boolean matrix multiplication is in deterministic time \smash{$\Order(m^{1+\frac{\mu}{1+\mu}+\epsilon}) \leq \Order(m^{1.3459})$}, for all $\epsilon > 0$.
\end{theorem}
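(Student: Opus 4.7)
The target time $m^{1+\mu/(1+\mu)}$ equals $n^{2\mu+1}$ for $n := m^{1/(1+\mu)}$, which in turn equals $n^{\omega(\mu,1,1)}$ by the defining equation for $\mu$. My plan is therefore to reduce sparse Boolean matrix multiplication to $\tOh(1)$ dense rectangular matrix multiplications, in each of which $\mu$ serves as the ``balanced'' exponent.

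As a first step I would deterministically hash indices (via a perfect hash family) so that both matrices live in $[n] \times [n]$ with $n = \Theta(m^{1/(1+\mu)})$, paying only an $m^\epsilon$ factor. After this normalization the average row density is $\Theta(n^\mu)$ and $m = n^{1+\mu}$. I would then introduce a further hash $\pi \colon [n] \to [n^\mu]$ and form the $n \times n^\mu$ matrix $\bar B$ with $\bar B[k,\beta] := \bigvee_{j:\pi(j)=\beta} B[k,j]$. The compressed product $\bar C := A \bar B$ can be computed either by a dense rectangular multiplication in time $O(n^{\omega(1,1,\mu)}) = O(n^{2\mu+1})$, or equivalently by a direct sparse-times-dense pass that iterates over the $m$ nonzeros of $A$ and OR-s the corresponding length-$n^\mu$ rows of $\bar B$, in time $\tOh(m \cdot n^\mu)$; the two costs coincide precisely because $\omega(\mu,1,1)=2\mu+1$. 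The entry $\bar C[i,\beta]$ indicates whether $C[i,j]=1$ for some $j \in \pi^{-1}(\beta)$. Repeating this with $\tOh(1)$ independent deterministic hashes and decoding via perfect-hashing-style isolation lists every nonzero of $C$ in time $\tOh(m)$, \emph{conditional} on each row of $C$ having at most $\tOh(n^\mu)$ nonzeros.

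The main obstacle is exactly this row-density condition: a row with $|C_{i,\cdot}| \in (n^\mu, n]$ cannot be isolated from only $n^\mu$ buckets. I plan to address it via a dyadic decomposition over row weights: for each scale $H = 2^j n^\mu \leq n$, I would rerun the compressed product with $H$ buckets on the submatrix of $A$ whose rows $i$ satisfy $|C_{i,\cdot}| \in [H, 2H)$. The corresponding rectangular product has shape $(m/H) \times n$ by $n \times H$; since $\log_n(m/H) + \log_n H = 1+\mu$ and both exponents lie in $[\mu, 1]$, convexity of $\omega$ bounds the cost by $n^{\omega(\mu,1,1)} = n^{2\mu+1}$ at every scale. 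Identifying which rows belong to which scale can be done either by running all $O(\log n)$ scales in parallel and taking the union of recovered entries, or by first estimating row weights with a coarse preliminary pass.

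Making every ingredient strictly deterministic (rather than randomized with high success probability) is the most delicate point; it requires explicit perfect-hash families of the appropriate size, whose construction cost and isolation slack are absorbed into the $m^\epsilon$ factor in the stated bound.
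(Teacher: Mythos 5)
Your overall architecture (dyadic decomposition by output-row weight, compression of the column space to $\Theta(H)$ buckets at scale $H$, isolation-based recovery, and a convexity/subadditivity argument showing the extreme shape $\omega(\mu,1,1)$ dominates the segment $a+c=1+\mu$ --- which is exactly \cref{lem:rectangular-hardest}) matches the paper's plan in spirit. But there is a genuine gap in your very first step, and it propagates into the cost analysis. You cannot ``normalize'' both matrices to live in $[n]\times[n]$ with $n=\Theta(m^{1/(1+\mu)})$: the \emph{middle} dimension $y$ can genuinely be as large as $m=n^{1+\mu}$ (up to $m$ indices $k$ can carry a nonzero column of $A$ and a nonzero row of $B$ simultaneously), and hashing the inner index merges distinct $k\neq k'$ with $A[i,k]=B[k',j]=1$, creating spurious $1$-entries in the Boolean product. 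No perfect-hash family fixes this, because it is not a collision issue on the support of one matrix but a semantic issue of the bilinear form. Without that normalization, your dense rectangular product at scale $H$ has shape $(m/H)\times y\times H$ with $y$ possibly $n^{1+\mu}$, and e.g.\ at the balanced scale its exponent is $(1+\mu)\,\omega(\tfrac12,1,\tfrac12)\geq \tfrac32(1+\mu)>1+2\mu$ for $\mu<1$. The missing ingredient is the paper's heavy/light split on the \emph{middle} dimension (\cref{lem:input-sparse}): middle indices whose column in $A$ has at most $m^{\mu/(1+\mu)}$ nonzeros are brute-forced by two-path enumeration in time $\Order(m^{1+\frac{\mu}{1+\mu}})$, and only then are the heavy middle indices few enough ($\leq m^{1/(1+\mu)}=n$) to justify middle dimension $n$ in the rectangular products. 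Your ``sparse-times-dense pass'' in time $\tOh(mH)$ does tolerate large $y$, but it only meets the budget at the single scale $H=n^\mu$, so it cannot substitute for this split at the larger scales.

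A secondary but still substantive gap is the deterministic recovery. With OR-compression, an isolated nonzero bucket $\bar C[i,\beta]=1$ certifies existence of some $j\in\pi^{-1}(\beta)$ with $C[i,j]=1$ but does not identify $j$ among the $n/H$ candidates; ``perfect-hashing-style isolation'' does not list the nonzeros by itself. The paper circumvents identification entirely by first computing a small superset $S\supseteq\supp(AB)$ \emph{recursively} (halving the number of rows of $A$ and exploiting nonnegativity so no cancellations occur), after which isolation is only needed relative to the known candidate set $S$ and the entry can be read off directly; the isolation family is then derandomized by the method of conditional expectations. Relatedly, your ``coarse preliminary pass'' to estimate the row weights $|C_{i,\cdot}|$ is itself an output-sensitive problem of the same difficulty --- the recursive support approximation is precisely what supplies those degrees deterministically. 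As written, your proposal asserts these mechanisms rather than providing them, and they constitute most of the actual work in the paper's proof.
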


Our bound is strictly better than the triangle bound as long as $\omega>2$; otherwise, both bounds become~$m^{4/3+\epsilon}$ which is likely to be the right complexity ``at the end of days''. Interestingly, our algorithm does not need~\makebox{$\omega=2$} (or equivalently, $\alpha = 1$ where $\alpha$ is the aforementioned rectangular matrix multiplication exponent) to achieve exponent $4/3$ but already achieves it under the milder condition that~\makebox{$\mu=\frac12$} (or equivalently, $\alpha \geq \frac12$). In particular, while there are strong barrier results for all currently known approaches to square matrix multiplication (such as the popular \emph{laser method} and its generalizations, applied to the Coppersmith-Winograd tensor) that rule out proving~\makebox{$\omega < 2.3$}~\cite{AmbainisFG15,BlasiakCCGU16,BlasiakCCGU17,AlmanW18a,AlmanW18b,Alman21,ChristandlVZ21}, the best-known barriers for rectangular matrix multiplication only rule out proving $\alpha > 0.625$~\cite{AlmanW18a,ChristandlGLZ20}. Thus, it is conceivable that further progress on fast rectangular matrix multiplication, using the only current set of techniques, leads to sparse Boolean matrix multiplication in time $m^{4/3}$ via our \cref{thm:det-sparseMM}. 

\paragraph{Equivalence}
Can our algorithm be improved? That is, can we either (1) get closer to $m^{4/3}$ with current $\omega(\cdot,\cdot,\cdot)$ or (2) break the $m^{4/3}$ barrier in the future, e.g. if $\omega=2$?
Our next result shows that both (1) and (2) are \emph{equivalent} to making progress on a special case of the \emph{all-edge triangle} problem.

\medskip
Multiple fine-grained complexity conjectures are about triangles in graphs. Underlying all of them is the fundamental statement that \emph{``the only two algorithms for triangle detection are either two-path enumeration or fast matrix multiplication''}. The specific formalization of each hypothesis depends on the family of graphs and the bound achieved when combining these two algorithms. For example, the \emph{Strong Triangle} conjecture~\cite{AbboudV14} states that \smash{$m^{\frac{2\omega}{\omega+1}-o(1)}$} time is required on $m$ edge graphs, and the \emph{Unbalanced Triangle} hypothesis~\cite{KopelowitzV20} states that in a graph on three unbalanced parts with $m_1\le m_2\le m_3$ edges between them the time complexity is \smash{$(m_1m_2)^{1/3}m_3^{2/3-o(1)}$} (even if $\omega=2$). \cref{hypo:AET} in \cref{sec:introduction:sec:general} considers a bound in terms of nodes and edges in some natural regime.\footnote{It is possible to unify all of them under a general and formal hypothesis that address all possible choices of the six parameters for the numbers of nodes and edges between the three parts, but the bound becomes too cumbersome to state.}

In the \emph{all-edge} version, defined next, we must answer for each edge whether it is in a triangle. This version is appealing because (unlike mere detection) we can often prove its hardness based on other famous conjectures. In particular, reductions from 3-SUM \cite{Patrascu10,KopelowitzPP16} and APSP \cite{VassilevskaWilliamsX20} establish an $m^{4/3-{o(1)}}$ lower bound on $m$ edge graphs and serve as a stepping stone for many other reductions~\cite{AbboudBKZ22,AbboudBF23,JinX23}.\footnote{The $3$-SUM conjecture states that no $O(n^{2-\eps})$ time algorithm can decide, given a set of $n$ integers, if there are three that sum to zero. The APSP conjecture states that no $O(n^{3-\eps})$ time algorithm can compute all pair-wise distances in an edge-weighted graph on $n$ nodes.} In this paper, we need an \emph{unbalanced} all-edge triangle version.

\begin{restatable}[$\AETriangle$]{definition}{defaetriangle} \label{def:AET}
The $\AETriangle(x, y, z, m)$ problem is to decide, in a given tripartite graph $G = (X, Y, Z, E)$ with $|X| \leq x,\, |Y| \leq y,\, |Z| \leq z$ and $|E| \leq m$, for each edge $(i, j) \in (X \times Z) \cap E$ whether it is part of a triangle in $G$.
\end{restatable}

It is known that the $3$-SUM and APSP conjectures imply a matching $n^{2-o(1)}$ lower bound for $\AETriangle(n, n, n, n^{3/2})$~\cite{Patrascu10,KopelowitzPP16,VassilevskaWilliamsX20}. If we restrict the size of one of the three parts in the tripartite graph to $\sqrt n$ we get the $\AETriangle(\sqrt n, n, n, n^{3/2})$ problem. Our next theorem shows that (when~\makebox{$\mu=1/2$}) subquadratic time for this problem is possible if and only if the $m^{4/3}$ bound for fully-sparse Boolean matrix multiplication can be broken. It is interesting to note that in the $\AETriangle(\sqrt n, n, n, n^{3/2})$ setting only one of the edge-parts is sparse while the other two are dense. 

\begin{restatable}[Equivalence with \AETriangle]{theorem}{thmequivalenceboolean} \label{thm:equivalence-boolean}
The following two statements are equivalent in terms of deterministic and randomized algorithms:
\begin{enumerate}[label=(\arabic*)]
\item There is some $\epsilon > 0$ such that sparse Boolean matrix multiplication is in time $\Order(m^{1+\frac{\mu}{1+\mu}-\epsilon})$.
\item There is some $\epsilon' > 0$ such that $\AETriangle(n^\mu, n, n, n^{1+\mu})$ is in time $\Order(n^{1+2\mu-\epsilon'})$.
\end{enumerate}
\end{restatable}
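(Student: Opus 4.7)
This is the straightforward direction. Given an $\AETriangle(n^\mu, n, n, n^{1+\mu})$ instance $G = (X, Y, Z, E)$, I form the Boolean matrices $A \in \{0,1\}^{X \times Y}$ and $B \in \{0,1\}^{Y \times Z}$ encoding the $(X \times Y)$- and $(Y \times Z)$-edges of $E$. Then $\IN \le |E| = O(n^{1+\mu})$, and since $AB$ is supported on $X \times Z$ we also have $\OUT \le |X| \cdot |Z| = n^{1+\mu}$, so $m = \IN + \OUT = O(n^{1+\mu})$. Running the sparse BMM algorithm of hypothesis~(1) computes $AB$ in time $\Order(m^{\sigma - \epsilon})$ where $\sigma = (1+2\mu)/(1+\mu)$, i.e., in $\Order(n^{(1+\mu)(\sigma-\epsilon)}) = \Order(n^{1+2\mu - (1+\mu)\epsilon})$, using $(1+\mu)\sigma = 1+2\mu$. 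For each edge $(i,j) \in (X \times Z) \cap E$, a triangle exists iff $(AB)[i,j] \ne 0$, answered in additional time $O(|E|)$, which is dominated by the BMM step. Setting $\epsilon' := (1+\mu)\epsilon$ completes this direction.

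\paragraph{Direction $(2) \Rightarrow (1)$.} For the converse I plan to inspect the algorithm underlying \cref{thm:det-sparseMM}. That algorithm should, after a degree-based partition of the rows of $A$ and columns of $B$ into $O(\log m)$ buckets, isolate a single ``dense-dense'' bottleneck case: with $n_0 := m^{1/(1+\mu)}$, it reduces to a subset matrix multiplication on an $n_0^\mu \times n_0$ and an $n_0 \times n_0$ Boolean matrix, evaluated at $\le n_0^{1+\mu}$ candidate output positions. Via the standard tripartite encoding---put the rows of the first matrix in $X$, the shared dimension in $Y$, the columns of the second matrix in $Z$, the first matrix's nonzeros as $X \times Y$ edges, the second matrix's nonzeros as $Y \times Z$ edges, and the candidate positions as $X \times Z$ edges---this is exactly an $\AETriangle(n_0^\mu, n_0, n_0, n_0^{1+\mu})$ instance. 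Solving it via hypothesis~(2) takes time $\Order(n_0^{1 + 2\mu - \epsilon'}) = \Order(m^{\sigma - \epsilon'/(1+\mu)})$. Combined with the remaining partition-cases (which must already run in $\Order(m^{\sigma - \delta})$ for some $\delta > 0$), this yields sparse BMM in time $\Order(m^{\sigma - \epsilon})$ for $\epsilon = \min(\delta, \epsilon'/(1+\mu))$.

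\paragraph{Main obstacle.} The main work is in direction $(2) \Rightarrow (1)$: I need to verify that \cref{thm:det-sparseMM} can be organized so that (i)~the dense-dense partition-case is a genuine $\AETriangle(n_0^\mu, n_0, n_0, n_0^{1+\mu})$ instance, in particular with the candidate set of size exactly $n_0^{1+\mu}$, and (ii) every other partition-case runs strictly below $m^\sigma$. The balancing inherent to \cref{thm:det-sparseMM}---with $n_0 = m^{1/(1+\mu)}$ chosen so that $\omega(\mu, 1, 1) = 1 + 2\mu$ matches the target $m^\sigma$---naturally singles out the dense-dense case as the unique tight bottleneck, but carefully checking all $O(\log m)$ degree buckets to confirm no secondary case ties the bottleneck is the delicate technical step.
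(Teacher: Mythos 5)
Your direction $(1) \Rightarrow (2)$ is correct and is exactly the paper's argument. The gap is in direction $(2) \Rightarrow (1)$, and it sits precisely at the step you flag as ``the delicate technical step'' and then do not carry out: showing that every heavy sub-case other than the one you map to \AETriangle{} runs polynomially below $m^{\sigma}$. After densification and the heavy/light split on the middle dimension, the heavy product is an $x \times y_2 \times z$ rectangular multiplication where $x$ can be \emph{anything} in the range $m^{\mu/(1+\mu)} \le x \le \sqrt{m}$ (subject to $xz = O(m)$); only the extreme shape $x \approx m^{\mu/(1+\mu)}$, $z \approx m^{1/(1+\mu)}$ is an $\AETriangle(n_0^\mu, n_0, n_0, n_0^{1+\mu})$ instance. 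For the remaining shapes \cref{lem:rectangular-hardest} gives only $\omega(\mu+\delta,1,1-\delta) \le \omega(\mu,1,1) = 1+2\mu$, i.e., these cases are \emph{at most as hard} as the bottleneck --- not strictly easier. It is a priori consistent with all the soft facts about $\omega(\cdot,\cdot,\cdot)$ that $\omega(\mu+\delta,1,1-\delta) = 1+2\mu$ on a whole interval of $\delta > 0$, in which case your reduction yields no polynomial saving. This is why the paper proves \cref{lem:rho-bounds}, namely $\omega(\mu+\delta,1,1-\delta) \le 1+2\mu - 0.02\delta$, whose proof genuinely requires the numerical Le~Gall--Urrutia bounds $\omega(1,1.3,1) \le 2.6217$ and $\omega(1,1.4,1) \le 2.7085$ (the trivial bound $\omega(1,\tfrac43,1) \le \tfrac13 + \omega \le 2.7052$ does \emph{not} suffice, since one needs $\omega(1,\tfrac{2}{1+\mu},1) < \tfrac{2(1+2\mu)}{1+\mu} \approx 2.69$). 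So the missing piece is not a routine bucket-by-bucket check but a new quantitative input.

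Two smaller points. First, your description of the algorithm behind \cref{thm:det-sparseMM} as ``a degree-based partition into $O(\log m)$ buckets'' conflates the densification levels with the input-sparse step; the latter is a single heavy/light threshold on the shared dimension, and the real case analysis is over the continuum of possible $x$ (the paper splits it at $x \gtrless m^{\mu/(1+\mu)+300\epsilon}$, handling the large-$x$ side by fast rectangular MM via \cref{lem:rho-bounds} and only the small-$x$ side by the \AETriangle{} oracle). Second, in the oracle call the paper simply adds \emph{all} $xz = O(n_0^{1+\mu})$ edges in $X \times Z$ rather than a candidate set, which sidesteps your concern (i) entirely.
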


Viewed as a hardness result, Theorem~\ref{thm:equivalence-boolean} gives a tight lower bound for sparse matrix multiplication under a natural hypothesis (discussed below) about \AETriangle. In the other direction, it shows that to break the $m^{4/3}$ barrier for all sparse settings, it is enough to break it in a very specific case where one input matrix is rectangular and dense, one input matrix is square and sparse, and the output matrix is rectangular and dense.

\paragraph{Integer Matrix Multiplication (Or over any Ring)} 
In the matrix multiplication literature, algorithms for the Boolean case tend to extend seamlessly to handle entries in $\Int$ (or any ring, with arithmetic operations computable in constant time). In the sparse setting, however, prior work~\cite{AmossenP09} that beat $m^{3/2}$ faces a fundamental challenge when negative numbers are allowed: Many pairs of nonzero entries~$A[i,k]$ and~$B[k,j]$ (with the same $k$) can cancel out, forcing the algorithm to spend too much time without ``gaining'' any nonzero in the output. Notably, such massive cancellations do not only occur in pathological cases, but are actually inherent to interesting applications of sparse \emph{integer} matrix multiplication; one example to error correction is discussed in \cref{sec:introduction:sec:apps}.

An exciting feature of our new algorithm is that it can handle integers just as well, at the cost of introducing randomness, giving a dramatic improvement from time $\tOh(m^{1.5})$~\cite{Roche18} to time~$\Order(m^{1.3459})$. In fact, our algorithm extends to \emph{any} ring $R$. In this case our running time naturally depends on the complexity of dense matrix multiplication over that ring $R$; i.e., the constants $\omega$ and $\mu$ depend on $R$ (see the discussion in \cref{sec:preliminaries}).

\begin{theorem}[Fully Sparse Matrix Multiplication]
\label{thm:integer}
Let $R$ be a ring. Sparse matrix multiplication over $R$ is in randomized time \smash{$\Order(m^{1+\frac{\mu}{1+\mu}+\epsilon})$}, for all $\epsilon > 0$ (assuming an oracle for arithmetic operations over $R$). For $R = \Int$, this running time becomes $\Order(m^{1.3459})$.
\end{theorem}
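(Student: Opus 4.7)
The plan is to reuse the algorithm of \cref{thm:det-sparseMM} essentially verbatim, observing that its skeleton consists of a degree-based case analysis on the shared dimension that invokes dense rectangular matrix multiplications, together with direct enumeration of pairs of nonzeros in the low-degree case. Both ingredients are purely algebraic and extend to any ring $R$: the rectangular products are performed over $R$ at the same asymptotic cost (governed by the ring-specific exponents $\omega, \mu$), and the pair-enumeration step simply sums products in $R$ in place of Boolean OR. In particular, if we were handed the support of $C := AB$ over $R$ in advance and only needed to compute the values at those $\OUT$ positions, the bound $\tOh(m^{1+\mu/(1+\mu)+\epsilon})$ would follow verbatim.

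The new obstacle is \emph{cancellation}: many products $A[i,k]\cdot B[k,j]$ may sum to zero at the same $(i,j)$, so the support of $C$ over $R$ can be drastically smaller than the Boolean support $\{(i,j):\exists k\ A[i,k]B[k,j]\neq 0\}$. The Boolean algorithm of \cref{thm:det-sparseMM} may afford to enumerate (a refinement of) the Boolean support directly, but an analogous step over $R$ would spend time on entries that cancel to zero, which is too expensive once this support exceeds $\OUT$.

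To sidestep this, we insert a randomized compression step in the spirit of Pagh~\cite{Pagh13} and Jacob--Stöckel~\cite{JacobS15}. Guess $\OUT$ up to a factor of two by doubling. Using a pair of pairwise-independent hash functions, hash row and column indices into $b = \Theta(\OUT)$ buckets, obtaining $\tilde A \in R^{b\times n}$ and $\tilde B \in R^{n \times b}$ whose product $\tilde C$ has every nonzero of $C$ isolated in its own bucket with constant probability. Since $\tilde A, \tilde B$ still have at most $\IN$ nonzeros each and $\tilde C$ has only $O(\OUT)$ entries, computing $\tilde C$ by invoking the algorithm of \cref{thm:det-sparseMM} (now run over $R$) costs $\tOh(m^{1+\mu/(1+\mu)+\epsilon})$. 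A constant number of further hashed products, each additionally tagging row/column identities via random linear polynomials over $R$ (or Chinese-remaindering for $R = \Int$), let us decode the positions and values of the nonzeros of $C$ in time $\tOh(\OUT)$. Repeating $O(\log n)$ times drives the failure probability to inverse polynomial. Plugging in the current bounds on $\omega$ and $\mu$ over $\Int$ gives the numerical $\Order(m^{1.3459})$.

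The main obstacle is to verify that the algorithm of \cref{thm:det-sparseMM} is indeed algebraic, i.e.\ that no subroutine secretly relies on Boolean-specific shortcuts such as early termination upon seeing a single witness per output cell. Each such step must be replaced by an additive aggregation in $R$, and the analysis has to confirm that this aggregation still fits within the existing budget of rectangular multiplications and pair enumerations. Once this is in place, the randomized hashing layer sits cleanly on top: it only alters the input/output sparsity pattern and leaves the algebraic core --- and hence the running-time analysis --- intact.
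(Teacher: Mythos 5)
Your two-layer architecture (a randomized output-compression layer on top of a ring-agnostic heavy/light input-sparse core) matches the paper's, and your observation that the heavy/light algorithm of \cref{lem:input-sparse} is purely algebraic and carries over to any ring is exactly right --- that part needs no change. The gap is in the compression layer. Your decoding step --- ``tagging row/column identities via random linear polynomials over $R$'' and recovering $j$ from $C[i,j]$ and $j\cdot C[i,j]$ --- is precisely the sparse-recovery setup of~\cite{Pagh13,JacobS15}, and the paper explicitly rejects it for general rings, for concrete reasons: in the stated machine model you only have oracle access to ring arithmetic, so you cannot sample random ring elements at all; a general ring has no division, so the quotient $(j\cdot C[i,j])/C[i,j]$ is undefined; and even the integer multiple $j\cdot C[i,j]$ can vanish or collide for nonzero $C[i,j]$ (positive characteristic, zero divisors), so $j$ is not recoverable from it. This is why the paper instead first computes a small superset $S\supseteq\supp(AB)$ \emph{recursively} (\cref{lem:densification-rings}), using only subset-sum randomness over row groups (\cref{lem:ring-random-zero} --- a random $0/1$ combination of ring elements, not a random ring element), and then runs a recovery routine (\cref{lem:recover}) that checks isolation combinatorially against $S$ and reads off $C[i,j]$ \emph{without ever needing to identify $j$ from a ring value}. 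Your route works for $R=\Int$ (where it is essentially Jacob--St\"ockel), but it does not prove the theorem as stated for arbitrary $R$.

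Two further technical problems in the compression as written. First, hashing rows into $b=\Theta(\OUT)$ buckets and columns into $b=\Theta(\OUT)$ buckets yields $\tilde C\in R^{b\times b}$ with $\Theta(\OUT^2)$ entries, not $O(\OUT)$; the input-sparse core needs the product of the outer dimensions to be $\tOh(\OUT)$, so the bucket counts must multiply to $\Theta(\OUT)$, and a single uniform choice does not work for all rows --- the paper handles this by bucketing rows into degree classes $2^\ell\le\deg(i)<2^{\ell+1}$ and hashing columns into $\Theta(2^\ell)$ buckets per class, so that $x_\ell z_\ell=O(|S|)$ at every level. Second, ``guess $\OUT$ by doubling'' requires certifying that a guess succeeded; over a general ring, verification again cannot use random ring elements, which is another reason the paper's support-superset recursion (which never needs to know $\OUT$ in advance) is the cleaner route.
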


Since the integer case is only harder than the Boolean case, the bound in Theorem~\ref{thm:integer} is tight unless we break the aforementioned bound of all-edge triangle. In Section~\ref{sec:lower-bounds} we strengthen the equivalence of Theorem~\ref{thm:equivalence-boolean} to show that the integer case is equivalent to a \emph{counting} version of all-edge triangle (Theorem~\ref{thm:equivalence-integer}).

\subsection{The General Bound} \label{sec:introduction:sec:general}
Recall that our motivating question was about identifying the optimal complexity for any choice of $\IN$ vs. $\OUT$, i.e., $\OUT = \IN^r$ for any $r\in [0,2]$.\footnote{The two extreme examples that show that $0 \leq r \leq 2$ are the following. If the only nonzero entries are $A[1,k]=1$ and $B[k,1]=1$ for all $k\in[n]$ then $\IN=n$ while $\OUT=1$. If the only nonzero entries are $A[i,1]=1$ and $B[1,j]$ for all $i,j \in [n]$ then $\IN=n$ but $\OUT=n^2$.} In practice, it would be ideal if there is a single algorithm that is guaranteed to achieve the optimal complexity when run on matrices with $\IN$ nonzeros, no matter what $\OUT$ turns out to be. With a more intricate analysis and under similar assumptions, our new algorithm accomplishes this.

We suggest that the time complexity of sparse matrix multiplication is $\IN^{\sigma(r)}$ when $\OUT=\IN^r$, for the following definition of the exponent $\sigma(r) \in [1,2]$ that \emph{only depends} on the optimal exponent of \emph{dense} rectangular matrix multiplication. (It is not clear that $\sigma(r)$ is well-defined but we prove it in \cref{sec:input-sparse:sec:sparse-exp}.)

\begin{restatable}[Exponent of Sparse Matrix Multiplication]{definition}{defsparseexp} \label{def:sparse-exp}
Let $r \in [0, 2]$. We define $\sigma(r)$ as the unique solution $\sigma$ to the equation $\omega(\sigma - 1, 2 - \sigma, 1 + r - \sigma) = \sigma$.
\end{restatable}

The following are our two main theorems. They generalize \cref{thm:det-sparseMM,thm:integer} from $r=1$ to any $r \in [0,2]$; the first gives a deterministic algorithm for the Boolean case and the second gives a randomized algorithm for integers (or any ring).

\begin{restatable}[Deterministic Sparse Boolean Matrix Multiplication]{theorem}{thmsparsemmboolean} \label{thm:sparse-mm-boolean}
Sparse Boolean matrix multiplication with input size $\IN$ and output size $\OUT = \IN^r$ is in deterministic time \smash{$\Order(\IN^{\sigma(r)+\epsilon})$}, for all $\epsilon > 0, r \in [0,2]$.
\end{restatable}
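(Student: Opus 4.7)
The plan is to establish the theorem by reducing sparse Boolean matrix multiplication to dense rectangular matrix multiplications whose dimensions are tuned to match the defining equation of $\sigma(r)$. Concretely, let $\sigma := \sigma(r)$ and fix the degree threshold $\Delta := \IN^{2-\sigma}$. Call a row of $A$ (resp.\ column of $B$) \emph{heavy} if it has more than $\Delta$ nonzeros and \emph{light} otherwise; by a counting argument there are at most $\IN/\Delta = \IN^{\sigma-1}$ heavy rows of $A$ and symmetrically at most $\IN^{\sigma-1}$ heavy columns of $B$. I would decompose $AB$ into three parts: (a) products involving heavy rows of $A$, (b) products involving heavy columns of $B$ paired with light rows of $A$, and (c) purely light contributions.

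The light-light part (c) I would handle in the style of Yuster--Zwick enumeration: for each nonzero $A[i,k]$ with row $i$ light, iterate over nonzero $B[k,j]$ with column $j$ light, after partitioning the inner index $k$ into logarithmically many degree classes. Each class gets handled with the best of direct enumeration or a small rectangular matrix multiplication, and the threshold choice $\Delta = \IN^{2-\sigma}$ is what makes the worst class's running time coincide with $\IN^{\sigma+\epsilon}$.

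The symmetric heavy parts (a) and (b) are where the crucial reduction happens. I need to compute a rectangular product where one matrix has $n_h \leq \IN^{\sigma-1}$ rows and the output contains at most $\OUT = \IN^r$ nonzeros, so the average number of output nonzeros per row is $\IN^{r-\sigma+1}$. I would (i) compress the shared inner dimension from $n$ to $\Delta = \IN^{2-\sigma}$, which is safe because on the relevant part the row sparsity is bounded by $\Delta$, and (ii) compress the output column dimension from $n$ to $\IN^{1+r-\sigma}$ using a deterministic hashing or covering-design scheme analogous to the one in~\cite{AmossenP09}. After these compressions, each of the $n^{o(1)}$ resulting rectangular multiplications has dimensions exactly $\IN^{\sigma-1} \times \IN^{2-\sigma}$ by $\IN^{2-\sigma} \times \IN^{1+r-\sigma}$ and runs in time $\IN^{\omega(\sigma-1,\, 2-\sigma,\, 1+r-\sigma) + \epsilon} = \IN^{\sigma+\epsilon}$ by Definition~\ref{def:sparse-exp}.

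The step I expect to be the main obstacle is the \emph{deterministic} output compression and decoding. Unlike in the randomized integer setting of Theorem~\ref{thm:integer}, we cannot just invoke a random hash and argue via concentration. I would exploit the Boolean structure (no cancellations, so $(AB)[i,j] = 1$ iff some $k$ is a witness) together with standard deterministic sparse-recovery tools---perfect hash families, $\epsilon$-balanced sets, or combinatorial covering designs of appropriate parameters---to deterministically certify every output nonzero while discarding false positives introduced by collisions. Once the compression is in place, the running time analysis reduces to plugging the dimensions into Definition~\ref{def:sparse-exp}, and one finally checks that $\sigma(r) \in [1,2]$ is well-defined on the whole range $r \in [0,2]$ (as proved in Section~\ref{sec:input-sparse:sec:sparse-exp}) so that all exponents in the rectangular multiplications are nonnegative.
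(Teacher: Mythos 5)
Your high-level plan---balance a $\tOh(\IN^{\sigma})$-time enumeration against a dense rectangular product of shape $\IN^{\sigma-1}\times\IN^{2-\sigma}\times\IN^{1+r-\sigma}$---is the right target, but two steps of the reduction do not work as described. First, the heavy/light split must be taken over the \emph{inner} index $k\in[y]$ (declare $k$ light if column $k$ of $A$ has at most $\IN^{\sigma-1}$ nonzeros, so that there are at most $\IN^{2-\sigma}$ heavy inner indices), not over rows of $A$ and columns of $B$. Your step (i), ``compress the shared inner dimension from $n$ to $\Delta$,'' is not a valid operation: for the heavy rows of $A$ the row sparsity is by definition \emph{not} bounded by $\Delta$, and in any case folding inner indices together changes the product---it creates spurious $2$-paths, i.e.\ false positives $(AB)[i,j]=1$ that cannot be filtered out in the Boolean semiring. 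The inner dimension can only be shrunk by \emph{restricting} to the few heavy inner indices and enumerating the light ones, which is exactly what \cref{lem:input-sparse} does; one then still needs \cref{lem:rectangular-hardest} to argue that among all admissible splits $x'z'=O(\OUT)$ the balanced one $\omega(\sigma-1,2-\sigma,1+r-\sigma)$ dominates.

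Second, the step you flag as ``the main obstacle''---deterministic output compression---is indeed where the substance lies, and gesturing at perfect hash families or covering designs leaves the two essential ideas unaddressed. (a) You must first obtain a small superset $S\supseteq\supp(AB)$ with $|S|=O(\OUT)$; without it you cannot decode an isolated bucket back to its column index $j$, nor control false positives. The paper gets $S$ \emph{recursively}: fold pairs of rows of $A$, multiply the half-size instance, and lift the support back up---this is sound precisely because nonnegative entries cannot cancel (\cref{lem:densification-nonnegative}), and once $S$ is known, isolation suffices for recovery without ever identifying $j$ (\cref{lem:recover}). (b) A single compression width $\IN^{1+r-\sigma}$ cannot isolate anything: $\IN^{r-\sigma+1}$ is only the \emph{average} output degree per row, and a single heavy row of $A$ may own essentially all of $\OUT$. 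One must bucket the output rows by their degree in $S$ and use a different hash range per bucket, and then derandomize the isolation (the paper does this by the method of conditional expectations, \cref{lem:det-hashing}, with segment trees to keep the selection within the time budget). Without (a), (b), and a correct inner split, the claimed reduction to a product of shape $\IN^{\sigma-1}\times\IN^{2-\sigma}\times\IN^{1+r-\sigma}$ does not go through.
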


\cref{fig:sparse-exp-plots} plots our new bound against the previous results highlighted in Table~\ref{tab:related-work} (in the Boolean case) both for (a) the current bounds on $\sigma(r)$ where we beat the state of the art for all $r$, and (b) for $\omega=2$ in which case we get an improvement for all $r>1$ (note that this is the more natural setting). Let us say a few words about how we bound $\sigma(r)$ (all the details are in Section~\ref{sec:input-sparse:sec:sparse-exp}). If~\makebox{$\omega=2$} then it is easy to calculate $\omega(a,b,c)$ for any $a,b,c$ and we get that $\sigma(r) = \max\set{1 + \frac r3, r}$. The situation is more complicated when using the current bounds on $\omega(a,b,c)$~\cite{LeGallU18,DuanWZ23}. First we prove that $\sigma(r)$ is a convex function of $r$. Then we apply either of two methods: one algebraic and the other numeric. In the first one, we evaluate $\sigma(r)$ at certain strategic values of $r$, and then interpolate between these points; the bound we obtain is
\begin{equation} \label{eq:gkangfnas}
	\sigma(r) \leq \max\set*{1 + r \cdot \frac{\mu}{1+\mu}, \frac{(2+\alpha)\mu}{1+\mu} + r \cdot \frac{1-\alpha\mu}{1+\mu}, r},
\end{equation}
and it is depicted by the thick red line in Figure~\ref{fig:sparse-exp-plots}. In the second method, we feed all known bounds on~$\omega(a,b,c)$~\cite{LeGallU18} into a linear program that finds the best bound on $\sigma(r)$; this bound is the dotted red line in \cref{fig:sparse-exp-plots}. In any case, further improvements to our bound within the shaded area are not unconditionally impossible, but would refute the natural hypothesis discussed below.

\begin{figure}[t]
\caption{Plots of the exponent $t(r)$ in the running time $O(\IN^{t(r)+\varepsilon})$, for any $\varepsilon>0$, of sparse Boolean matrix multiplication where $\OUT = \IN^r$. The shaded teal region depicts the relevant area between the trivial $\Omega(\IN + \OUT)$ lower bound and $O(\IN^2)$ upper bound. The black lines depict the  $\Order(\IN \sqrt{\OUT})$ upper bound~\cite{GuchtWWZ15,Roche18}. The blue lines depict the upper bound of \cite{AmossenP09} as analyzed by~\cite{DeepHK20}. The red lines depict our bounds on $\sigma(r)$; the thick lines correspond to~(\ref{eq:gkangfnas}) and the dotted line was obtained via numerical computations. All lines except the blue hold for randomized integer matrix multiplication as well.} \label{fig:sparse-exp-plots}%
\newcommand\plotsparseexp[3]{%
    \begin{tikzpicture}[%
        x=2.9cm, y=4.4cm,
        >=latex,
        every node/.append style={inner sep=0pt},
        fill between/on layer={background},
        baseline={(0, 1)},
    ]
        \def\offsetlen{.8cm}
        \def\offsetsep{.15cm}
        \def\overshoot{.6cm}

        \foreach\x/\l in {#2}{
            \draw[dashed, black!20] (\x, 1) ++(0, -\offsetlen) -- (\x, 2) -- ++(0, \overshoot);
            \draw (\x, 1) ++(0, -\offsetlen) -- ++(0, -.1cm) node[below=.1cm] {\small $\l$};
        }
        \foreach\y/\l in {#3}{
            \draw[dashed, black!20] (0, \y) -- (2, \y) -- ++(\overshoot, 0);
            \draw (0, \y) -- ++(-.1cm, 0) node[left=.1cm] {\small $\l$};
        }

        \def\OMEGA{2.3727}
        \def\RHO{0.5286}
        \def\ALPHA{0.3139}
        #1

        \draw[->] (0, 1) ++(0, -\offsetlen) -- ++(2, 0) -- ++(\overshoot, 0) node[below=.2cm] {$r$};
        \draw (0, 1) -- ++(0, -.5 * \offsetlen + 0.5 * \offsetsep);
        \draw (0, 1) ++(0, -\offsetlen) -- ++(0, .5 * \offsetlen - 0.5 * \offsetsep);
        \draw (0, 1) ++(-.15cm, -.5 * \offsetlen + .1cm + 0.5 * \offsetsep) -- ++(.3cm, -.2cm);
        \draw (0, 1) ++(-.15cm, -.5 * \offsetlen + .1cm - 0.5 * \offsetsep) -- ++(.3cm, -.2cm);
        \draw[->] (0, 1) -- (0, 2) -- ++(0, \overshoot) node[left=.2cm] {$t(r)$};

        \draw (0, 1) ++(0, -\offsetlen) -- ++(0, -.1cm) node[below=.1cm] {$0$};
    \end{tikzpicture}
}%
\vspace{-.25cm}
\rule{\textwidth}{\arrayrulewidth}
\vspace{-.1cm}

\begin{minipage}{\textwidth}
\begin{subfigure}[t]{.5\textwidth}
\centering
\plotsparseexp{%
    \draw[domain=0:2, variable=\c, teal, name path=UB] (0, 1) -- (0, 2) -- (2, 2);
    \draw[domain=0:2, variable=\c, teal, name path=LB] plot ({\c}, {max(1, \c)});
    \tikzfillbetween[of=UB and LB]{teal, opacity=0.12, pattern=north west lines, pattern color=teal};
    \draw[domain=0:2, variable=\c, semithick, black] plot ({\c}, {1+\c/2});
    \draw[domain=0:2, variable=\c, semithick, blue] plot({\c}, {max(1+\c*(2*\OMEGA/(\OMEGA+1)-1), ((2-\ALPHA)*\OMEGA-2)/((1+\OMEGA)*(1-\ALPHA))+\c*(2-\ALPHA*\OMEGA)/((1+\OMEGA)*(1-\ALPHA)), 2/3+\c*2/3)});
    \draw[domain=0:2, variable=\c, thick, red] plot({\c}, {max(1+\c*\RHO/(1+\RHO), (2+\ALPHA)*\RHO/(1+\RHO)+\c*(1-\ALPHA*\RHO)/(1+\RHO), \c)});
    \draw[domain=1:1.7, variable=\c, thick, red, dashed, dash pattern=on 1pt off 1pt] plot file {figures/sparse-exp-plots.dat};
}{1/1, 1.7610/1{+}\frac{1}{1+\alpha}, 2/2}{1/1, 1.3458/1+\frac{\mu}{1+\mu}, 1.7610/1+\frac{1}{1+\alpha}, 2/2}
\vspace{-1.5cm}
\caption{The plots for the current bounds on (rectangular) matrix multiplication~\cite{LeGallU18,DuanWZ23}.}
\end{subfigure}
\begin{subfigure}[t]{.5\textwidth}
\centering
\plotsparseexp{%
    \draw[domain=0:2, variable=\c, teal, name path=UB] (0, 1) -- (0, 2) -- (2, 2);
    \draw[domain=0:2, variable=\c, teal, name path=LB] plot ({\c}, {max(1, \c)});
    \tikzfillbetween[of=UB and LB]{teal, opacity=0.12, pattern=north west lines, pattern color=teal};

    \draw[domain=0:2, variable=\c, semithick, black] plot ({\c}, {1+\c/2});
    \draw[domain=0:2, variable=\c, thick, red] plot({\c}, {1+max(\c/3, \c-1)});
    \draw[domain=0:1, variable=\c, thick, blue] plot({\c}, {max(2/3+\c*2/3, 1 + \c/3)});
    \draw[domain=1:2, variable=\c, semithick, blue] plot({\c}, {max(2/3+\c*2/3, 1 + \c/3)});
    \begin{scope}
        \clip[overlay]
          ($(0, 1)!-2pt!(1.5, 1.5)$) coordinate (tmpA)
          -- ($(1.5, 1.5)!-2pt!(0, 1)$) coordinate (tmpB)
          -- ($(tmpB)!2pt!90:(tmpA)$)
          -- ($(tmpA)!2pt!-90:(tmpB)$)
          -- cycle
        ;
        \draw[red, thick] (0, 1) -- (1.5, 1.5);
    \end{scope}
}{1/1, 1.5/1.5, 2/2}{1/1, 1.5/1.5\vphantom{\frac{1}{1+\alpha}}, 2/2}
\vspace{-1.5cm}
\caption{The plots if $\omega = 2$.}
\end{subfigure}
\end{minipage}

\vspace{-.5cm}
\rule{\textwidth}{\arrayrulewidth}
\end{figure}

The next theorem proves that our same bound holds for randomized algorithms in the integer case despite the cancellations; note that in this case the blue line in Figure~\ref{fig:sparse-exp-plots} does not apply and our improvements are bigger.

\begin{restatable}[Sparse Matrix Multiplication]{theorem}{thmsparsemminteger} \label{thm:integer-bivariate}
Let $R$ be a ring. Sparse matrix multiplication over~$R$ with input size $\IN$ and output size $\OUT = \IN^r$ is in randomized time \smash{$\Order(\IN^{\sigma(r)+\epsilon})$}, for all~\makebox{$\epsilon > 0, r \in [0,2]$}.
\end{restatable}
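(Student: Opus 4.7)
The plan is to reuse the algorithmic skeleton from the deterministic Boolean result (\cref{thm:sparse-mm-boolean}) and to replace the Boolean-specific steps with randomized gadgets that work over an arbitrary ring~$R$. The skeleton reduces sparse matrix multiplication to a collection of dense rectangular matrix multiplications with exponent $\omega(\sigma-1, 2-\sigma, 1+r-\sigma)$, after partitioning~$A$ and~$B$ by row/column densities. Dense rectangular MM is a ring-native primitive with the same exponent $\omega(a,b,c)$ over any ring (see \cref{sec:preliminaries}), so the algebraic core transports immediately to $R$. What must change is how the algorithm copes with entry cancellations, which can shrink the ring-support of $AB$ well below the ``Boolean-support'' (the set of positions $(i,j)$ such that some product $A[i,k]\cdot B[k,j]$ is nonzero) on which the deterministic analysis is pinned.

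To control cancellations, I would wrap the Boolean-case algorithm in a randomized sparse-recovery scheme inspired by Pagh's compressed MM~\cite{Pagh13}. Geometrically guess the output size $\OUT^\star \in \{1, 2, 4, \dots, n^2\}$ (only $\Order(\log n)$ values to try); for each guess, sketch $AB$ into $\Theta(\OUT^\star)$ buckets using a pairwise-independent hash $h \colon [n]\times[n] \to [\Theta(\OUT^\star)]$ together with random ring multipliers, form the sketch by invoking the Boolean-algorithm's structure on suitably modified matrices, decode the sketch into a candidate matrix $\widehat C$ with at most $\Order(\OUT^\star)$ nonzeros, and verify $\widehat C = AB$ via a Freivalds-style random-vector check in time $\Order(\IN)$. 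We accept the smallest $\OUT^\star$ that passes verification. Crucially, the random multipliers preserve the sparsity pattern of~$A$ and~$B$ (only nonzero values change), so the Boolean algorithm's cost analysis applies with input-sparsity~$\IN$ and effective output-sparsity $\Order(\OUT^\star)$, giving $\Order(\IN^{\sigma(r) + \epsilon})$ per guess.

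The main obstacle I anticipate is fitting the sketched computation into the Boolean-algorithm's output-sensitive analysis with the sketch size serving as the output-sparsity parameter, rather than the potentially much larger Boolean-support of $AB$. I would address this by executing the sketching before the density-based partitioning and by bounding bucket collisions via a second-moment argument, so that each bucket receives only polylogarithmically many contributions with high probability. Summing over the $\Order(\log n)$ geometric guesses and absorbing polylogarithmic factors into the $\IN^\epsilon$ slack---using convexity and monotonicity of $\sigma(r)$ as established in \cref{sec:input-sparse:sec:sparse-exp}---then yields the advertised randomized runtime $\Order(\IN^{\sigma(r)+\epsilon})$ over any ring~$R$.
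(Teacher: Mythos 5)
Your high-level architecture (randomized compression of the output, dense rectangular MM as the algebraic core, heavy/light split by column density) matches the paper's, but two of your concrete gadgets fail over an arbitrary ring, and these are exactly the obstacles the paper's proof is built to overcome. First, ``random ring multipliers'' are not available: the model only gives oracle access to arithmetic in $R$, so you cannot sample a uniform ring element, and even if you could, a ring may have zero divisors and non-units, so multiplying a nonzero entry by a random element neither preserves nonzeroness with good probability nor can be undone in the decoding step. The paper's substitute is \cref{lem:ring-random-zero}: a random \emph{subset sum} $\sum_{i\in I}a_i$ of ring elements, at least one nonzero, is nonzero with probability $\ge \frac12$ --- only coefficients in $\{0,1\}$ are ever used. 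Second, your Pagh-style decode requires recovering the \emph{position} $(i,j)$ from a bucket (over $\Int$ one divides $j\cdot C[i,j]$ by $C[i,j]$; over $R$ there is no division). The paper sidesteps decoding entirely: it first computes a small superset $S\supseteq\supp(AB)$ by \emph{recursion} (merging blocks of $w=2^{\sqrt{\log \IN}}$ rows with random $\{0,1\}$ signs, \cref{lem:densification-rings}), and then the hashing step (\cref{lem:recover}) only needs to check, combinatorially, which pairs of $S$ are \emph{isolated} under the hash and read their values off the compressed product --- no identification of $j$ is ever needed. This recursive support approximation is the missing idea in your write-up, and it is also what costs the $2^{\widetilde\Order(\sqrt{\log \IN})}$ overhead absorbed into $\IN^\epsilon$.

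A further unsubstantiated step is your claim that the sketch computation ``fits into the Boolean algorithm's output-sensitive analysis.'' For the cost to reduce to dense rectangular MM with the right exponent $\omega(\sigma-1,2-\sigma,1+r-\sigma)$, the hash must act on the \emph{column index only}, producing a thinner matrix $B'\in R^{y\times z_\ell}$ so that $A_\ell B'$ is a genuine $x_\ell\times y\times z_\ell$ product with $x_\ell z_\ell = \Order(|S|)$ after stratifying rows by their degree in $S$; a pairwise-independent hash on pairs $[n]\times[n]\to[\Theta(\OUT^\star)]$ destroys this product structure (Pagh recovers it only via the special form $h(i,j)=h_1(i)+h_2(j)$ and FFT, which gives a different, weaker running time). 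Your Freivalds-style verification and geometric guessing of $\OUT^\star$ are salvageable (the check can be done with random $\{0,1\}$ vectors via \cref{lem:ring-random-zero}), but they are also unnecessary once the support superset is computed recursively, since the paper's algorithm never needs to know $\OUT$ in advance.
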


In \cref{sec:introduction:sec:apps} we mention one application of \cref{thm:integer-bivariate} that does not follow from the results on the fully sparse setting ($r=1$), namely to error correction. Interestingly, based on this connection, we discuss that derandomizing our result (for integers) is at least as hard as derandomizing Freivalds' classical algorithm \cite{Freivalds79}.

Finally, we will isolate a concrete version of the all-edge triangle problem that must be cracked before our sparse matrix multiplication bound can be improved \emph{even for a single $r$}. We call the following unbalanced version $\PSAETriangle$ (``partially sparse'' $\AETriangle$) because only one edge-part is sparse while the other two are assumed to be dense (since the restriction on the number of edges only applies to one edge-part).

\begin{restatable}[$\PSAETriangle$]{definition}{defpsaetriangle} \label{def:ps-ae-triangle}
The $\PSAETriangle(x, y, z, m)$ problem is to decide, in a given tripartite graph $G = (X, Y, Z, E)$ with $|X| \leq x,\, |Y| \leq y,\, |Z| \leq z$ and $|E \cap (Y \times Z)| \leq m$, for each edge $(i, j) \in (X \times Z) \cap E$ whether it is part of a triangle in $G$.
\end{restatable}

Recall from the discussion above \cref{def:AET} that the fundamental underlying hypothesis is that the only two algorithms for triangle detection are either two-path enumeration or fast matrix multiplication. For any choice of parameters, it is easy to compute the bound achieved by combining these two algorithms, giving rise to the following hypothesis. 

\begin{restatable}[$\PSAETriangle$]{hypothesis}{hypoaet} \label{hypo:AET}
For all $a, b, c \geq 0$, the $\PSAETriangle(m^a, m^b, m^c, m)$ problem cannot be solved in time $\Order(m^{\min\set{1+a,\, \omega(a, b, c)}-\epsilon})$, for any $\epsilon > 0$.
\end{restatable}

The exponent in \cref{def:sparse-exp} is \emph{the} sparse matrix multiplication exponent unless \cref{hypo:AET} fails. 

\begin{restatable}[Hardness under \PSAETriangle]{theorem}{thmlowerboundpsaetriangle} \label{thm:lower-bound-ps-ae-triangle}
Let $r \in [0, 2]$. For any $\epsilon > 0$, sparse Boolean matrix multiplication with input size $\IN$ and output size $\OUT = \IN^r$ cannot be solved in time \smash{$\Order(\IN^{\sigma(r) - \epsilon})$}, unless the $\PSAETriangle$ hypothesis fails.
\end{restatable}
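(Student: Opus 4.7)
I would prove the theorem by reducing a carefully chosen instance of $\PSAETriangle$ to sparse Boolean matrix multiplication. Given the target $r \in [0, 2]$, set the exponents via $a := \sigma(r) - 1$, $b := 2 - \sigma(r)$, $c := 1 + r - \sigma(r)$. By \cref{def:sparse-exp} we have $\omega(a, b, c) = \sigma(r)$ by construction, and a direct computation also gives $1 + a = \sigma(r)$; so \cref{hypo:AET} asserts that $\PSAETriangle(m^a, m^b, m^c, m)$ cannot be solved in time $\Order(m^{\sigma(r) - \delta})$ for any $\delta > 0$. The further identities $a + b = 1$ and $a + c = r$ are precisely what will align the $\PSAETriangle$ dimensions with the input-sparse regime $\OUT = \IN^r$.

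Given an instance $(X, Y, Z, E)$ of $\PSAETriangle(m^a, m^b, m^c, m)$, I would form the adjacency matrices $A \in \{0,1\}^{X \times Y}$ and $B \in \{0,1\}^{Y \times Z}$. Since $|X| \cdot |Y| = m^{a+b} = m$, the matrix $A$ has at most $m$ entries and hence at most $m$ nonzeros, and $B$ has at most $m$ nonzeros by the problem's assumption; so $\IN = O(m)$. Moreover, $AB \in \{0,1\}^{X \times Z}$ has at most $|X| \cdot |Z| = m^{a+c} = m^r = \IN^r$ nonzeros, so the promise $\OUT \le \IN^r$ is satisfied. I would then invoke the hypothetical sparse Boolean matrix multiplication algorithm to compute $AB$ in time $\Order(m^{\sigma(r) - \epsilon})$, and for each edge $(i, j) \in (X \times Z) \cap E$ report it as being part of a triangle iff $(AB)[i, j] = 1$; correctness is immediate from the definition of the Boolean matrix product.

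The total runtime is dominated by $\Order(m^{\sigma(r) - \epsilon}) + O(|E|)$, where $|E| \le m^{a+b} + m + m^{a+c} = O(m^r + m)$. Using the standard lower bound $\omega(a, b, c) \ge \max(a+b, a+c, b+c)$ one verifies $\sigma(r) \ge \max(1, r)$, so whenever $\sigma(r) > r$ strictly, the overall runtime is $\Order(m^{\sigma(r) - \epsilon'})$ for some $\epsilon' > 0$, contradicting \cref{hypo:AET}. The boundary case $\sigma(r) = r$ is already unconditional, since then $\OUT = \IN^r = \IN^{\sigma(r)}$ forces an $\Omega(\IN^{\sigma(r)})$ lower bound just to write the output.

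The main technical item I anticipate needing to check is that the chosen exponents $a, b, c$ are nonnegative, i.e., $\sigma(r) \in [1, 2]$ and $\sigma(r) \le 1 + r$; both facts should be immediate from the structural properties of $\sigma(r)$ established in \cref{sec:input-sparse:sec:sparse-exp}, so beyond this housekeeping I do not foresee a serious obstacle.
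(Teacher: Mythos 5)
Your proposal is correct and is essentially the paper's own proof: the same choice of exponents $a = \sigma(r)-1$, $b = 2-\sigma(r)$, $c = 1+r-\sigma(r)$, the same adjacency-matrix reduction from $\PSAETriangle$ to sparse Boolean matrix multiplication, and the same use of $\omega(a,b,c) = 1+a = \sigma(r)$ to contradict \cref{hypo:AET}. Your explicit treatment of the boundary case $\sigma(r)=r$ (where the $\Omega(\IN^r)$ output-size bound makes the claim unconditional) and of the nonnegativity of $a,b,c$ is slightly more careful than the paper's write-up, which glosses over the cost of the reporting step.
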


Interestingly, we do not even need this assumption for the full range of $r$ to give evidence of optimality of our algorithm. 
Indeed, our algorithm runs in almost-linear time $O(\OUT^{1+\epsilon})$ for arbitrary~\makebox{$\epsilon > 0$}---and thus is \emph{unconditionally almost optimal}---when \smash{$\OUT = \Omega(\IN^{1.762}) = \Omega((\IN)^{1+\frac1{1\alpha}})$}. If $\omega=2$, the condition simplifies to $\OUT \ge \Omega(\IN^{1.5})$.

\subsection{Technical Overview}
To ease the readability of this paper, we give a concise technical overview of our results. Our algorithms draw from a colorful set of techniques, ranging from basic combinatorial heavy/light decompositions to ideas from sparse recovery, derandomization, and algebra. While none of these ideas are particularly new to the context of matrix multiplication, we manage to refine and combine them in a novel way, leading to our results.

To obtain our algorithms for sparse matrix multiplication, we split the task into two major steps. These steps separately deal with the output- and input-sparsity of the problem. For simplicity we focus on the fully sparse setting ($\IN + \OUT \leq m$) in this overview.

\paragraph{Step 1: Output Densification}
In the first step, we design a reduction from fully sparse matrix multiplication to \emph{input-sparse} matrix multiplication of $x \times y \times z$ rectangular matrices. That is, after step 1 we can assume that $x z = \Order(m)$, at the small cost of worsening the running time by a lower-order factor.

The basic idea is to \emph{compress} $A$ to a thinner matrix $A'$ such that (1) the product $A' B$ becomes dense, and (2) we can recover $A B$ from $A' B$. A natural approach to this task, inspired by sparse recovery, is to let $g = \frac{xz}{m}$ and to compress the matrix $A$ by randomly grouping columns into groups of size $g$. Let $A'$ be the $(x/g) \times y$ matrix obtained by adding all columns in a group. Note that~\makebox{$C' = A' B$} is exactly the matrix obtained from $C = A B$ by applying the same compression. Since the hashing was random, we expect that many entries $(i, j)$ are \emph{isolated} in the sense that~\makebox{$C[i, j]$} is the only nonzero entry in its group in the $i$-th row. These isolated entries can be efficiently recovered by some more tricks to identify $j$. For integer matrices, for instance, we can identify $j$ by accessing~\makebox{$C[i, j]$} and $j \cdot C[i, j]$ (both of which can be computed by the previous compression) and taking their quotient.

This main idea---to compress the sparse output matrix to a dense rectangular matrix via sparse recovery techniques---was employed in several previous works~\cite{IwenS09,Pagh13,JacobS15,GuchtWWZ15}. See also the detailed treatment in~\cite{Stockel15}, including implementations of this idea in the external memory model. Unfortunately, this setup is hard to generalize to arbitrary rings and hard to derandomize. To obtain the full strength of our results, we instead suggest the following twist:
\begin{enumerate}
	\setlength\parskip{0pt plus 1pt}
	\item First, assume that we know a small superset $S$ of the support $\supp(C) = \set{(i, j) : C[i, j] \neq 0}$ of size at most $|S| \leq 2m$, say. We apply the same grouping as before, and consider an entry~$(i, j) \in S$ as \emph{isolated} if there is no other entry $(i, j') \in S$ with $j$ and $j'$ belonging to the same group. The nice insight is that we can check for all pairs $(i, j) \in S$ whether they are isolated, and for isolated pairs recover the corresponding entries $C[i, j]$ \emph{without the need to identify $j$.} (For the details see \cref{lem:recover}.)
	
	With some additional effort this step can be derandomized: The key idea is that we can choose the groups \emph{one by one} deterministically, following the method of conditional expectations. We remark that this derandomization is necessary to obtain our deterministic algorithm for Boolean matrix multiplication (\cref{thm:det-sparseMM}).

	\item Next, we remove the assumption that $S$ is known in advance. For simplicity in this overview, we assume that the matrices $A$ and $B$ contain nonnegative integers. Then we can compute~$S$ by \emph{recursively} calling our sparse matrix multiplication algorithm: Let $A'$ be the $(x/2) \times y$ matrix obtained from $A$ by merging and adding up pairs of adjacent rows. We compute the matrix product $C' = A' B$ recursively. Then, we select $S$ to be all positions $(i, j)$ that under the pairing could possibly lead to nonzero entries in $C'$. It is easy to check that $S \supseteq \supp(C)$ and that $|S| \leq 2m$. Moreover, with each recursive call we half $x$ and therefore the recursion incurs only a logarithmic factor to the running time. (For the details see \cref{lem:densification-nonnegative}.)
	
	If we allow randomization, this idea indeed generalizes to integers with polylogarithmic overhead (\cref{lem:densification-integer}) and to arbitrary rings with subpolynomial \smash{$2^{\widetilde\Order(\sqrt{\log m})}$} overhead (\cref{lem:densification-rings}).
\end{enumerate}

\paragraph{Step 2: Input-Sparse Matrix Multiplication}
Recall that after step 1, we can assume that~\makebox{$x z = \Order(m)$}. It thus remains to solve an instance of input-sparse rectangular $x \times y \times z$ matrix multiplication (\cref{lem:input-sparse}). In previous works, Yuster and Zwick studied the complexity of input-sparse $x \times x \times x$ matrix multiplication~\cite{YusterZ05}, and Kaplan, Sharir and Verbin studied input-sparse~\makebox{$x \times y \times x$} matrix multiplication~\cite{KaplanSV06}. We emphasize that both of these settings do not suffice in our context, as for many cases we indeed have three distinct parameters~$x, y, z$ (this happens even in the fully sparse case, due to step~1). Nevertheless, we reuse the same simple algorithmic idea behind their algorithms, namely a heavy/light decomposition. The difficulty here does not lie in the algorithm, but in the analysis.

We may assume without loss of generality that $x \leq z$. Let $\Delta$ be a parameter to be determined later. We say that a column $k$ in $A$ is \emph{light} if it contains at most $\Delta$ nonzero entries, and \emph{heavy} otherwise. We split the matrix $A$ into two submatrices $A_{\text{light}}$ and $A_{\text{heavy}}$ consisting of the light and heavy columns, respectively. We similarly split $B$ into $B_{\text{light}}$ and $B_{\text{heavy}}$ where the former matrix consists exactly of all light rows $k$. To compute the product $AB$, it suffices to compute the products~\makebox{$A_{\text{light}} B_{\text{light}}$} and $A_{\text{heavy}} B_{\text{heavy}}$ separately.

We compute the light product $A_{\text{light}} B_{\text{light}}$ in time $\Order(m \Delta)$ by enumerating, for each nonzero entry $B[k, j]$, the at most $\Delta$ relevant entries from $A$. The heavy product $A_{\text{heavy}} B_{\text{heavy}}$, on the other hand, we compute by fast rectangular matrix multiplication. The running time of this algorithm can be bounded by
\begin{equation*}
	\Order\parens*{m \Delta + \max_{\Delta \leq x \leq m/x} \MM(x, \tfrac m\Delta, \tfrac mx)},
\end{equation*}
crucially using that after step 1 we have $x z = \Order(m)$. In~\cite{YusterZ05,KaplanSV06} the authors use coarse bounds on the complexity of rectangular matrix multiplication~\cite{HuangP98} to bound this expression. Our goal is an optimal answer though, so we cannot afford to be lossy in the analysis.

So what is the complexity of \smash{$\max_{\Delta \leq x \leq m/x} \MM(x, \frac m\Delta, \frac mx)$}? The maximum could be attained at the extreme cases $\MM(\Delta, \frac{m}{\Delta}, \frac{m}{\Delta})$ or $\MM(\sqrt m, \frac{m}{\Delta}, \sqrt m)$, or anywhere in between. Unfortunately, these extreme cases seem incomparable and not reducible to each other when we treat dense matrix multiplication as a black-box algorithm (to reduce between the extreme cases, we would have to increase one dimension, but decrease another).

As it does not seem possible to us to give a combinatorial answer to this question, we peek behind the curtain of algebraic matrix multiplication. It turns out that, using light insights on the matrix multiplication tensor (\cref{fact:omega-subadditive}), we can indeed prove that the first extreme case $\MM(\Delta, \frac m\Delta, \frac m\Delta)$ is dominating. Therefore, by setting \smash{$\Delta = m^{\frac{\mu}{1+\mu}}$} where $\mu$ is the solution to the equation~\makebox{$\omega(\mu, 1, 1) = 1 + 2\mu$}, the running time becomes
\begin{equation*}
	\Order(m \cdot m^{\frac{\mu}{1+\mu}} + \MM(m^{\frac{\mu}{1+\mu}}, m^{\frac{1}{1+\mu}}, m^{\frac{1}{1+\mu}})) = \Order(m^{1 + \frac{\mu}{1+\mu}} + m^{\frac{\omega(\mu, 1, 1)}{1+\mu}}) = \Order(m^{1+\frac{\mu}{1+\mu}+\epsilon}),
\end{equation*}
for any $\epsilon > 0$. We remark that this part of the proof works for any ring $R$, too: The light case is purely combinatorial, and in the heavy case we use fast matrix multiplication for that particular ring $R$.

\paragraph{Equivalence with All-Edges Triangle}
Let us also provide some explanation why sparse Boolean matrix multiplication is equivalent to $\AETriangle$ (\cref{thm:equivalence-boolean}). One direction is simple: Boolean matrix multiplication can be viewed as the problem of determining, for each pair of nodes $(i, j) \in X \times Z$ in a tripartite graph~$G = (X, Y, Z, E)$, whether the pair is connected by a 2-path. To solve $\AETriangle$ we thus report all pairs $(i, j)$ that are connected by a 2-path and by an edge (which together form a triangle). For the parameters in \cref{thm:equivalence-boolean} the computed Boolean matrix product is indeed sparse.

The other direction is more interesting; we give some intution: $\AETriangle(\Delta, \frac{m}{\Delta}, \frac{m}{\Delta}, m)$ is essentially equivalent to the hard case in our previously outlined algorithm, when $x = \Delta$. This does \emph{not} render our reduction straightforward though, as some other cases for $\Delta \leq x \leq m/x$ might be equally hard. To make the reduction work, we prove that $x = \Delta$ is indeed the \emph{hardest} case, which requires very strong numeric bounds on some rectangular matrix multiplication exponents (\cref{lem:rho-bounds}), based on the recent work by Le~Gall and Urrutia~\cite{LeGallU18}.

\subsection{Some Applications} \label{sec:introduction:sec:apps}
Our algorithm could be an appealing choice for anyone using matrix multiplication on sparse data. The applications of matrix multiplication are endless, and in each of them one could now claim a better bound if the input and output happen to be sparse. Let us mention only a few concrete examples from TCS.

\paragraph{Join-Project Queries in Relational Databases.}
A natural operation in relational databases are \emph{collapsing join-project queries} of two tables (aka \emph{composition} or \emph{set-intersection joins}), see e.g.~\cite{AmossenP09,GuchtWWZ15,DeepHK20}. Here, we join two tables $R(a,b)$ and $S(b,c)$ on a shared key $b$, followed by projection on $a$ and $c$. As an example, given databases such as DBLP, such queries can be used to determine all pairs of researchers who co-authored a paper together. 

Answering these queries is naturally \emph{equivalent} to sparse Boolean matrix multiplication~\cite{AmossenP09,GuchtWWZ15,DeepHK20}: Here, $\IN$ denotes the size of the two given tables $R$ and $S$, while $\OUT$ denotes the output size of the query. Exploiting this connection, one can answer join-project queries faster than naively evaluating a full join, followed by a projection. Our results conditionally resolve the time complexity of answering these queries.

\paragraph{Transitive Closure in Graphs.}
Consider the problem of computing the transitive closure of a directed graph $G$. Denoting by $m$ the (a priori unknown) number of edges of the transitive closure of~$G$, previous output-sensitive algorithms solve the problem in time~\smash{$\tOh(m^{3/2})$}~\cite{GuchtWWZ15,BorassiCH16}. Using the well-known approach of computing a transitive closure via $O(\log n)$ Boolean matrix products and observing that each of the involved matrices have at most $m$ nonzeros, we obtain the following result as an immediate corollary of Theorem~\ref{thm:det-sparseMM}.

\begin{theorem}[Transitive Closure]
There is a deterministic algorithm computing the transitive closure of a directed graph in time \smash{$\Order(m^{1+\frac{\mu}{1+\mu}+\epsilon}) = \Order(m^{1.3459})$}, for all $\epsilon > 0$, where $m$ is the number of edges in the transitive closure. 
\end{theorem}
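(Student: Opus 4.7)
The plan is a textbook reduction of transitive closure to $O(\log n)$ Boolean matrix squarings, combined with the key structural observation that every intermediate matrix produced during repeated squaring is supported inside the edge set of the transitive closure and hence has at most $m$ nonzeros. This puts us in the fully sparse regime and lets us invoke \cref{thm:det-sparseMM} as a black box for each squaring.

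Concretely, let $A \in \{0,1\}^{n \times n}$ be the adjacency matrix of $G$ and set $B := A + I$; preprocess by discarding any vertex not incident to an edge of the transitive closure so that $n \leq 2m$. Compute the iterated Boolean squares $B^2, B^4, \ldots, B^{2^k}$ for $k := \lceil \log n \rceil$. A nonzero at position $(u,v)$ of $B^{2^i}$ corresponds exactly to a directed walk of length at most $2^i$ from $u$ to $v$, so each $B^{2^i}$ has support contained in (the transitive closure of $G$) $\cup \set{(v,v)}$, which has at most $m + n = O(m)$ entries. The last matrix $B^{2^k}$ is the transitive closure (with self-loops), which may then be returned.

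For each of the $O(\log n) = O(\log m)$ squaring steps, both the input matrix and the output matrix have $O(m)$ nonzeros, so \cref{thm:det-sparseMM} computes the product deterministically in time $\Order(m^{1+\frac{\mu}{1+\mu}+\epsilon})$. Summing over all squarings introduces only a logarithmic factor, which is absorbed by the $\Order$ notation, giving the claimed overall running time $\Order(m^{1+\frac{\mu}{1+\mu}+\epsilon}) \le \Order(m^{1.3459})$.

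There is essentially no obstacle beyond recognizing that the fully sparse guarantee is what enables this reduction: a merely output-sensitive bound, or a bound that depended on $n$ rather than on the nonzero count of both factors, would not suffice, because while the final output and each intermediate are $m$-sparse, the matrices $B^{2^i}$ and $B^{2^i}$ are simultaneously dense in $n$ potentially. It is precisely the symmetry of \cref{thm:det-sparseMM} with respect to the sparsity of both inputs and the output, together with the containment of every $B^{2^i}$ inside the transitive closure, that makes the argument go through cleanly and deterministically.
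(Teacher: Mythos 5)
Your proposal is correct and is exactly the argument the paper has in mind: it presents this theorem as an immediate corollary of \cref{thm:det-sparseMM} via the standard $O(\log n)$ repeated-squaring reduction, using precisely your observation that every intermediate power of $A+I$ has support inside the transitive closure (plus the diagonal) and is therefore $O(m)$-sparse in both input and output. (One cosmetic nit: rather than ``discarding vertices not incident to an edge of the transitive closure,'' which is not known in advance, discard vertices isolated in $G$ --- this is equivalent and achieves the same bound $n = O(m)$.)
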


This running time transfers to recognition of comparability graphs, see~\cite{BorassiCH16}.

\paragraph{Error Correction}
The fact that we obtain (randomized) algorithms even for rings can be used for error correction in matrix products, by a \emph{deliberate} use of cancellations. In this setting, we are given matrices~$A,B$ and a possibly faulty matrix product $\widetilde{C}\approx AB$. E.g., erroneous entries may have been introduced during transmission of the correctly computed result. The task is to compute the errors $AB-\widetilde{C}$ in order to correct~$\widetilde{C}$ to the true matrix product $AB$. Let $m$ denote the number of nonzeros in the input matrices $A,B,\widetilde{C}$ and let $z$ denote the number of nonzeros in $AB-\widetilde{C}$, i.e., the number of errors. If $z\ll m$, can we correct~$\widetilde{C}$ to the true matrix product faster than computing the product from scratch?

This setting has been studied explicitly e.g.\ in~\cite{GasieniecLLPT17,Kunnemann18, Roche18}. Over rings, error correction can be reduced to sparse matrix multiplication as follows: Given \smash{$A,B,\widetilde{C}$} with $m$ nonzeros, we can construct~$A', B'$ with $O(m)$ nonzeros in time $\tOh(m)$ such that $A'B' = AB-\widetilde{C}$, see~\cite[Proposition 3.1]{Kunnemann18}.\footnote{We sketch the argument here: As discussed in \cref{sec:preliminaries}, we can assume that $A,B,\widetilde{C}$ are $m \times m$ matrices. The desired matrices can be obtained as $A' = ( A \mid -I)$ and $B'=\begin{pmatrix} B \\ \widetilde{C} \end{pmatrix}$. Note that $A'B' = AB-\widetilde{C}$ and that $A',B'$ have $O(m)$ nonzeros. \label{note:errorcorrection}} Thus, correcting $z$ errors over rings reduces to sparse matrix multiplication with~\makebox{$\IN = m$} and $\OUT = z$. Hence, from \cref{thm:integer-bivariate}, we obtain the following corollary.

\begin{theorem}[Sparse Matrix Product Correction]
Consider matrices $A,B,\widetilde{C}$ with $m$ nonzeros, over some ring $R$. If $\widetilde{C}$ differs from $AB$ in at most $O(m^r)$ entries, where $r \in [0, 2]$, then we can compute $AB$ in randomized time $\Order(m^{\sigma(r)+\epsilon})$, for all $\epsilon > 0$. 
\end{theorem}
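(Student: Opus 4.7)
The plan is to invoke the reduction from error correction to sparse matrix multiplication over rings, sketched in the footnote just before the theorem, and then apply \cref{thm:integer-bivariate} as a black box. Concretely, by the normalization in \cref{sec:preliminaries} we may assume $A, B, \widetilde C$ are $m \times m$ matrices with at most $m$ nonzeros each. Define
\begin{equation*}
  A' = (\, A \ \mid\ -I \,)
  \qquad\text{and}\qquad
  B' = \begin{pmatrix} B \\ \widetilde C \end{pmatrix},
\end{equation*}
which is an $m \times 2m$ by $2m \times m$ product. Then $A' B' = A B - \widetilde C$ by block multiplication, and both $A'$ and $B'$ can be written down in time $\Order(m)$ and together contain $\IN = O(m)$ nonzero entries.

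Next, by the hypothesis that $\widetilde C$ differs from $A B$ in at most $O(m^r)$ entries, the output matrix $A' B' = A B - \widetilde C$ has $\OUT = O(m^r) = O(\IN^r)$ nonzeros. We are therefore in the exact regime handled by \cref{thm:integer-bivariate}: sparse matrix multiplication over the ring $R$ with input size $\IN$ and output size $\OUT = \IN^r$, for the same exponent $r \in [0,2]$ as in the statement. Applying that theorem computes $A' B' = A B - \widetilde C$, represented as a list of nonzero entries, in randomized time $\Order(m^{\sigma(r) + \epsilon})$ for any $\epsilon > 0$.

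Finally, to produce $A B$ itself, we add the sparse matrix $A B - \widetilde C$ entrywise to $\widetilde C$; this touches $O(m + m^r)$ entries, which is absorbed into the $\Order(m^{\sigma(r) + \epsilon})$ bound since $\sigma(r) \ge \max\{1, r\}$ by \cref{def:sparse-exp} (and as depicted in \cref{fig:sparse-exp-plots}). The only genuine content in the argument is the reduction identity $A' B' = A B - \widetilde C$, which is immediate from block multiplication; there is no real obstacle, and no further analysis of $\sigma(r)$ is required beyond invoking \cref{thm:integer-bivariate}. In particular, the proof works uniformly over any ring $R$, matching the generality of the sparse matrix multiplication theorem it is built on.
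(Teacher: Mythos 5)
Your proposal is correct and follows essentially the same route as the paper: the paper likewise constructs $A' = (A \mid -I)$ and $B' = \binom{B}{\widetilde C}$ (citing Künnemann's Proposition~3.1 and sketching it in the footnote), observes that this yields a sparse matrix multiplication instance with $\IN = O(m)$ and $\OUT = O(m^r)$, and invokes \cref{thm:integer-bivariate} as a black box. Your additional remarks about recovering $AB$ from $AB - \widetilde C$ and absorbing the $O(m + m^r)$ cost via $\sigma(r) \ge \max\{1, r\}$ are correct and merely make explicit what the paper leaves implicit.
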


Roughly speaking, whenever we have fewer errors in $\widetilde{C}$ than nonzeros in $A,B,AB$, we can beat computing~$AB$ from scratch, even for sparse matrices. In particular, if $\omega=2$, we obtain a randomized algorithm correcting up to $z\ge 1$ errors in time $(z + mz^{1/3})^{1+\epsilon}$ for all $\epsilon > 0$.

The above reduction from error correction to sparse matrix multiplication also shows why derandomizing \cref{thm:integer-bivariate} is challenging, as it would derandomize Freivalds' algorithm \emph{even for sparse matrices}: If we could obtain the same running time \emph{deterministically}, we could in particular check, given $A,B,\widetilde{C}$ with $m$ nonzeros, whether $\widetilde{C}=AB$ in almost-linear time $m^{1+\epsilon}$ as follows. Simply start to run the deterministic algorithm on $A',B'$ as constructed above (see \cref{note:errorcorrection}). If $A'B'=0$, i.e., $\widetilde{C}=AB$, it returns an answer within some running time bound $T(m)=m^{1+\epsilon}$ guaranteed for $z=0$ errors. If the number of steps of the algorithm ever exceeds $T(m)$ or the result matrix is different from $0$, we reject, otherwise we accept. This verifies $\widetilde{C}$ in deterministic time $m^{1+\epsilon}$, for all $\epsilon > 0$.

Note that Freivalds' algorithm~\cite{Freivalds79} gives a randomized $\tOh(m)$-time solution, however, a derandomization is still open even for dense matrices, see, e.g.,~\cite{KimbrelS93,NaorN93,Kunnemann18}.

\subsection{Outline}
The rest of this paper is structured as follows. We give some formal preliminaries and some facts on fast matrix multiplication in \cref{sec:preliminaries}. In \cref{sec:densification,sec:input-sparse} we respectively prove the two steps of our algorithm---the output densification and the input-sparse algorithm. In \cref{sec:lower-bounds} we establish lower bounds and equivalences with the All-Edges Triangle problem. In \cref{sec:conclusions} we then conclude with some open problems. Finally, in \cref{sec:yuster-zwick} we provide an interesting simple lower bound for Yuster and Zwick's input-sparse matrix multiplication algorithm.
\section{Preliminaries} \label{sec:preliminaries}
We denote the integers by $\Int$, the nonnegative integers by $\Nat$ and we write $[n] = \set{1, \dots, n}$. We write \smash{$\widetilde\Order(T) = T (\log T)^{\Order(1)}$} to suppress polylogarithmic factors. We say that an event happens \emph{with high probability} if it happens with probability at least $1 - n^{-c}$, where $n$ is the input size of the problem and $c$ is an arbitrarily large constant. Throughout, by \emph{randomized algorithm} we mean a Monte Carlo randomized algorithm that succeeds with high probability.

\paragraph{Dense Matrix Multiplication}
Let $R$ be a ring. We write $\MM_R(x, y, z)$ to denote the minimum number of arithmetic operations necessary to multiply an $x \times y$ matrix with an $y \times z$ matrix over~$R$ (by an arithmetic circuit or an equivalent model~\cite{Blaser13}). The matrix multiplication exponent~$\omega_R(a, b, c)$ is defined as
\begin{equation*}
    \omega_R(a, b, c) = \inf\set{\tau \in \Real : \MM_R(\ceil{n^a}, \ceil{n^b}, \ceil{n^c}) = \Order(n^\tau)}.
\end{equation*}
Most of the time the ring $R$ is clear and we omit the subscript $R$.\footnote{While in principle, $\omega_R(a, b, c)$ can depend on the underlying ring, there is no indication that  two rings actually have different complexities. For fields $F$, it is known that $\omega_F$ can only depend on the characteristic of the field~$F$~\cite{Schonhage81}.} We write $\omega = \omega(1, 1, 1)$ for the exponent of square matrix multiplication; for fields of characteristic $0$ the current state-of-the-art bounds are~\makebox{$2 \leq \omega \leq 2.3719$}~\cite{DuanWZ23}.

Following standard notation, we define two more constants concerned with the complexity of rectangular matrix multiplication: Let $\alpha = \max\set{\alpha \in \Real : \omega(\alpha, 1, 1) \leq 2}$ and let $\mu$ be the (unique) solution to the equation~\makebox{$\omega(\mu, 1, 1) = 1 + 2\mu$}. The best known bounds on $\alpha$ and $\mu$ are $0.31389 \leq \alpha \leq 1$ and~\smash{$\frac12 \leq \mu \leq 0.5286$}~\cite{LeGall12,LeGallU18} (for fields of characteristic $0$).\footnote{It is not necessarily obvious that $\alpha$ and $\mu$ are well-defined. For $\alpha$, we remark that since $\omega$ is a continuous function (\cref{fact:omega-continuous}) the set $\set{\alpha : \omega(\alpha, 1, 1) \leq 2}$ is closed and thus its maximum exists. For $\mu$, we note that both sides of the equation~\makebox{$\omega(\mu, 1, 1) = 1 + 2\mu$} are continuous functions that must meet in some point due to the trivial bounds on $\omega$ (\cref{fact:omega-trivial-bounds}). Using \cref{fact:omega-subadditive}, one can show that the solution is unique.} We routinely rely on the following well-known facts; for proofs, refer for instance to the survey by Bläser~\cite{Blaser13} or classic work such as~\cite{LottiR83}.

\begin{fact} \label{fact:omega-continuous}
$\omega(\cdot, \cdot, \cdot)$ is a continuous function.
\end{fact}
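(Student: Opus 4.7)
The plan is to establish that $\omega(\cdot, \cdot, \cdot)$ is \emph{Lipschitz} (with constant at most $1$ in each argument), which of course implies continuity. This rests on two standard observations: monotonicity via zero-padding, and a reverse bound via partitioning large rectangles into smaller ones.

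First, I would show \emph{monotonicity}: if $a \leq a'$, $b \leq b'$, $c \leq c'$, then $\omega(a, b, c) \leq \omega(a', b', c')$. This is because any algorithm multiplying $\lceil n^{a'}\rceil \times \lceil n^{b'}\rceil$ by $\lceil n^{b'}\rceil \times \lceil n^{c'}\rceil$ matrices solves the smaller instance by padding the input matrices with zero rows/columns. Hence $\MM(\lceil n^a\rceil, \lceil n^b\rceil, \lceil n^c\rceil) \leq \MM(\lceil n^{a'}\rceil, \lceil n^{b'}\rceil, \lceil n^{c'}\rceil)$, and taking infima over exponents yields monotonicity.

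Next, I would prove the reverse \emph{Lipschitz} bound: for $\varepsilon > 0$,
\begin{equation*}
    \omega(a + \varepsilon, b, c) \leq \omega(a, b, c) + \varepsilon,
\end{equation*}
and likewise with $\varepsilon$ added to $b$ or to $c$. The argument is a tiling argument. For the $a$-coordinate, partition an $\lceil n^{a+\varepsilon}\rceil \times \lceil n^b\rceil$ matrix into $\Order(n^\varepsilon)$ horizontal blocks of size $\lceil n^a\rceil \times \lceil n^b\rceil$, and multiply each block independently by the shared $\lceil n^b\rceil \times \lceil n^c\rceil$ matrix. This yields $\MM(\lceil n^{a+\varepsilon}\rceil, \lceil n^b\rceil, \lceil n^c\rceil) \leq \Order(n^\varepsilon) \cdot \MM(\lceil n^a\rceil, \lceil n^b\rceil, \lceil n^c\rceil)$, and plugging in the definition of $\omega(a,b,c)$ gives the claimed bound. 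The $c$-coordinate is symmetric (vertical blocks of the right factor). For the $b$-coordinate we instead split the shared dimension into $\Order(n^\varepsilon)$ slabs and sum the resulting partial products; the cost accounting is identical.

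Combining monotonicity with the reverse Lipschitz bound gives, for any $\Delta_a, \Delta_b, \Delta_c \in \Real$ of absolute value at most $\delta$,
\begin{equation*}
    |\omega(a + \Delta_a, b + \Delta_b, c + \Delta_c) - \omega(a, b, c)| \leq |\Delta_a| + |\Delta_b| + |\Delta_c| \leq 3\delta
\end{equation*}
(applied one coordinate at a time via the triangle inequality, using monotonicity to compare at a coordinatewise upper envelope). Choosing $\delta = \varepsilon/3$ proves continuity. The only mildly delicate point is handling the ceilings $\lceil n^a\rceil$ cleanly—in particular near $a = 0$ where $\lceil n^a \rceil = 1$—but since the partitioning argument only loses a multiplicative $\Order(n^\varepsilon)$ factor (absorbed into the $\inf$ in the definition of $\omega$), this rounding is harmless, and is the only step worth treating carefully.
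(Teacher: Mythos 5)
Your proof is correct. The paper itself does not prove \cref{fact:omega-continuous}; it defers to the survey of Bl\"aser and to classic work such as Lotti--Romani, so there is no in-paper argument to compare against. What you give is exactly the standard argument those references contain: monotonicity by zero-padding plus the Lipschitz bound $\omega(a+\eps,b,c)\le\omega(a,b,c)+\eps$ by tiling. Note that your Lipschitz step is precisely an instance of the paper's own \cref{fact:omega-subadditive} combined with \cref{fact:omega-trivial-bounds}, namely $\omega(a+\eps,b,c)\le\omega(a,b,c)+\omega(\eps,0,0)\le\omega(a,b,c)+\eps$, so your write-up in effect re-derives the relevant piece of subadditivity from scratch; and in the $b$-coordinate case the cost of summing the $\Order(n^\eps)$ partial products is $\Order(n^{a+c+\eps})$, which is absorbed because $a+c\le\omega(a,b,c)$ --- a detail your ``cost accounting is identical'' glosses over but which does hold. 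The handling of the ceilings and of the two-sided estimate via monotonicity is fine.
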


\begin{fact} \label{fact:omega-trivial-bounds}
For any $a, b, c \geq 0$, we have $\max\set{a + b, a + c, b + c} \leq \omega(a, b, c) \leq a + b + c$.
\end{fact}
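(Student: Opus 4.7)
The plan is to establish the upper and lower bounds separately, both via direct counting arguments, and to conclude by invoking the definition of $\omega(a,b,c)$.

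For the upper bound $\omega(a,b,c) \leq a+b+c$, I would invoke the naive schoolbook algorithm: setting $x = \ceil{n^a}$, $y = \ceil{n^b}$, $z = \ceil{n^c}$, compute each of the $xz$ entries of $AB$ as an inner product of length $y$ using $2y-1$ arithmetic operations, for a total of $\Order(xyz) = \Order(n^{a+b+c})$ operations. Hence $\MM(x,y,z) = \Order(n^{a+b+c})$, and plugging into the definition immediately gives $\omega(a,b,c) \leq a+b+c$.

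For the lower bound, I would prove each of the three inequalities $\omega(a,b,c) \geq a+c$, $\omega(a,b,c) \geq a+b$, and $\omega(a,b,c) \geq b+c$ separately and take the maximum. The output-size bound is the cleanest: any arithmetic circuit computing $AB$ must produce $xz$ syntactically distinct output polynomials, each of which is a nontrivial bilinear form in the input indeterminates and in particular not equal to any single input; hence each output is the result of a distinct arithmetic operation, giving at least $xz$ operations and thus $\omega(a,b,c)\geq a+c$. For the two input-size bounds, I would count wires: every input indeterminate $A_{ik}$ appears with coefficient $B_{kj}$ (treated symbolically) in $C_{ij}$, so the gate holding $A_{ik}$ cannot be discarded by a correct circuit and must have outdegree at least one. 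Since each arithmetic gate has fan-in at most two, the number of arithmetic operations is at least half the number of input entries of $A$, namely $\Omega(xy)$, yielding $\omega(a,b,c)\geq a+b$. The symmetric argument applied to $B$ gives $\omega(a,b,c) \geq b+c$, and combining the three inequalities gives the claimed maximum.

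The only delicate step is the non-degeneracy claim that every input indeterminate genuinely influences some output, since without it an arithmetic circuit could in principle ignore some inputs. This is really the statement that the matrix multiplication tensor is concise in each of its three factor spaces, which can be read off directly from the formula $C_{ij} = \sum_k A_{ik}B_{kj}$: for fixed $i,k$, the monomial $A_{ik}B_{kj}$ appears with coefficient $1$ in $C_{ij}$, so $A_{ik}$ is not annihilated by any correct computation, and symmetrically for entries of $B$. Model conventions (whether one allows unbounded fan-in, whether input and output gates count toward the operation count) affect only multiplicative constants and not the exponents, so the bounds transfer to $\omega(a,b,c)$ without change.
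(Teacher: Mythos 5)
Your proof is correct. The paper does not actually prove this fact---it is stated as a well-known fact with a pointer to Bl\"aser's survey and to Lotti--Romani---but your argument is precisely the standard one those references give: the schoolbook algorithm for the upper bound, and for the lower bound the output-counting argument ($xz$ distinct degree-two output polynomials, none equal to an input, hence $xz$ distinct gates) together with the conciseness/wire-counting argument (every entry of $A$ and of $B$ occurs with coefficient $1$ in some $C_{ij}$, so every input must feed at least one gate, and bounded fan-in then forces $\Omega(xy)$ and $\Omega(yz)$ gates). The passage from these operation-count bounds to bounds on the exponent via the infimum in the definition of $\omega(a,b,c)$ is handled correctly as well.
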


\begin{fact} \label{fact:omega-permute}
For any $a, b, c \geq 0$, we have $\omega(a, b, c) = \omega(a, c, b) = \omega(b, a, c) = \omega(b, c, a) = \omega(c, a, b) = \omega(c, b, a)$.
\end{fact}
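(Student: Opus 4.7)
The plan is to derive all six equalities from two generators of the symmetric group $S_3$: a single transposition and a single cyclic rotation. Since $S_3$ is generated by (for instance) the transposition swapping the first and last coordinates together with a three-cycle, proving those two cases suffices.

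For the transposition $\omega(a,b,c) = \omega(c,b,a)$, I would use a direct algorithmic argument. Given any arithmetic circuit that multiplies an $n^a \times n^b$ matrix $A$ by an $n^b \times n^c$ matrix $B$, observe that $(AB)^{\top} = B^{\top} A^{\top}$. The same circuit, with its $A$-inputs fed the entries of $B^{\top}$ (of shape $n^c \times n^b$) and its $B$-inputs fed the entries of $A^{\top}$ (of shape $n^b \times n^a$), and with its outputs interpreted as the transpose of the product, computes the desired $n^c \times n^b$ by $n^b \times n^a$ product using the same number of arithmetic operations. Hence $\MM(\lceil n^c \rceil, \lceil n^b \rceil, \lceil n^a \rceil) \le \MM(\lceil n^a \rceil, \lceil n^b \rceil, \lceil n^c \rceil)$, and the reverse inequality follows by the identical argument, yielding $\omega(c,b,a) = \omega(a,b,c)$.

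For the cyclic rotation $\omega(a,b,c) = \omega(b,c,a)$, I would pass to the tensor viewpoint of bilinear complexity. The matrix multiplication tensor $\langle n^a, n^b, n^c\rangle$ encodes the trilinear form $T(X,Y,Z) = \mathrm{tr}(XYZ)$ on matrices of shapes $n^a \times n^b$, $n^b \times n^c$, and $n^c \times n^a$; cyclic invariance of the trace, $\mathrm{tr}(XYZ) = \mathrm{tr}(YZX) = \mathrm{tr}(ZXY)$, provides a canonical relabeling isomorphism between $\langle n^a, n^b, n^c\rangle$, $\langle n^b, n^c, n^a\rangle$, and $\langle n^c, n^a, n^b\rangle$. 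Consequently these tensors have equal rank and equal border rank, and by the standard correspondence between tensor rank and arithmetic circuit complexity up to constant factors (see Bl\"aser~\cite{Blaser13}), their $\MM$-complexities coincide up to constants, an equivalence that survives the $\limsup$ defining $\omega$. Composing with the transposition above then generates the remaining permutations in $S_3$.

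The main conceptual step is the cyclic identity: unlike transposition, it admits no direct ``swap the inputs'' algorithmic proof and intrinsically relies on the tensor-symmetric nature of the trace form, i.e., the fact that the matrix multiplication tensor is defined up to cyclic permutation of its three factor spaces. The remaining arguments, and in particular the bilinear-to-arithmetic conversion used in the cyclic case, are entirely classical and would be invoked from the cited references rather than re-proved.
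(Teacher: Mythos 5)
The paper does not actually prove \cref{fact:omega-permute}; it is stated as a well-known fact with a pointer to Bläser's survey and to Lotti--Romani, so there is no in-paper argument to compare against. Your sketch is the standard textbook proof and is essentially correct: the transposition $\omega(a,b,c)=\omega(c,b,a)$ follows from $(AB)^{\top}=B^{\top}A^{\top}$ by relabeling inputs and outputs of the circuit (no extra arithmetic), and the cyclic symmetry is exactly the self-duality of the matrix multiplication tensor $\sum_{i,j,k} e_{ij}\otimes e_{jk}\otimes e_{ki}$ under cyclic permutation of its three factor spaces, which together generate all of $S_3$. Two small points to tighten. First, in the transposition step your bookkeeping is off: the circuit's first input slot expects an $n^a\times n^b$ matrix, so to compute a product of shapes $n^c\times n^b$ and $n^b\times n^a$ you feed the \emph{transpose of the second} new factor into the $A$-slot and the transpose of the first into the $B$-slot; as written you describe feeding an $n^c\times n^b$ matrix into a slot of shape $n^a\times n^b$. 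The intended argument is clear, but the roles are swapped. Second, the passage from equal tensor rank to equal $\MM$-complexity is not literally ``up to constant factors'' for a fixed size: rank counts only the bilinear multiplications, and the total arithmetic cost of the accompanying linear stages can exceed any constant multiple of the rank. What is true, and what you should invoke, is that the \emph{exponent} $\omega(a,b,c)$ defined via rank coincides with the exponent defined via total arithmetic cost (Strassen's recursive argument, as in the cited survey); since you only need equality of exponents, this suffices, but the justification runs through the exponent-level equivalence rather than an instance-level constant-factor bound.
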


\begin{fact} \label{fact:omega-scale}
For any $a, b, c, \lambda \geq 0$, we have $\omega(\lambda a, \lambda b, \lambda c) = \lambda \cdot \omega(a, b, c)$.
\end{fact}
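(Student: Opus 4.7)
The strategy is a straightforward change of variables $m = n^\lambda$ inside the definition of $\omega$. Throughout, I will use monotonicity of $\MM(\cdot,\cdot,\cdot)$ in each coordinate (which is immediate because an $x \times y$ by $y \times z$ matrix multiplication algorithm gives a $x' \times y'$ by $y' \times z'$ algorithm for any $x' \leq x$, $y' \leq y$, $z' \leq z$ by padding with zeros).

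First I would dispense with the degenerate case $\lambda = 0$: since $\MM(1,1,1) = 1$, the definition gives $\omega(0,0,0) = 0$; and by \cref{fact:omega-trivial-bounds}, $\omega(a,b,c) \leq a+b+c$ is finite, so $0 \cdot \omega(a,b,c) = 0$. Assume now $\lambda > 0$. For the inequality $\omega(\lambda a, \lambda b, \lambda c) \leq \lambda \cdot \omega(a,b,c)$, fix any $\tau' > \omega(a,b,c)$ so that $\MM(\lceil m^a\rceil, \lceil m^b\rceil, \lceil m^c\rceil) = O(m^{\tau'})$. Setting $m := \lceil n^\lambda\rceil$, we have $m^a \geq n^{\lambda a}$ (and similarly for $b,c$), so by monotonicity
\begin{equation*}
\MM(\lceil n^{\lambda a}\rceil, \lceil n^{\lambda b}\rceil, \lceil n^{\lambda c}\rceil) \leq \MM(\lceil m^a\rceil, \lceil m^b\rceil, \lceil m^c\rceil) = O(m^{\tau'}) = O(n^{\lambda\tau'}),
\end{equation*}
using $m = O(n^\lambda)$ for $\lambda > 0$. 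Taking infimum over $\tau'$ yields $\omega(\lambda a, \lambda b, \lambda c) \leq \lambda\cdot \omega(a,b,c)$.

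The reverse inequality is symmetric: fix $\tau > \omega(\lambda a, \lambda b, \lambda c)$, so that $\MM(\lceil n^{\lambda a}\rceil, \lceil n^{\lambda b}\rceil, \lceil n^{\lambda c}\rceil) = O(n^\tau)$. Given a target $m$, set $n := \lceil m^{1/\lambda}\rceil$, so that $n^\lambda \geq m$ and hence $\lceil n^{\lambda a}\rceil \geq \lceil m^a\rceil$ (and similarly for $b,c$). Again by monotonicity,
\begin{equation*}
\MM(\lceil m^a\rceil, \lceil m^b\rceil, \lceil m^c\rceil) \leq \MM(\lceil n^{\lambda a}\rceil, \lceil n^{\lambda b}\rceil, \lceil n^{\lambda c}\rceil) = O(n^\tau) = O(m^{\tau/\lambda}),
\end{equation*}
using $n = O(m^{1/\lambda})$. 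Hence $\omega(a,b,c) \leq \tau/\lambda$ for any such $\tau$, which gives $\lambda\cdot \omega(a,b,c) \leq \omega(\lambda a, \lambda b, \lambda c)$ and completes the proof.

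The only real point requiring care is how the ceilings interact with the substitution $m = n^\lambda$; this is what forces the two-sided argument with monotonicity rather than a single clean substitution, but no genuine obstacle arises since $\lceil n^\lambda\rceil = \Theta(n^\lambda)$ for $\lambda > 0$ and monotonicity absorbs the rounding.
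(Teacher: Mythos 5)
Your proof is correct. The paper does not actually prove this fact---it is stated as a well-known property with a pointer to the literature (Bl\"aser's survey and classic work)---and your argument via the substitution $m = \lceil n^\lambda \rceil$ (resp.\ $n = \lceil m^{1/\lambda}\rceil$) combined with monotonicity of $\MM$ under padding is exactly the standard argument one finds there; the handling of the ceilings, the degenerate case $\lambda = 0$, and the implicit use of $\tau, \tau' \geq 0$ when converting $O(m^{\tau'})$ to $O(n^{\lambda\tau'})$ are all in order.
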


\begin{fact} \label{fact:omega-subadditive}
Let $a_i, b_i, c_i \geq 0$. Then $\omega(\sum_i a_i, \sum_i b_i, \sum_i c_i) \leq \sum_i \omega(a_i, b_i, c_i)$.
\end{fact}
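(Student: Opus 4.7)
The plan is to prove the statement for two summands and then obtain the general finite case by induction. The two-summand inequality
\[
\omega(a_1+a_2,\,b_1+b_2,\,c_1+c_2) \le \omega(a_1,b_1,c_1) + \omega(a_2,b_2,c_2)
\]
is the heart of the matter and rests on a block matrix multiplication argument. The extension to $k$ summands is then routine: split $(\sum_{i\le k} a_i,\sum_{i\le k} b_i,\sum_{i\le k} c_i) = (\sum_{i<k} a_i + a_k,\sum_{i<k} b_i + b_k,\sum_{i<k} c_i + c_k)$ and apply the inductive hypothesis together with the two-summand bound.

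For the two-summand case, fix $\epsilon > 0$. By the definition of $\omega(\cdot,\cdot,\cdot)$, for all sufficiently large $n$ there exist algorithms $\mathcal{A}_i$ multiplying an $\lceil n^{a_i}\rceil \times \lceil n^{b_i}\rceil$ matrix by a $\lceil n^{b_i}\rceil \times \lceil n^{c_i}\rceil$ matrix in $O(n^{\omega(a_i,b_i,c_i) + \epsilon/2})$ arithmetic operations, for $i = 1, 2$. To multiply an $\lceil n^{a_1+a_2}\rceil \times \lceil n^{b_1+b_2}\rceil$ matrix by a $\lceil n^{b_1+b_2}\rceil \times \lceil n^{c_1+c_2}\rceil$ matrix, I will pad up to dimensions $\lceil n^{a_1}\rceil \lceil n^{a_2}\rceil$, $\lceil n^{b_1}\rceil \lceil n^{b_2}\rceil$, and $\lceil n^{c_1}\rceil \lceil n^{c_2}\rceil$ (which increase the dimensions only by constant factors), and then view each padded input as a block matrix whose outer dimensions are $\lceil n^{a_1}\rceil \times \lceil n^{b_1}\rceil$ and $\lceil n^{b_1}\rceil \times \lceil n^{c_1}\rceil$, with inner blocks of sizes $\lceil n^{a_2}\rceil \times \lceil n^{b_2}\rceil$ and $\lceil n^{b_2}\rceil \times \lceil n^{c_2}\rceil$. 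Running $\mathcal{A}_1$ over the block entries, where every outer scalar multiplication is implemented by a call to $\mathcal{A}_2$, uses at most $O(n^{\omega(a_1,b_1,c_1)+\epsilon/2}) \cdot O(n^{\omega(a_2,b_2,c_2)+\epsilon/2}) = O(n^{\omega(a_1,b_1,c_1)+\omega(a_2,b_2,c_2)+\epsilon})$ arithmetic operations. Since $\epsilon$ was arbitrary, this yields the desired bound.

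The only delicate point is the substitution of block-valued objects for scalars inside the outer algorithm $\mathcal{A}_1$. This is legitimate because matrix multiplication is bilinear: the optimal arithmetic complexity $\MM(p,q,r)$ is, up to constant factors, captured by the rank of the matrix multiplication tensor $\langle p,q,r\rangle$, and tensor rank is submultiplicative with respect to the standard identity $\langle p_1 p_2,\,q_1 q_2,\,r_1 r_2\rangle = \langle p_1,q_1,r_1\rangle \otimes \langle p_2,q_2,r_2\rangle$. Equivalently, at the circuit level, a bilinear algorithm for matrix multiplication uses each pair of input entries in exactly one scalar multiplication, so replacing every such scalar multiplication by the corresponding inner block product preserves correctness. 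The ceilings and constant-factor padding are bookkeeping that is absorbed into the $n^\epsilon$ slack, so the main---and essentially only---obstacle is setting up this substitution cleanly.
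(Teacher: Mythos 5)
The paper does not prove this fact; it is stated as well known with a pointer to Bl\"aser's survey and to Lotti--Romani. Your proposal reconstructs exactly the standard argument from that literature: reduce to two summands, realize the big product as a block product via the tensor identity $\mmtensor{p_1p_2,q_1q_2,r_1r_2} = \mmtensor{p_1,q_1,r_1} \otimes \mmtensor{p_2,q_2,r_2}$, and use submultiplicativity of rank (equivalently, run the outer bilinear algorithm with matrix blocks in place of scalars). The argument is correct, and you rightly identify the one delicate point: substituting blocks for scalars is only legitimate for \emph{bilinear} (non-commutative) algorithms, so one must invoke the classical fact that the exponent defined by total arithmetic-circuit complexity coincides with the exponent defined by tensor rank; that equivalence (Strassen's elimination of divisions plus homogenization) is itself nontrivial and is precisely what the cited survey supplies. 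Two minor bookkeeping remarks: block \emph{additions} in the outer algorithm cost $O(\lceil n^{a_2}\rceil\lceil n^{b_2}\rceil)$ each rather than $O(1)$, but this is absorbed since $\omega(a_2,b_2,c_2)\ge a_2+b_2$ (\cref{fact:omega-trivial-bounds}); and the definition of $\omega$ as an infimum means the algorithms $\mathcal A_i$ achieving exponent $\omega(a_i,b_i,c_i)+\epsilon/2$ exist for each fixed $\epsilon$, which is how you use them, so the quantifiers are in order.
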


\paragraph{Sparse Matrix Multiplication}
This paper is concerned with the \emph{sparse matrix multiplication} problem. For parameters $x, y, z, \IN, \OUT \geq 0$, the input consists of two sparsely-represented matrices $A \in \Int^{x \times y}$ and~\makebox{$B \in \Int^{y \times z}$} such that the total number of nonzero entries in $A$, $B$ is at most~$\IN$ and the total number of nonzero entries in the product $A B$ is at most $\OUT$ (note that we do \emph{not} assume that $\OUT$ is known). The goal is to compute $A B$. We occasionally write~\makebox{$m = \IN + \OUT$}.

Throughout we assume that $x, y, z \leq \IN$ for the following reason: Whenever, say $x > \IN$, then there must be a row $i$ in $A$ which does not contain any nonzero entries. We can safely ignore this row in the multiplication as also the $i$-th row in $A B$ does not contain any nonzero entries. Therefore, after possibly relabeling the rows and columns in $A$ and $B$, we can assume that $x, y, z \leq \IN$.

\paragraph{Machine Model}
While for dense matrix multiplication algorithms it is most natural to consider an algebraic model of computation, in these models it is unclear how to appropriately handle sparse matrices. Therefore, all of our algorithms assume the word RAM model. For integer matrices, we assume that the word size is~$\Order(\log(x y z) + \log(\Delta))$, where $\Delta$ is an upper bound on the largest entry in the matrices $A$ and~$B$ in absolute value. This means that we can perform basic arithmetic and logical operations on indices and entries in constant time. For matrices over general rings $R$, we assume special memory cells storing the ring elements and the ability to perform arithmetic operations over $R$ by an oracle in constant time. No other information on $R$ is needed.

We remark that we defined $\omega(a, b, c)$ in terms of arithmetic circuits, but their complexity translates to the RAM model. Strictly speaking, however, since $\omega(a, b, c)$ is an \emph{infimum},  we only know that for all $\epsilon > 0$, there is a RAM algorithm to multiply $n^a \times n^b$ by $n^b \times n^c$ matrices in time $\Order(n^{\omega(a, b, c) + \epsilon})$. For convenience, many research papers that apply fast matrix multiplication in algorithm design ignore this inaccuracy, but we decided to be explicit in this paper.
\section{Output Densification} \label{sec:densification}
In this section we prove that multiplying input- and output-sparse matrices reduces to multiplying input-sparse rectangular matrices \emph{with dense output}. We refer to this reduction as a \emph{densification}. Specifically, our goal is to prove the following \cref{lem:densification-rings}:

\begin{restatable}[Randomized Densification for Arbitrary Rings]{lemma}{lemdensificationrandom} \label{lem:densification-rings}
Let $R$ be a ring. Suppose there is an algorithm~$\mathcal A$ for matrix multiplication over $R$ with input size $\IN$ in time $T_{\mathcal A}(x, y, z, \IN)$. Then there is a randomized algorithm for matrix multiplication over $R$ with input size $\IN$ and output size $\OUT$ in time:
\begin{equation*}
    2^{\widetilde\Order(\sqrt{\log\IN})} \cdot \max_{\substack{x' \leq x, z' \leq z\\x' \cdot z' \leq \OUT \cdot 2^{\Order(\sqrt{\log \IN})}}} T_{\mathcal A}(x', y, z', \IN).
\end{equation*}
\end{restatable}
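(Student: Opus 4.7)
The approach is to reduce the sparse-sparse instance to a single invocation of the input-sparse subroutine $\mathcal{A}$ on a rectangular geometry $x' \times y \times z'$ with $x' \cdot z' \leq \OUT \cdot 2^{\Order(\sqrt{\log \IN})}$, so that the compressed output is essentially dense. The compression is the standard random-hashing one: draw uniformly random $h_X : [x] \to [x']$ and $h_Z : [z] \to [z']$, form $A' \in R^{x' \times y}$ and $B' \in R^{y \times z'}$ by summing rows (resp.\ columns) along the fibers of the hashes, and compute $C' = A' B'$ via a call to $\mathcal{A}$; every entry of $C'$ collapses a small cluster of entries of $C = AB$. Equipped with a superset $S \supseteq \mathrm{supp}(C)$ of size $|S| \le \OUT \cdot 2^{\Order(\sqrt{\log \IN})}$, the recovery goes through unchanged over any ring $R$ by \cref{lem:recover}: running the compression $\Order(\log \IN)$ times with independent hash pairs, each position $(i,j) \in S$ becomes \emph{isolated} (no other $S$-position collides with it) in at least one trial with high probability, and $C[i,j]$ is then read off directly from that trial's $C'[h_X(i), h_Z(j)]$ with a single ring addition --- no field-specific decoding is ever used.

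The substantive content of \cref{lem:densification-rings} therefore reduces to constructing the superset $S$ over a general ring, where the obstacle is that summing rows of $A$ may cancel genuine nonzeros of $C$, so the support of a compressed product can fail to cover $\mathrm{supp}(C)$. I would obtain $S$ by a recursive bootstrap: fix a group size $g$, randomly partition the rows of $A$ into groups of size $g$ to form $A^\dagger \in R^{(x/g) \times y}$ by summing each group (which retains at most $\IN$ nonzeros since summation does not increase the input size), recursively apply the same lemma to compute $C^\dagger = A^\dagger B$, and lift $\mathrm{supp}(C^\dagger)$ back to a set of at most $g \cdot |\mathrm{supp}(C^\dagger)|$ candidate positions in $[x] \times [z]$. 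To defeat in-group cancellations, I would run $\Order(\log \IN)$ independent trials and take the union as $S$: a fixed nonzero $C[i,j]$ fails to appear in a trial only if its contribution cancels against the other column-$j$ nonzeros landing in $i$'s group, an event that standard random-partition calculations bound away from $1$, so that $\Order(\log \IN)$ independent trials suffice by a union bound over the $\OUT$ true nonzeros. For integers one can instead multiply each row by an independent uniform scalar from a polynomially large range, killing all cancellation in a single trial and yielding \cref{lem:densification-integer}'s polylogarithmic overhead, but this trick is unavailable over a general ring.

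The main obstacle is tuning the recursion to land at the claimed $2^{\widetilde\Order(\sqrt{\log \IN})}$ overhead. With group size $g = 2^{\sqrt{\log \IN}}$ the recursion has depth $\Order(\sqrt{\log \IN})$ (the row dimension drops from $x$ to $x/g$ until the instance is small enough for the base case), and each level multiplies both the running time and $|S|$ by at most a $\mathrm{polylog}(\IN)$ factor coming from the trials; unrolling gives a total overhead of $\mathrm{polylog}(\IN)^{\Order(\sqrt{\log \IN})} = 2^{\widetilde\Order(\sqrt{\log \IN})}$, matching the claim. The delicate verifications I expect to spend the bulk of the proof on are (i) that the cumulatively inflated $|S|$ stays within the slack $\OUT \cdot 2^{\Order(\sqrt{\log \IN})}$ required by \cref{lem:recover}, (ii) that the recursive calls respect the hypotheses of the lemma with the same $\IN$ (so that the recursion on the shrunken row dimension does not accidentally blow up the other parameters), and (iii) that high-probability success propagates through all $\Order(\sqrt{\log \IN})$ recursion levels via a single union bound --- routine in spirit, but requiring careful accounting to avoid a polynomial blow-up anywhere in the argument.
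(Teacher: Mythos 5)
Your overall architecture matches the paper's: recovery from a superset $S \supseteq \supp(AB)$ via \cref{lem:recover}, and a recursive construction of $S$ with group size $2^{\sqrt{\log\IN}}$, recursion depth $\Order(\sqrt{\log\IN})$, branching factor $\Order(\log\IN)$ from independent trials, and total overhead $(\log\IN)^{\Order(\sqrt{\log\IN})} = 2^{\widetilde\Order(\sqrt{\log\IN})}$; the size and probability bookkeeping you flag in (i)--(iii) works out exactly as you anticipate. (Your two-sided hashing in the first paragraph is not what \cref{lem:recover} actually does---it buckets rows of $A$ by degree and hashes only the columns of $B$---but since you invoke that lemma as a black box this is cosmetic.)

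The genuine gap is the anti-cancellation mechanism. You randomly partition the rows into groups of size $g$ and sum \emph{all} rows of each group, claiming that a nonzero $C[i,j]$ cancels within $i$'s group with probability bounded away from $1$ by ``standard random-partition calculations.'' Over a general ring this is false: the only randomness is in \emph{which} rows share a group, not in the coefficients of the sum, and there are instances where every possible group-sum vanishes. Concretely, over $R = \Int/2\Int$ let every entry of column $j$ of $C$ equal $1$ and let $g$ be even; then each group of size $g$ contributes $g \cdot 1 = 0$ to column $j$ of $C^\dagger$, for \emph{every} partition, so all trials deterministically lose the entire column and the final output is wrong. (The same happens whenever a column is constant equal to a ring element of additive order dividing $g$.) Repetition cannot help because the failure probability is $1$. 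The missing idea is the paper's \cref{lem:ring-random-zero}: keep the groups fixed (consecutive blocks of $w$ rows) and in each trial sum only a \emph{uniformly random subset} $I_\ell \subseteq [w]$ of each group. Over any ring, a random subset-sum of elements at least one of which is nonzero vanishes with probability at most $\frac12$, because conditioning on everything except the inclusion of one fixed nonzero element leaves a fair coin between two sums that cannot both be zero. With whole-group summation replaced by random-subset summation, the rest of your argument---the $\Order(\log\IN)$ trials, the bound $|S| \leq w \cdot \OUT$, and the recursion analysis---goes through as the paper's proof does.
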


We also derive a \emph{deterministic} densification lemma for \emph{nonnegative} matrices (see \cref{lem:densification-nonnegative}). 

Both densification lemmas are proved in two steps: (1) We design the claimed algorithm for sparse matrix multiplication assuming that we know a small superset $S$ of the \emph{support} $\supp(A B) = \set{(i, j) : (A B)[i, j] \neq 0}$. We solve this subtask in \cref{sec:densification:sec:recover} based on hashing and using ideas from sparse recovery. (2) We remove the assumption that $S$ is given by computing $S$ \emph{recursively}. We provide the details of this second step in \cref{sec:densification:sec:support}.

\subsection{Densification via Sparse Recovery} \label{sec:densification:sec:recover}
The goal of this section is to prove the following lemma:

\begin{lemma}[Densification via Sparse Recovery] \label{lem:recover}
Let $R$ be a ring. Suppose there is an algorithm~$\mathcal A$ for matrix multiplication over $R$ with input size $\IN$ in time $T_{\mathcal A}(x, y, z, \IN)$. Then there is an algorithm $\Recover(A, B, S)$ which, given matrices~$A \in R^{x \times y},\, B \in R^{y \times z}$ and a set $S \subseteq [x] \times [z]$ with $\supp(A B) \subseteq S$, computes~$A B$ time
\begin{equation*}
    \widetilde\Order\parens*{\max_{\substack{x' \leq x, z' \leq z\\x' \cdot z' \leq 4|S|}} T_{\mathcal A}(x', y, z', \IN)}.
\end{equation*}
If $\mathcal A$ is deterministic, then $\Recover$ is also deterministic.
\end{lemma}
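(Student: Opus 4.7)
My plan is to follow the sparse-recovery template sketched in the technical overview. First, I would choose dimensions $x' \leq x$, $z' \leq z$ with $2|S| \leq x' z' \leq 4|S|$, together with hash functions $h_1 \colon [x] \to [x']$ and $h_2 \colon [z] \to [z']$, and form compressed matrices $A' \in R^{x' \times y}$ and $B' \in R^{y \times z'}$ defined by $A'[i', k] = \sum_{i : h_1(i) = i'} A[i, k]$ and $B'[k, j'] = \sum_{j : h_2(j) = j'} B[k, j]$. These still have at most $\IN$ nonzeros in total, and bilinearity gives $(A' B')[i', j'] = \sum_{(i,j) : h_1(i) = i',\, h_2(j) = j'} (A B)[i, j]$. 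Invoking $\mathcal{A}$ on $A', B'$ yields $C' = A' B'$ in time $T_{\mathcal{A}}(x', y, z', \IN)$, which fits the stated bound since the chosen $(x', z')$ lies in the feasible region of the maximum.

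Call $(i, j) \in S$ \emph{isolated} if it is the unique element of $S$ mapping to the bucket $(h_1(i), h_2(j))$. Because $\supp(A B) \subseteq S$, for an isolated $(i,j)$ no other nonzero entry of $AB$ contributes to $C'[h_1(i), h_2(j)]$, so $(A B)[i, j] = C'[h_1(i), h_2(j)]$ can be read off directly; crucially, we never need to reconstruct the column index $j$ from the compressed product. After bucketing $S$ by hash image in $\Order(|S|)$ time, all isolated entries are identified and recovered. For uniformly random $h_1, h_2$, a fixed $(i,j) \in S$ collides with some other element of $S$ with probability at most $(|S|-1)/(x' z') \leq 1/2$, so by Markov's inequality a random trial leaves at most $|S|/2$ entries non-isolated with constant probability. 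Replace $S$ by the unrecovered entries and recurse; halving $|S|$ across $\Order(\log |S|)$ levels yields a geometric sum dominated by the top level, totalling $\widetilde\Order(T_{\mathcal{A}}(x', y, z', \IN))$.

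For the deterministic variant (when $\mathcal{A}$ itself is deterministic) I would derandomize the hash choice via the method of conditional expectations applied to the pessimistic estimator $\widetilde{N}(h_1, h_2)$ counting pairs $(i,j) \ne (i',j')$ in $S$ with $h_1(i) = h_1(i')$ and $h_2(j) = h_2(j')$; this upper bounds the number of non-isolated entries and has expectation at most $|S|^2/(x' z') \leq |S|/2$ under uniform $h_1, h_2$. Fixing the values $h_1(1), h_1(2), \ldots$ and then $h_2(1), h_2(2), \ldots$ one coordinate at a time, always taking the value that minimizes the conditional expectation of $\widetilde{N}$, produces a deterministic hash pair with $\widetilde{N} \leq |S|/2$. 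The main obstacle I anticipate is controlling the overhead of this derandomization so that it fits inside $\widetilde\Order$: each fixing affects only the terms of $\widetilde{N}$ incident to that coordinate, and with careful bucketing (or alternatively by enumerating a small pairwise-independent family of hash functions and testing each) the total derandomization overhead should remain a polylogarithmic factor times $T_{\mathcal{A}}$, so that it can be absorbed into the stated running time.
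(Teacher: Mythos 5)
Your high-level template (compress, recover isolated entries without reconstructing indices, iterate/derandomize) is the right one, but the collision analysis has a genuine gap that breaks the argument. You hash both coordinates and claim that a fixed $(i,j)\in S$ collides with another element of $S$ with probability at most $(|S|-1)/(x'z')\le \tfrac12$. This treats every other element $(i'',j'')\in S$ as colliding with probability $1/(x'z')$, which is only true when $i''\neq i$ \emph{and} $j''\neq j$. An element sharing the row ($i''=i$, $j''\neq j$) collides with probability $1/z'$, and one sharing the column with probability $1/x'$. So if $S$ is skewed --- say all of $S$ lies in a single row and you take $x'\approx z'\approx 2\sqrt{|S|}$ --- the expected number of collisions per element is about $\sqrt{|S|}/2$, almost nothing is isolated, and your recursion on the unrecovered entries makes essentially no progress. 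The same error propagates into the derandomization: the pessimistic estimator's expectation is not $|S|^2/(x'z')$ for the same reason. Choosing $x',z'$ adaptively cannot fix this with a single global hash pair, because $S$ may simultaneously contain rows of very different degrees.

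The paper's proof avoids this precisely by \emph{not} hashing rows at all: it partitions the rows of $A$ into levels by their degree $\deg(i)=|\{j:(i,j)\in S\}|$, and at level $\ell$ (degrees in $[2^\ell,2^{\ell+1})$) it hashes only the column indices into $z_\ell=2^{\ell+2}$ buckets. Within a fixed row the number of competitors is at most $2^{\ell+1}\le z_\ell/2$, so a union bound over that row alone gives isolation probability $\ge\tfrac12$; and since at most $|S|/2^\ell$ rows have degree $\ge 2^\ell$, the compressed product has dimensions $x_\ell\cdot z_\ell\le 4|S|$ as required. Repeating with $O(\log\IN)$ independent hash functions (or the deterministically constructed family of \cref{lem:det-hashing}) then isolates every pair rather than only half of them in expectation. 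To repair your proof you would need to incorporate this degree-based bucketing (or some equivalent mechanism that scales the number of column buckets to the row degree); as written, the uniform two-dimensional hashing step fails.
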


\begin{algorithm}[t]
\caption{Computes the matrix product $C$ of two sparse matrices $A \in R^{x \times y},\, B \in R^{y \times z}$ over some ring~$R$, given a small superset $S \subseteq [x] \times [z]$ of the support \makebox{$\supp(A B)$}; see \cref{lem:recover}. The only difference between the randomized algorithm and its derandomization is in \cref{alg:recover:line:hash-random,alg:recover:line:hash-det}---in the simple randomized version we use \cref{alg:recover:line:hash-random}a and in the deterministic version we use \cref{alg:recover:line:hash-det}b.} \label{alg:recover}
\begin{algorithmic}[1]
\Procedure{Recover}{$A, B, S$}
    \State Initialize $C \in R^{x \times z}$ to be all-zeros
    \For{$\ell \gets 0, 1, \dots, \floor{\log z}$} \label{alg:recover:line:levels}
        \State Let $A_\ell \in R^{x_\ell \times y}$ denote the restriction of $A$ to indices $i$ with $2^\ell \leq \deg(i) < 2^{\ell+1}$
        \State Let $z_\ell \gets 2^{\ell+2}$
        {
            \makeatletter\def\ALG@step{\refstepcounter{ALG@line}\alglinenumber{\theALG@line a}}\makeatother
            \State Let $\mathcal H_\ell$ be a set of $10 \ceil{\log \IN}$ random hash functions $h : [z] \to [z_\ell]$ \label{alg:recover:line:hash-random}
        }
        {
            \makeatletter\def\ALG@step{\alglinenumber{\theALG@line b}}\makeatother
            \State Let $\mathcal H_\ell$ be the set of hash functions $h : [z] \to [z_\ell]$ computed by \cref{lem:det-hashing} \label{alg:recover:line:hash-det}
        }
        \ForEach{$h \in \mathcal H_\ell$} \label{alg:recover:line:loop}
            \State Let $B' \in R^{y \times z_\ell}$ denote the matrix defined
            \medskip\smallskip
            \Statex[4] $\displaystyle B'[k, b] = \sum_{\substack{j \in [z]\\h(j) = b}} B[k, j]$
            \medskip\smallskip
            \State Compute $C' \gets A_\ell B'$ with the algorithm $\mathcal A$
            \ForEach{\emph{isolated} support element $(i, j) \in S$ (see \cref{def:isolation})}
                \State $C[i, j] \gets C'[i, h(j)]$ \label{alg:recover:line:recover}
            \EndForEach
        \EndForEach
    \EndFor
    \State\Return $C$
\EndProcedure
\end{algorithmic}
\end{algorithm}
    
Consider the pseudocode in \cref{alg:recover} (and ignore \cref{alg:recover:line:hash-det}b for now). Here, for any $i \in [x]$, we define the \emph{degree} $\deg(i) = \abs{\set{j : (i, j) \in S}}$. The algorithm proceeds in \emph{levels}~\makebox{$\ell \gets 0, 1, \dots, \floor{\log z}$} (\cref{alg:recover:line:levels}). At the $\ell$-th level our goal is to recover all entries~$C[i, j]$ where~$i$ has degree roughly~$2^\ell$. More precisely, let $A_\ell$ denote the submatrix of $A$ restricted to indices $i \in [x]$ with~\makebox{$2^\ell \leq \deg(i) < 2^{\ell+1}$} and let $x_\ell$ denote the number of such indices (i.e., the number of rows in $A_\ell$). Our goal at the $\ell$-th level is to recover the submatrix $A_\ell B$ of $C$.

Simply computing $A_\ell B$ is too expensive (as the product of the outer dimensions $x_\ell \cdot z$ can be much larger than $m$). Instead, we will hash the matrix $B$ to a substantially thinner matrix $B' \in \Int^{y \times z_\ell}$ by a random hash function $h : [z] \to [z_\ell]$ where $z_\ell = 2^{\ell+2}$ (in fact, we repeat this step for several random hash functions in \cref{alg:recover:line:hash-random}a and \cref{alg:recover:line:loop}). We define
\begin{equation*}
    B'[k, b] = \sum_{\substack{j \in [z]\\h(j) = b}} B[k, j],
\end{equation*}
and compute the product $C' \gets A_\ell B'$ by means of the fast algorithm $\mathcal A$. To make use of $C'$, we need the following definition:

\begin{definition}[Isolation] \label{def:isolation}
Let $h$ be a hash function. We say that a pair $(i, j) \in S$ is \emph{isolated under $h$} if there is no other $j' \neq j$ with $(i, j') \in S$ and $h(j) = h(j')$. For a set of hash functions $\mathcal H$, we say that~$(i, j) \in S$ is \emph{isolated under $\mathcal H$} if $(i, j)$ is isolated under some $h \in \mathcal H$.
\end{definition}

The algorithm computes the set of isolated pairs $(i, j)$ and for each such pair updates $C[i, j] \gets C'[i, h(j)]$. We prove the correctness of this approach in the next lemma:

\begin{lemma}[Correctness of \cref{alg:recover}]
Assume that for all levels $\ell$, and for all pairs $(i, j) \in S$ with $2^\ell \leq \deg(i) < 2^{\ell+1}$, $(i, j)$ is isolated under $\mathcal H_\ell$. Then \cref{alg:recover} correctly returns $C = A B$.
\end{lemma}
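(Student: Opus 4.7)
The plan is a direct verification: for every position $(i,j)$ I need to show that the final value of $C[i,j]$ equals $(AB)[i,j]$. There are two cases. For $(i,j)\notin S$, the algorithm never writes to $C[i,j]$, so it stays $0$, and the assumption $\supp(AB)\subseteq S$ gives $(AB)[i,j]=0$ as required. So the real content is handling $(i,j)\in S$.

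For $(i,j)\in S$, let $\ell$ be the unique level with $2^\ell\le\deg(i)<2^{\ell+1}$. By hypothesis, $(i,j)$ is isolated under some $h\in\mathcal{H}_\ell$, i.e., no $j'\neq j$ with $(i,j')\in S$ satisfies $h(j')=h(j)$. I would then unfold the definition of $C'=A_\ell B'$ at position $(i,h(j))$ and pull the hashing out of the inner sum:
\begin{equation*}
C'[i,h(j)] \;=\; \sum_{k\in[y]} A_\ell[i,k]\,B'[k,h(j)] \;=\; \sum_{k\in[y]}A[i,k]\sum_{\substack{j'\in[z]\\h(j')=h(j)}} B[k,j'] \;=\; \sum_{\substack{j'\in[z]\\h(j')=h(j)}}(AB)[i,j'].
\end{equation*}
Here the identification $A_\ell[i,k]=A[i,k]$ holds because the row $i$ is present in $A_\ell$ (since $i$ belongs to level $\ell$). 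Now I split the sum on the right according to whether $j'=j$ or not. The diagonal term contributes $(AB)[i,j]$. For any $j'\neq j$ with $h(j')=h(j)$, either $(AB)[i,j']=0$ (and there is nothing to worry about) or $(AB)[i,j']\neq 0$, in which case $(i,j')\in\supp(AB)\subseteq S$, contradicting the isolation of $(i,j)$ under $h$. So every off-diagonal term vanishes and $C'[i,h(j)]=(AB)[i,j]$. The assignment in line~\ref{alg:recover:line:recover} then writes the correct value into $C[i,j]$.

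Finally, the same pair $(i,j)$ may be isolated under several $h\in\mathcal{H}_\ell$ (and never at any other level, since $i$ lies in only one level), so line~\ref{alg:recover:line:recover} may execute multiple times for $(i,j)$; by the calculation above, each execution writes the same correct value, so the final content of $C[i,j]$ is $(AB)[i,j]$. Combined with the $(i,j)\notin S$ case, this gives $C=AB$, establishing correctness. There is no real obstacle here: the only delicate point is recognizing that the contrapositive of isolation, together with $\supp(AB)\subseteq S$, is exactly what turns the collision terms in the expansion of $C'[i,h(j)]$ into zeros.
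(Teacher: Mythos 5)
Your proof is correct and follows essentially the same route as the paper's: the same case split on $(i,j)\in S$ versus not, the same expansion of $C'[i,h(j)]$ as a sum over hash collisions, and the same use of $\supp(AB)\subseteq S$ plus isolation to kill the off-diagonal terms. Your explicit remark that repeated writes for multiply-isolated pairs all deposit the same value is a slightly more careful phrasing of the paper's observation that non-isolated iterations leave $C[i,j]$ untouched, but the substance is identical.
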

\begin{proof}
We prove that the algorithm correctly returns $C[i, j] = (AB)[i, j]$ for all pairs $(i, j) \in [x] \times [z]$. If~\makebox{$(i, j) \not\in S$}, then we never touch $C[i, j]$ after the initialization to zero. Since $S \supseteq \supp(C)$, the algorithm correctly keeps $C[i, j] = 0$. For the rest of the proof we therefore focus on pairs $(i, j) \in S$, for which we prove the statement in two steps:
\begin{itemize}
    \item\emph{Isolated pairs are correctly recovered:} Focus on some level $\ell$ and some iteration of the inner loop in \cref{alg:recover:line:loop} and suppose that $(i, j)$ satisfies $2^\ell \leq \deg(i) < 2^{\ell+1}$. We show that if $(i, j)$ is isolated under $h$, then we correctly recover $C[i, j] \gets C'[i, h(j)]$ in \cref{alg:recover:line:recover}. The proof is a simple calculation. Recall that by the degree assumption, $i$ is part of the restricted matrix $A_\ell$. Therefore:
    \begin{gather*}
        C'[i, h(j)] = \sum_{k \in [y]} A_\ell[i, k] \cdot B'[k, h(j)] \\
        \qquad= \sum_{k \in [y]} A_\ell[i, k] \cdot \sum_{\substack{j' \in [z]\\h(j) = h(j')}} B[k, j'] \\
        \qquad= \sum_{\substack{j' \in [z]\\h(j) = h(j')}} (A_\ell B)[i, j'].
    \end{gather*}
    Since $(i, j) \in S$ is isolated, any other pair $(i, j') \in \supp(A B) \subseteq S$ must satisfy that $h(j) \neq h(j')$. Therefore, the unique term in the sum is $(A_\ell B)[i, j]$.
    \item\emph{All pairs are correctly recovered.} We use the assumption of the lemma statement: For any $(i, j) \in S$ with $2^\ell \leq \deg(i) < 2^{\ell+1}$ there is some hash function $h \in \mathcal H_\ell$ under which $(i, j)$ is isolated. By the previous bullet, we correctly recover $C[i, j]$ in this iteration. In all iterations where $(i, j)$ is not isolated, we do not change $C[i, j]$. \qedhere
\end{itemize}
\end{proof}
    
To complete the correctness proof, we need to justify the assumption that all pairs $(i, j) \in S$ are isolated. We devote the following two \cref{sec:densification:sec:recover:sec:isolation-random,sec:densification:sec:recover:sec:isolation-det} to this task, and for now focus on the running time.

\begin{lemma}[Running Time of \cref{alg:recover}]
\cref{alg:recover} runs in time
\begin{equation*}
    \widetilde\Order\parens*{\max_{\substack{x' \leq x, z' \leq z\\x' \cdot z' \leq 4|S|}} T_{\mathcal A}(x', y, z', \IN)}.
\end{equation*}
\end{lemma}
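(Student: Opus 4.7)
The plan is to split the total running time into three pieces and bound each against the claimed quantity: (a) the calls to the algorithm $\mathcal A$ in the inner loop at \cref{alg:recover:line:loop}, (b) the construction of the hashed matrices $B'$, and (c) the enumeration of isolated pairs and the assignments on \cref{alg:recover:line:recover}. Since there are $O(\log z) = O(\log \IN)$ outer iterations and $|\mathcal H_\ell| = O(\log \IN)$ hash functions per level, any bound on a single inner iteration will incur only an $O(\log^2 \IN)$ multiplicative factor, perfectly matching the $\widetilde O$ in the statement.

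The heart of the argument is a degree-counting bound for (a). By definition, every row~$i$ of $A_\ell$ satisfies $\deg(i) \geq 2^\ell$, and $\sum_{i} \deg(i) \leq |S|$, so $x_\ell \leq |S|/2^\ell$. Combined with $z_\ell = 2^{\ell+2}$, this yields $x_\ell \cdot z_\ell \leq 4|S|$. Hence the single call to $\mathcal A$ on the $x_\ell \times y$ by $y \times z_\ell$ instance (still with at most $\IN$ input nonzeros, since $A_\ell$ and $B'$ are obtained from $A$ and $B$ by dropping rows / summing columns) is bounded by
\[
  T_{\mathcal A}(x_\ell, y, z_\ell, \IN) \;\leq\; \max_{\substack{x' \leq x,\, z' \leq z\\x' \cdot z' \leq 4|S|}} T_{\mathcal A}(x', y, z', \IN).
\]

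For (b) and (c), building $B'$ reduces to scanning the at most $\IN$ nonzeros of $B$ once and hashing their column indices, costing $O(\IN)$ time per hash function; enumerating isolated pairs can be handled by grouping $S$ by its first coordinate and bucketing each group according to~$h$, costing $O(|S|)$ time per hash function. Both terms are absorbed into $T_{\mathcal A}(x', y, z', \IN)$ for the extremal choice, since $\mathcal A$ must at least read its input ($\IN$ nonzeros) and its output may contain up to $x' z'$ nonzero entries. The only mildly subtle issue I anticipate is justifying that these overheads really are dominated by the matrix-multiplication term; this follows from the standard convention that~$T_{\mathcal A}(x', y, z', \IN) \geq \IN$, so I do not foresee a genuine obstacle, and the lemma follows by summing over levels and hash functions.
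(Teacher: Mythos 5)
Your proposal is correct and follows essentially the same route as the paper: the key step in both is the degree-counting bound $x_\ell \leq |S|/2^\ell$, hence $x_\ell z_\ell \leq 4|S|$, with the $\Order(\IN)$ and $\Order(|S|)$ overheads for building $B'$ and filtering isolated pairs absorbed into the matrix-multiplication term exactly as you argue. The only detail you gloss over is the cost of constructing the hash-function families $\mathcal H_\ell$ themselves (handled separately in the paper's randomized and deterministic isolation subsections), but this does not affect the argument.
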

\begin{proof}
We can precompute the degrees in time~$\Order(|S|)$ and then compute the matrices~$A_0, \dots, A_{\floor{\log z}}$ in time~$\Order(\IN)$. We again analyze a fixed level $\ell$; the total number of levels is $\log z \leq \log \IN$ and can therefore be neglected. As we will see in \cref{sec:densification:sec:recover:sec:isolation-random,sec:densification:sec:recover:sec:isolation-det}, we can select $\mathcal H$ in time $\widetilde\Order(\IN)$. Moreover, we store all hash functions explicitly and thus evaluating any hash function takes constant time. In each of the $\Order(\log \IN)$ iterations, we construct the matrix~$B'$ in time $\Order(\IN)$ by a single pass over the nonzero entries of $B$. To compute the matrix product $A_\ell \cdot B'$, we use the algorithm~$\mathcal A$. Note that at most a~$2^{-\ell}$-fraction of $[x]$ has degree at least~$2^\ell$, and therefore \smash{$x_\ell \leq \frac{|S|}{2^{\ell}}$}. It follows that~\smash{$x_\ell \cdot z_\ell \leq \frac{|S|}{2^\ell} \cdot 2^{\ell+2} = 4|S|$} and thus the running time of $\mathcal A$ can be bounded by
\begin{equation*}
    T_{\mathcal A}(x_\ell, y, z_\ell, \IN) \leq \max_{\substack{x' \leq x, z' \leq z\\x' \cdot z' \leq 4|S|}} T_{\mathcal A}(x', y, z', \IN).    
\end{equation*}
In time $\Order(|S|)$ we can moreover filter the isolated elements in $S$, and run the recovery loop in \cref{alg:recover:line:loop}. In total, the running time becomes
\begin{equation*}
    \widetilde\Order\parens*{|S| + \max_{\substack{x' \leq x, z' \leq z\\x' \cdot z' \leq 4|S|}} T_{\mathcal A}(x', y, z', \IN)} = \widetilde\Order\parens*{\max_{\substack{x' \leq x, z' \leq z\\x' \cdot z' \leq 4|S|}} T_{\mathcal A}(x', y, z', \IN)};
\end{equation*}
here, the term $|S|$ is dominated since in the worst-case $\mathcal A$ must return an output of size up to~$|S|$.
\end{proof}

\subsubsection{Randomized Isolation} \label{sec:densification:sec:recover:sec:isolation-random}
We quickly show that it is easy to satisfy the required isolation property if $\mathcal H$ is a set of logarithmically many random hash functions (as sampled in \cref{alg:recover:line:hash-random}a).

\begin{lemma}[Randomized Isolation] \label{lem:random-hashing}
Fix any level $\ell$ and let $\mathcal H_\ell$ be a set of $10 \ceil{\log \IN}$ random hash functions $h : [z] \to [z_\ell]$ (as in \cref{alg:recover:line:hash-random}a). With high probability, all pairs $(i, j) \in S$ with $\deg(i) < 2^{\ell+1}$ are isolated under some hash function $h \in \mathcal H$.
\end{lemma}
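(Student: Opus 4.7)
The plan is a standard two-layer application of the union bound, analogous to isolation arguments in sparse recovery and colored coding. First I would fix a single pair $(i,j) \in S$ with $\deg(i) < 2^{\ell+1}$, and analyze the probability that $(i,j)$ is isolated under one uniformly random hash function $h : [z] \to [z_\ell]$. For any fixed $j' \neq j$ with $(i,j') \in S$, the collision probability is $\Pr[h(j) = h(j')] = 1/z_\ell$, so by a union bound over at most $\deg(i) - 1$ such $j'$, the probability that $(i,j)$ is \emph{not} isolated under $h$ is at most
\begin{equation*}
    \frac{\deg(i) - 1}{z_\ell} \;<\; \frac{2^{\ell+1}}{2^{\ell+2}} \;=\; \frac{1}{2}.
\end{equation*}

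Next I would use the independence of the $10 \lceil \log \IN \rceil$ hash functions drawn in \cref{alg:recover:line:hash-random}: since each one independently isolates $(i,j)$ with probability at least $1/2$, the probability that \emph{none} of them isolates $(i,j)$ is at most $2^{-10 \lceil \log \IN \rceil} \leq \IN^{-10}$. Finally, I would take a union bound over all candidate pairs $(i,j) \in S$ with $\deg(i) < 2^{\ell+1}$. Since by the preliminaries we have $x, z \leq \IN$, the trivial bound $|S| \leq x \cdot z \leq \IN^2$ applies, so the probability that \emph{some} relevant pair fails to be isolated under $\mathcal H_\ell$ is at most $\IN^2 \cdot \IN^{-10} = \IN^{-8}$, which is high probability in the required sense.

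The argument has no real obstacle; the only subtlety worth pointing out is the choice of $z_\ell = 2^{\ell+2}$, which is exactly what makes the per-hash isolation probability a constant bounded away from $1$ (namely $\geq 1/2$) for every $i$ in the current level. This constant gap is what allows standard independent repetition to boost the success probability to $1 - \IN^{-\Omega(1)}$ with only $O(\log \IN)$ hash functions, which in turn keeps the logarithmic overhead in the running time of \cref{alg:recover}.
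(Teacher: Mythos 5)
Your proof is correct and follows essentially the same route as the paper's: a per-hash collision union bound over the at most $2^{\ell+1}$ colliding candidates $j'$ giving isolation probability at least $\tfrac12$, independent repetition over the $10\lceil\log\IN\rceil$ hash functions, and a final union bound over the at most $xz \leq \IN^2$ pairs in $S$. No gaps.
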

\begin{proof}
We first focus on a single pair $(i, j) \in S$ and a single random hash function $h : [z] \to [z_\ell]$, and argue that $(i, j)$ is isolated under $h$ with probability at least $\frac12$. Let $J = \set{j' : (i, j') \in S} \setminus \set{j}$ and note that~\makebox{$|J| \leq \deg(i) < 2^{\ell+1}$}. The error event is that there is some $j' \in J$ with $h(j) = h(j')$. For any fixed $j, j'$ this happens with probability at most \smash{$\frac{1}{z_\ell} = \frac{1}{2^{\ell+2}}$}. Taking a union bound over the at most $2^{\ell+1}$ elements in~$J$, we obtain that $(i, j)$ is not isolated with probability at most~\smash{$\frac{2^{\ell+1}}{2^{\ell+2}} = \frac12$}.

Since we pick the $10 \ceil{\log \IN}$ hash functions in $\mathcal H_\ell$ independently, it follows that $(i, j)$ is isolated under~$\mathcal H_\ell$ with probability at least $1 - 2^{-10 \log \IN} = 1 - \IN^{-10}$. By a union bound over the at most $xz \leq \IN^2$ many elements in $S$, we get that all pairs in $S$ are isolated with probability at least $1 - \IN^{-8}$. (The constant $8$ can be chosen arbitrarily larger.)
\end{proof}

\subsubsection{Deterministic Isolation} \label{sec:densification:sec:recover:sec:isolation-det}
In this section we derandomize the previous algorithm. Note that the only place where we use randomness is to guarantee that each pair $(i, j)$ is isolated (at the appropriate level). We will therefore replace the sets~$\mathcal H_\ell$ of random hash functions by hash functions constructed by the following deterministic algorithm: 

\begin{lemma}[Deterministic Isolation] \label{lem:det-hashing}
Let $S \subseteq [x] \times [z]$ be such that for all $i \in [x]$, $\abs{\set{j : (i, j) \in S}} \leq s$ (i.e., each row in $S$ has at most $s$ entries). We can compute a set of hash functions~$\mathcal H \subseteq \set{h : [z] \to [2s]}$ with the following properties:
\begin{enumerate}
    \item All pairs $(i, j) \in S$ are isolated under $\mathcal H$.\\(That is, there is some $h \in \mathcal H$ such that for all~$j' \neq j$ with $(i, j') \in S$, we have $h(j) \neq h(j')$).
    \item $|\mathcal H| \leq \log |S| + 1$.
    \item The algorithm to compute $\mathcal H$ is deterministic and runs in time $\widetilde\Order(z + |S|)$.
\end{enumerate}
\end{lemma}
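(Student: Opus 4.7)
The proof follows the template of iterated halving driven by the method of conditional expectations. Maintain a set $U \subseteq S$ of currently unisolated pairs, initially $U = S$. In each outer round we deterministically construct one hash function $h : [z] \to [2s]$ that isolates at least $|U|/2$ pairs of $U$, append $h$ to $\mathcal{H}$, and remove the newly isolated pairs from $U$. Since $|U|$ at least halves per round, we terminate after at most $\lceil \log |S| \rceil + 1$ rounds, immediately yielding properties~(1) and~(2).

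Existence of the required halving hash function is a one-line probabilistic argument. Sample $h$ uniformly at random with the $h(j)$ independent. For a fixed $(i, j) \in U$ with $J_i := \{j' : (i, j') \in S\}$ (so $|J_i| \leq s$), a union bound gives $\Pr[(i, j) \text{ not isolated}] \leq (|J_i| - 1)/(2s) \leq (s-1)/(2s) < \tfrac{1}{2}$, so by linearity $\mathbb{E}[\#\text{isolated in } U] > |U|/2$. I would derandomize this by fixing $h(1), h(2), \ldots, h(z) \in [2s]$ one coordinate at a time, each time choosing $h(t)$ to maximize the conditional expectation $\mathbb{E}[\#\text{isolated in } U \mid h(1), \ldots, h(t)]$. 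This expectation is a sum of per-pair terms; for $(i, j) \in U$, the term admits a simple closed form in terms of the ``forbidden'' set $F_{ij} \subseteq [2s]$ of values already taken by decided indices in $J_i \setminus \{j\}$ and the count $r_{ij}$ of yet-undecided indices in $J_i \setminus \{j\}$: it equals $\mathbf{1}[h(j) \notin F_{ij}] \cdot (1 - 1/(2s))^{r_{ij}}$ once $h(j)$ is fixed, and $\frac{2s - |F_{ij}|}{2s} \cdot (1 - 1/(2s))^{r_{ij}}$ beforehand. The crucial structural point is that fixing $h(t) = v$ only affects pairs $(i, j) \in U$ with $t \in J_i$, so $(F_{ij}, r_{ij})$ and the aggregate sum can be maintained incrementally.

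The probabilistic and derandomization steps are standard; the main technical obstacle is achieving the advertised running time $\widetilde{O}(z + |S|)$ rather than a naive $\widetilde{O}(z + s \cdot |S|)$, which arises because a single coordinate decision may touch up to $s$ pairs per affected row, and because finding the best $v \in [2s]$ could naively require scanning the entire range. The fix exploits that the aggregate conditional expectation, viewed as a function of $v$, is piecewise constant with breakpoints only at the values lying in the relevant forbidden sets, so the argmax can be located in time proportional to the number of breakpoints rather than $2s$. Combined with a sparse representation of $F_{ij}$ and $r_{ij}$ via hash tables shared across pairs of the same row, and an amortization argument leveraging the geometric decay of $|U|$ across outer rounds, the total work telescopes to $\widetilde{O}(z + |S|)$, proving the lemma.
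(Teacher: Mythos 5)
Your overall strategy---iterated halving, with each hash function built greedily by the method of conditional expectations---is the same as the paper's, and your argument for properties (1) and (2) is sound: the union bound gives $\Pr[(i,j)\text{ not isolated}] \le (s-1)/(2s) < \tfrac12$, so each round isolates more than half of the remaining active pairs.

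The gap is in property (3). You derandomize against the \emph{exact} conditional expectation of the number of isolated pairs, whose per-pair term involves the product $(1-1/(2s))^{r_{ij}}$ and the membership test $\mathbf{1}[h(j)\notin F_{ij}]$. This quantity is not linear in simple bucket counts, and your proposed fix does not deliver $\widetilde\Order(z+|S|)$: the function of $v$ at step $t$ has breakpoints at (among others) all values in $F_{it}$ and in the sets $F_{ij}$ of the still-undecided pairs $(i,j)$ with $t \in J_i$. Already for a single row with $s$ entries, $\sum_t |F_{it}| = \Theta(s^2)$, so the total number of breakpoints in the \emph{first} round is $\Theta(s\,|S|)$ in the worst case; amortizing over the geometric decay of $|U|$ cannot help with round one. "Time proportional to the number of breakpoints" therefore gives only $\widetilde\Order(z + s|S|)$, which is exactly the naive bound you set out to beat. (Moreover, approximating the expectation by dropping the small $F_{ij}$-dependent corrections risks choosing a $v$ that falls below the average, breaking the invariant.)

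The paper's resolution is to derandomize against a \emph{linear} pessimistic estimator instead: the number of colliding triples $(i,j,j')$ with $(i,j)$ active and $(i,j')\in S$. A triple is decided exactly at step $\max\{j,j'\}$, and the number of new collisions created by setting $h(j)\gets b$ is exactly $\sum_{i:(i,j)\in S_{\mathcal H}} M[i,b] + \sum_{i:(i,j)\in S} M_{\mathcal H}[i,b]$, where $M[i,b]$ and $M_{\mathcal H}[i,b]$ count how many (active) pairs of row $i$ have been hashed to bucket $b$. This linearity lets one binary-search for a below-average bucket using segment trees over the rows of $M$ and $M_{\mathcal H}$, at cost $\widetilde\Order(\abs{\set{i : (i,j)\in S}})$ per step $j$, which sums to $\widetilde\Order(z+|S|)$. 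Since a random bucket yields at most $|C_{\mathcal H}|/(2s) \le |S_{\mathcal H}|s/(2s) = |S_{\mathcal H}|/2$ collisions in expectation, and every still-active pair accounts for at least one collision, the halving guarantee is preserved. To repair your proof, replace your exact-expectation objective by this union-bound surrogate.
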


It is easy to verify that by replacing \cref{alg:recover:line:hash-random}a with \cref{alg:recover:line:hash-det}b in \cref{alg:recover} (and using \cref{lem:det-hashing} in place of \cref{lem:random-hashing} in the analysis), we preserve the correctness of the algorithm. For this reason, in this section we focus on proving \cref{lem:det-hashing}. At the end of this section we further remark how the running time of \cref{alg:recover} is affected.

\paragraph{An Inefficient Algorithm}
\cref{lem:det-hashing} is proven using the method of conditional expectations. We pick the set of hash functions~$\mathcal H$ one by one, and for each hash function $h$ we assign the values $h(1), \dots, h(z)$ one by one. The key idea is that in each step we select the current value $h(i)$ to be \emph{at least as good} as what we expect from a random choice. To precisely state what we mean by \emph{good}, we start with some terminology.

For a set of hash functions $\mathcal H$, let $S_{\mathcal H} \subseteq S$ denote the subset of pairs $(i, j)$ that are \emph{not} isolated under~$\mathcal H$; we will refer to these pairs as \emph{active}. In this language our goal is to find a set of hash functions $\mathcal H$ with zero active pairs. To this end, we start with $\mathcal H \gets \emptyset$ and proceed in several \emph{rounds}. In each round, our goal is to select a hash function that halves the number of active pairs.

Next, we describe how to select such a hash function $h$. Let $C_\mathcal H$ denote the set of all triples $(i, j, j')$ with~\makebox{$j \neq j'$}, \makebox{$(i, j) \in S_{\mathcal H}$} and~\makebox{$(i, j') \in S$}. For a hash function $h : [z] \to [2s] \cup \set\bot$, let $C_{\mathcal H}(h) \subseteq C_{\mathcal H}$ denote the subset of triples~$(i, j, j')$ with $h(j) = h(j') \neq \bot$. We refer to the elements of $C_{\mathcal H}(h)$ as the \emph{collisions under~$h$.} Intuitively, a collision~$(i, j, j')$ is a witness that $(i, j)$ is not isolated under $h$. Our approach is to select a next hash function causing at most $\frac12 |S_{\mathcal H}|$ collisions. More precisely, we start from a hash function initialized to some dummy value $h(1) = \dots = h(z) = \bot$. Then, we enumerate $j \gets 1, \dots, z$ and assign~$h(j) \gets b^*$ for some bucket $b^*$ that causes at most as many new collisions as a random bucket would cause in expectation, i.e.,
\begin{equation*}
    |C_{\mathcal H}(h [j \mapsto b^*])| \leq \Ex_{b \in [2s]} |C_{\mathcal H}(h [j \mapsto b])|.
\end{equation*}
Here, we write $h[j \mapsto b]$ to denote the updated hash function that maps $j$ to $b$. We summarize this algorithm in \cref{alg:det-hashing}, ignoring for now how to find such a bucket $b^*$ (in \cref{alg:det-hashing:line:find-bucket}). Instead, we prove that this algorithm indeed leads to the claimed properties 1 and 2.

\begin{algorithm}[t]
\caption{A deterministic algorithm to find a small set of hash functions $\mathcal H$ isolating the given set $S \subseteq [x] \times [z]$; see \cref{lem:det-hashing}.} \label{alg:det-hashing}
\begin{algorithmic}[1]
\State Let $\mathcal H \gets \emptyset$
\While{$S_{\mathcal H} \neq \emptyset$}
    \State Let $h$ be a function returning $\bot$ for every input
    \For{$j \gets 1, \dots, z$}
        \State Find a bucket $b^*$ with $|C_{\mathcal H}(h [j \mapsto b^*])| \leq \Ex_{b \in [2s]} |C_{\mathcal H}(h [j \mapsto b])|$ \label{alg:det-hashing:line:find-bucket}
        \State $h(j) \gets b^*$
    \EndFor
    \State $\mathcal H \gets \mathcal H \cup \set h$
\EndWhile
\State\Return $\mathcal H$
\end{algorithmic}
\end{algorithm}

\begin{lemma}[Correctness of \cref{alg:det-hashing}] \label{lem:det-hashing-correctness}
\cref{alg:det-hashing} returns a set of hash functions $\mathcal H$ that satisfies properties 1 and 2 of \cref{lem:det-hashing}:
\begin{enumerate}
    \item All pairs $(i, j) \in S$ are isolated under $\mathcal H$.\\(That is, there is some $h \in \mathcal H$ such that for all~$j' \neq j$ with $(i, j') \in S$, we have $h(j) \neq h(j')$).
    \item $|\mathcal H| \leq \log |S| + 1$.
\end{enumerate}
\end{lemma}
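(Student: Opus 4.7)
The plan is to show that each pass through the outer \textbf{while}-loop of \cref{alg:det-hashing} strictly more than halves the quantity $|S_{\mathcal H}|$ of still-unisolated pairs. Termination---which happens only when $S_{\mathcal H}=\emptyset$---is exactly property~1 (every pair of $S$ ends up isolated under some $h\in\mathcal H$), and the halving forces the loop to stop within $\lceil\log|S|\rceil \leq \log|S|+1$ iterations, giving property~2.

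Within a single iteration I would chain three bounds. First, an injection $(i,j)\mapsto(i,j,j')$: if $(i,j)\in S_{\mathcal H\cup\{h\}}$, then $(i,j)$ is not isolated under $h$, so some $j'\neq j$ with $(i,j')\in S$ and $h(j)=h(j')$ exists, producing a triple $(i,j,j')\in C_{\mathcal H}(h)$; distinct pairs yield distinct triples, so $|S_{\mathcal H\cup\{h\}}|\leq|C_{\mathcal H}(h)|$. Second, a degree bound: each $(i,j)\in S_{\mathcal H}$ is the first two coordinates of at most $s-1$ triples in $C_{\mathcal H}$ because row $i$ of $S$ has at most $s$ entries, hence $|C_{\mathcal H}|\leq(s-1)\,|S_{\mathcal H}|$.

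The heart of the argument is the third bound, $|C_{\mathcal H}(h)|\leq|C_{\mathcal H}|/(2s)$, obtained by a telescoping analysis of the greedy inner loop. Let $h_j$ denote the partial hash function after assigning positions $1,\dots,j$ (with the remaining entries still $\bot$), so $h_0\equiv\bot$ and $h_z=h$. The greedy criterion of \cref{alg:det-hashing:line:find-bucket} yields
\begin{equation*}
    |C_{\mathcal H}(h_j)| - |C_{\mathcal H}(h_{j-1})| \;\leq\; \Ex_{b\in[2s]}\bigl(|C_{\mathcal H}(h_{j-1}[j\mapsto b])| - |C_{\mathcal H}(h_{j-1})|\bigr),
\end{equation*}
and the bracketed quantity on the right counts precisely those triples $(i,j_1,j_2)\in C_{\mathcal H}$ with $j\in\{j_1,j_2\}$, the other index strictly less than $j$ (so already assigned in $h_{j-1}$), and hash value equal to $b$ at that other index. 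Averaging over $b\in[2s]$ contributes $1/(2s)$ per such triple. Summing over $j=1,\dots,z$, every triple $(i,j_1,j_2)\in C_{\mathcal H}$ is counted exactly once at $j=\max(j_1,j_2)$, so the telescoping collapses to $|C_{\mathcal H}(h_z)|\leq|C_{\mathcal H}|/(2s)$. Combining the three bounds gives $|S_{\mathcal H\cup\{h\}}|\leq(s-1)|S_{\mathcal H}|/(2s)<|S_{\mathcal H}|/2$, and integrality of the left-hand side delivers the promised halving.

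The main subtlety to guard against is that the greedy criterion compares collision counts of \emph{partial} hash functions (with $\bot$ entries) against an expectation over randomising only the current coordinate, rather than against an expectation over full extensions of $h$. Once one observes that this per-step conditional increment has the clean combinatorial reading ``the fraction of $C_{\mathcal H}$-triples whose larger coordinate is $j$'', the telescoping closes cleanly, and the two elementary inequalities preceding it finish the proof.
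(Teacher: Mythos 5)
Your proof is correct and follows essentially the same route as the paper's: the same per-round halving of $|S_{\mathcal H}|$ via the chain $|S_{\mathcal H\cup\{h\}}|\leq|C_{\mathcal H}(h)|\leq|C_{\mathcal H}|/(2s)\leq s\,|S_{\mathcal H}|/(2s)$, with the key observation that each triple's fate is decided exactly at iteration $\max(j_1,j_2)$ and that averaging over the $2s$ buckets charges $1/(2s)$ per triple. Your telescoping phrasing and the marginally tighter $(s-1)$ degree bound are only cosmetic differences from the paper's ``new collisions per iteration'' accounting.
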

\begin{proof}
We prove that the number of active elements $|S_{\mathcal H}|$ halves after each round. From this statement both properties follow immediately: Since initially $|S_\emptyset| = |S|$, this process stops after at most $\log |S| + 1$ rounds which proves property 2. Moreover, as soon as the algorithm terminates we have that $S_{\mathcal H} = \emptyset$ and therefore all pairs in $S$ are isolated under $\mathcal H$, by definition.

Focus on any round; we prove that \cref{alg:det-hashing} selects a hash function $h : [z] \to [2s]$ which causes at most~$\frac12 |S_{\mathcal H}|$ collisions, i.e., $|C_{\mathcal H}(h)| \leq \frac12 |S_{\mathcal H}|$. As any non-isolated active element leads to at least one collision, this entails that indeed the number of active elements halves after this round, i.e.,~\smash{$|S_{\mathcal H \cup \set h}| \leq \frac12 |S_{\mathcal H}|$}.

Consider any triple $(i, j, j') \in C_{\mathcal H}$; when do we decide whether $(i, j, j')$ becomes a collision? We claim that this decision happens in exactly the $\ell = \max\set{j, j'}$-th iteration of the inner loop. Before that iteration we have not yet fixed the hash value $h(\max\set{j, j'})$ and after that iteration we will never change the hash values $h(j)$ and $h(j')$. Let \smash{$C_{\mathcal H}^{(\ell)}$} denote the subset of triples $(i, j, j')$ with $\max\set{j, j'} = \ell$, and let \smash{$C_{\mathcal H}^{(\ell)}(h) = C_{\mathcal H}^{(\ell)} \cap C_{\mathcal H}(h)$}. With this insight in mind, focus on the $j$-th iteration of the inner loop. For any bucket $b$, we have that
\begin{equation*}
    |C_{\mathcal H}(h[j \mapsto b])| = |C_{\mathcal H}(h)| + |C_{\mathcal H}^{(j)}(h[j \mapsto b])|.
\end{equation*}
Therefore, we select a bucket $b^*$ with
\begin{gather*}
    |C_{\mathcal H}(h)| + |C_{\mathcal H}^{(j)}(h[j \mapsto b^*])| \leq \Ex_{b \in [2s]} \parens*{|C_{\mathcal H}(h)| + |C_{\mathcal H}^{(j)}(h[j \mapsto b])|} \\
    \qquad= |C_{\mathcal H}(h)| + \Ex_{b \in [2s]} \parens*{|C_{\mathcal H}^{(j)}(h[j \mapsto b])|} \\
    \qquad= |C_{\mathcal H}(h)| + \frac{|C_{\mathcal H}^{(j)}|}{2s}.
\end{gather*}
In the last step we have used that for any triple $(i, j, j') \in C_{\mathcal H}^{(\ell)}$, there is exactly one choice for $h(\ell)$ which turns this triple into a collision (namely, $h(\ell) \gets h(\min\set{j, j'})$). This calculation shows that the number of \emph{new} collisions in the $j$-th round is at most \smash{$\frac{1}{2s} |C_{\mathcal H}^{(j)}|$}. Thus, the total number of collisions for the hash function $h$ selected in the current round is
\begin{equation*}
    |C_{\mathcal H}(h)| \leq \sum_{j=1}^z \frac{|C_{\mathcal H}^{(j)}|}{2s} = \frac{|C_{\mathcal H}|}{2s} \leq \frac{|S_{\mathcal H}| \cdot s}{2s} = \frac{|S_{\mathcal H}|}{2}.
\end{equation*}
This is precisely what we set out to prove.
\end{proof}

\paragraph{How To Efficiently Select Good Buckets}
In the previous lemma we have completely analyzed \cref{alg:det-hashing}---except that we left open \emph{how} to find the good buckets $b^*$ in \cref{alg:det-hashing:line:find-bucket}. It is possible to simply test all options~\makebox{$b \gets 1, \dots, 2s$}, but this is too inefficient for our purposes. We now describe a faster approach leading to the claimed running time. The idea is to find $b^*$ via binary search. We maintain the following data:
\begin{itemize}
    \item A matrix $M \in \Nat^{x \times 2s}$ where $M[i, b] = \abs{\set{(i, j) \in S : h(j) = b}}$.\\(The entry $M[i, b]$ counts for how many pairs $(i, j)$ we have hashed $j$ to the bucket $b$.)
    \item A matrix $M_{\mathcal H} \in \Nat^{x \times 2s}$ where $M_{\mathcal H}[i, b] = \abs{\set{(i, j) \in S_{\mathcal H} : h(j) = b}}$.\\(The entry $M_{\mathcal H}[i, b]$ counts for how many \emph{active} pairs $(i, j)$ we have hashed $j$ to the bucket $b$.)
    \item Moreover, we maintain segment trees on the rows of the matrices $M$ and $M_{\mathcal H}$ such that, given $i \in [x]$ and an interval~$B \subseteq [2s]$, we can compute $\sum_{b \in B} M[i, b]$ and $\sum_{b \in B} M_{\mathcal H}[i, b]$ in polylogarithmic time.
\end{itemize}
This data is easy to maintain: As the hash function is initialized to $\bot$, we initially set both matrices to be all-zeros. Whenever we assign $h(j) \gets b^*$ we increment all entries $M[i, b]$ where $(i, j) \in S$ and all entries~$M_{\mathcal H}[i, b]$ where $(i, j) \in S_{\mathcal H}$. We update the segment tree accordingly. The implementation of the binary search hinges on the following lemma:

\begin{lemma} \label{lem:det-hashing-identity}
Focus on any round of \cref{alg:det-hashing}. In the $j$-th iteration of the inner loop, we have that
\begin{equation*}
    |C_{\mathcal H}(h[j \mapsto b])| = |C_{\mathcal H}(h)| + \sum_{\substack{i \in [x]\\(i, j) \in S_{\mathcal H}}} M[i, b] + \sum_{\substack{i \in [x]\\(i, j) \in S}} M_{\mathcal H}[i, b].
\end{equation*}
\end{lemma}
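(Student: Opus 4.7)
The identity is pure bookkeeping: we are asserting that the increment $|C_{\mathcal H}(h[j\mapsto b])| - |C_{\mathcal H}(h)|$ is accounted for precisely by the two sums involving the aggregate matrices $M$ and $M_{\mathcal H}$. My plan is to analyze which triples in $C_{\mathcal H}$ change their collision status when we update $h$ at position $j$ only.

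First I would observe that at the start of the $j$-th iteration of the inner loop we have $h(j) = \bot$, so no triple $(i, j', j'') \in C_{\mathcal H}$ with $j \in \{j', j''\}$ can lie in $C_{\mathcal H}(h)$, since membership requires $h(j') = h(j'') \neq \bot$. Conversely, for any triple $(i, j', j'')$ with $j \notin \{j', j''\}$, the values $h(j')$ and $h(j'')$ are the same under $h$ and under $h[j \mapsto b]$, so its membership in $C_{\mathcal H}(\cdot)$ is unchanged by the update. Hence $C_{\mathcal H}(h) \subseteq C_{\mathcal H}(h[j\mapsto b])$ and the difference consists exactly of those triples $(i, j', j'') \in C_{\mathcal H}$ with $j \in \{j', j''\}$ that become collisions after the update.

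Next I would split these new triples into two cases according to which coordinate equals $j$; note that both $j' = j$ and $j'' = j$ cannot hold simultaneously since $j' \neq j''$ in the definition of $C_{\mathcal H}$. In Case~A ($j' = j$) we are counting triples $(i, j, j'')$ with $(i, j) \in S_{\mathcal H}$, $(i, j'') \in S$, $j'' \neq j$, and $h[j \mapsto b](j) = h[j \mapsto b](j'')$, i.e.\ $h(j'') = b$; because $h(j) = \bot \neq b$ before the update, the condition $j'' \neq j$ is automatic. For each $i$ with $(i, j) \in S_{\mathcal H}$, the number of such $j''$ is exactly $|\{(i, j'') \in S : h(j'') = b\}| = M[i, b]$, contributing $\sum_{i: (i,j) \in S_{\mathcal H}} M[i, b]$. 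Symmetrically, in Case~B ($j'' = j$) we are counting triples $(i, j', j)$ with $(i, j') \in S_{\mathcal H}$, $(i, j) \in S$ and $h(j') = b$, contributing $\sum_{i: (i,j) \in S} M_{\mathcal H}[i, b]$.

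Since Cases~A and~B are disjoint and together exhaust $C_{\mathcal H}(h[j\mapsto b]) \setminus C_{\mathcal H}(h)$, adding the two contributions yields the claimed identity. The only mildly subtle point — the one place where one could slip up — is to remember that $j'$ and $j''$ play different roles in the definition of $C_{\mathcal H}$ (the first is constrained to $S_{\mathcal H}$, the second only to $S$), which is exactly why the two summations use different matrices and different indicator conditions on $i$. Beyond that, the proof is a short case analysis with no real obstacle.
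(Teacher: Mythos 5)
Your proof is correct and follows essentially the same route as the paper's: both identify the new collisions created by assigning $h(j) \gets b$ as exactly the triples with one coordinate equal to $j$ whose other coordinate has already been hashed to $b$, and then split by which of the two roles ($S_{\mathcal H}$ versus $S$) the index $j$ plays. The paper phrases this via the sets $C_{\mathcal H}^{(\ell)}$ of triples with $\max\set{j,j'}=\ell$, whereas you let the condition $h(j'')=b$ (with $h$ the pre-update function) implicitly exclude not-yet-assigned indices, but the counting is identical.
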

\begin{proof}
Focus on any round of \cref{alg:det-hashing}. In the $\ell$-th iteration of the inner loop we have already assigned~\makebox{$h(1), \dots, h(\ell-1) \in [2s]$}, but the values $h(\ell), \dots, h(z)$ are still set to $\bot$. The proof is by the following simple yet tedious calculation:
\begin{gather*}
    |C_{\mathcal H}(h[\ell \mapsto b])| = |C_{\mathcal H}(h)| + |C^{(\ell)}_{\mathcal H}(h[\ell \mapsto b])| \\
    \qquad= |C_{\mathcal H}(h)| + \abs{\set{(i, j, j') \in C_{\mathcal H}^{(\ell)} : h(\min\set{j, j'}) = b}} \\
    \qquad=
    \begin{multlined}[t]
        |C_{\mathcal H}(h)| + \abs{\set{(i, j, j') \in C_{\mathcal H} : j' < j = \ell, h(j') = b}} \\+ \abs{\set{(i, j, j') \in C_{\mathcal H} : j < j' = \ell, h(j) = b}}
    \end{multlined} \\
    \qquad= |C_{\mathcal H}(h)| + \sum_{\substack{i \in [x]\\(i, \ell) \in S_{\mathcal H}}} \abs{\set{(i, j') \in S : h(j') = b}} + \sum_{\substack{i \in [x]\\(i, \ell) \in S}} \abs{\set{(i, j) \in S_{\mathcal H} : h(j) = b}} \\
    \qquad= |C_{\mathcal H}(h)| + \sum_{\substack{i \in [x]\\(i, \ell) \in S_{\mathcal H}}} M[i, b] + \sum_{\substack{i \in [x]\\(i, \ell) \in S}} M_{\mathcal H}[i, b]. \qedhere
\end{gather*}
\end{proof}

\begin{lemma}[Running Time of \cref{alg:det-hashing}]
There is an implementation of \cref{alg:det-hashing:line:find-bucket} in \cref{alg:det-hashing} such that the total running time becomes $\widetilde\Order(z + |S|)$.
\end{lemma}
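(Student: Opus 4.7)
The plan is to implement line \ref{alg:det-hashing:line:find-bucket} by a binary search driven by \cref{lem:det-hashing-identity}. Since $|C_{\mathcal H}(h)|$ is independent of the candidate bucket $b$, the identity reduces finding a below-average bucket to finding a $b^*$ with $f(b^*) \leq \tfrac{1}{2s} \sum_b f(b)$, where
\begin{equation*}
    f(b) \;=\; \sum_{i:\,(i,j)\in S_{\mathcal H}} M[i,b] \;+\; \sum_{i:\,(i,j)\in S} M_{\mathcal H}[i,b].
\end{equation*}
I then split $[2s]$ into two halves; the averaging argument guarantees that at least one half has average $f$-value no larger than the overall average, and I recurse into that half. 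After $O(\log s)$ rounds the interval collapses to a single bucket $b^*$ satisfying the desired inequality.

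To support the search I maintain, for every row $i$ that ever appears as the first coordinate of some pair in $S$, two segment trees over the buckets $[2s]$ representing $M[i,\cdot]$ and $M_{\mathcal H}[i,\cdot]$ and supporting point updates and range-sum queries in $O(\log s)$ time. Crucially, nodes are allocated lazily (or equivalently, I use balanced BSTs keyed by the buckets actually touched in row $i$), so the total storage is proportional to the number of updates rather than to $x \cdot s$. A single binary-search step for a fixed $j$ in the inner loop queries these trees only for the rows $i$ with $(i,j) \in S$ (each row contributes to exactly one of the two sums in $f(b)$), so the step costs $O(k_j \log s)$ where $k_j = |\{i:(i,j)\in S\}|$. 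A full binary search of depth $O(\log s)$, followed by the $O(k_j)$ point updates triggered by committing $h(j) \gets b^*$, therefore costs $O(k_j \log^2 s)$. Summing $\sum_j k_j = |S|$ gives $\widetilde{O}(|S|)$ per outer round of \cref{alg:det-hashing}.

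Between two outer rounds I need to (i) recompute $S_{\mathcal H}$ by removing the pairs that became isolated, and (ii) reset $M$ and $M_{\mathcal H}$ to match the new all-$\bot$ hash function. For (i) I rely on the observation that a pair $(i,j) \in S_{\mathcal H}$ is isolated under the just-chosen $h$ iff $M[i,h(j)] = 1$ at the end of the round; testing this for every active pair costs $O(|S_{\mathcal H}|)$ after the hash values are recorded. For (ii) I keep a log of all touched $(i,b)$ entries and undo them, for $\widetilde{O}(|S|)$ per round. By \cref{lem:det-hashing-correctness} there are $O(\log |S|)$ outer rounds, so both contributions together sum to $\widetilde{O}(|S|)$. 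Adding the unavoidable $\widetilde{O}(z)$ cost of iterating $j$ from $1$ to $z$ in every round (and of writing out the $O(\log |S|)$ hash functions, each of description length $z$) yields the claimed $\widetilde{O}(z + |S|)$ bound.

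The main obstacle is the storage/time blow-up that a naive implementation of the per-row segment trees would incur: a dense tree of size $2s$ for each of the up to $|S|$ distinct rows would cost $\Omega(xs)$, which can be as large as $|S|^2$. The lazy-allocation (or sparse-map) trick is therefore essential, and one must check that range queries over a lazily-built tree correctly treat uninstantiated subtrees as zero so that both queries and updates cost only $O(\log s)$ per touched entry.
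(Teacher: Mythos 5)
Your proposal is correct and follows essentially the same route as the paper: a binary search over the buckets driven by the identity of \cref{lem:det-hashing-identity}, with segment trees answering the range-sum queries in polylogarithmic time, and the same per-iteration accounting of $O(1 + |\set{i : (i,j) \in S}|\cdot \polylog(s))$ summed over $j$ and over the $O(\log|S|)$ rounds; your explicit treatment of lazy/sparse tree allocation and of the inter-round resets fills in implementation details the paper leaves implicit. (One inconsequential slip: a row $i$ with $(i,j) \in S_{\mathcal H} \subseteq S$ contributes to \emph{both} sums in your $f(b)$, not exactly one, but this only affects constants.)
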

\begin{proof}
Focus on any round of \cref{alg:det-hashing}, and the $j$-th iteration of the inner loop. We find a bucket $b^*$ satisfying $|C_{\mathcal H}(h [j \mapsto b^*])| \leq \Ex_{b \in [2s]} |C_{\mathcal H}(h [j \mapsto b])|$ by binary search as follows. We maintain a search interval $B$ and the invariant
\begin{equation*}
    \Ex_{b \in B} |C_{\mathcal H}(h[j \mapsto b])| \leq \Ex_{b \in [2s]} |C_{\mathcal H}(h[j \mapsto b])|.
\end{equation*}
Initially, $B = [2s]$ and thus the invariant holds trivially. In each step, we split $B = B_1 \cup B_2$ into roughly equal-sized parts and recur on the half $k$ with the smaller value $\Ex_{b \in B_k} |C_{\mathcal H}(h[j \mapsto b])|$. It is clear that this half maintains the invariant. To compute these values, we use that for any interval we can express
\begin{gather*}
    \Ex_{b \in B} |C_{\mathcal H}(h[j \mapsto b])| = \frac1{|B|}\sum_{b \in B} |C_{\mathcal H}(h[j \mapsto b])| \\
    \qquad= |C_{\mathcal H}(h)| + \frac{1}{|B|} \sum_{\substack{i \in [x]\\(i, j) \in S_{\mathcal H}}} \sum_{b \in B} M[i, b] + \frac{1}{|B|} \sum_{\substack{i \in [x]\\(i, j) \in S}} \sum_{b \in B} M_{\mathcal H}[i, b]
\end{gather*}
by \cref{lem:det-hashing-identity}. The first term in the sum is the same for both halves $B_1$ and $B_2$ and can be ignored. The second and third terms can be computed in time $\abs{\set{i : (i, j) \in S}} \cdot \polylog(s)$, using that the inner sums can be evaluated in polylogarithmic time by the segment trees.

Let us analyze the total running time. The algorithm runs for $\Order(\log |S|)$ rounds. In each round, we run~$z$ iterations, where the $j$-th iteration takes time $\Order(1 + \abs{\set{i : (i, j) \in S}} \cdot \polylog(s))$. Therefore, across all iterations the running time is bounded by \smash{$\Order(z + |S| \polylog(s)) = \widetilde\Order(z + |S|)$}. The total number of updates to the matrices $M$ and $M_{\mathcal H}$ is $|S|$ each, and therefore the overhead due to maintaining $M$, $M_{\mathcal H}$ and the segment tree becomes \smash{$\widetilde\Order(|S|)$}, too.
\end{proof}

We remark that the application of \cref{lem:det-hashing} in the recovery algorithm (\cref{alg:recover}) incurs a running time overhead of \smash{$\sum_\ell \widetilde\Order(z + x_\ell z_\ell) = \sum_\ell \widetilde\Order(z + \frac{|S|}{2^\ell} \cdot 2^{\ell}) = \widetilde\Order(\IN + |S|)$}. This does not increase the running time asymptotically.

\subsection{Approximating the Support} \label{sec:densification:sec:support}
We are ready to complete the proofs of the densification lemmas. By the previous section it suffices to compute a small over-approximation of the support $\supp(A B)$, and the main goal of this section is to prove that this approximation can be computed recursively.

\paragraph{Warm-Up: Nonnegative Integer Matrices}
We start with the following easy deterministic densification for nonnegative integer matrices (in particular, for Boolean matrices). The advantage of nonnegative matrices is that we can conveniently avoid cancellations. 

\begin{lemma}[Deterministic Densification for Nonnegative Matrices] \label{lem:densification-nonnegative}
Suppose there is a deterministic algorithm $\mathcal A$ for nonnegative integer matrix multiplication with input size $\IN$ in time $T_{\mathcal A}(x, y, z, \IN)$. Then there is a deterministic algorithm for nonnegative integer matrix multiplication with input size $\IN$ and output size $\OUT$ in time:
\begin{equation*}
    \widetilde\Order\parens*{\max_{\substack{x' \leq x, z' \leq z\\x' \cdot z' \leq 8\OUT}} T_{\mathcal A}(x', y, z', \IN)}.
\end{equation*}
\end{lemma}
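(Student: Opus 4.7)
The plan is to invoke \cref{lem:recover} with a carefully constructed superset $S$ of $\supp(AB)$ of size $|S| \le 2\OUT$; doing so immediately yields an algorithm with the claimed running time. The nontrivial task is to produce such an $S$ without knowing $\OUT$ in advance, which I accomplish by a logarithmic recursion that crucially exploits nonnegativity to rule out cancellations.

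Concretely, assume without loss of generality that $x$ is a power of two (by padding with zero rows). If $x = 1$ then $AB$ can be evaluated directly in $\Order(\IN + z)$ time; this is the base case. Otherwise, form the $(x/2) \times y$ matrix $A'$ whose $i'$-th row is the sum of the $(2i'-1)$-th and $2i'$-th rows of $A$, and recursively compute $C' = A' B$. Because the entries of $A$ and $B$ are nonnegative, no cancellations can occur, so $A'$ has at most $\IN$ nonzeros; and since $C'[i', j] \neq 0$ implies $(AB)[2i'-1, j] \neq 0$ or $(AB)[2i', j] \neq 0$, we also have $|\supp(C')| \le \OUT$. Thus the recursive instance preserves both the input- and output-size parameters. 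Now define
\begin{equation*}
    S = \set*{(2i'-1, j),\, (2i', j) : (i', j) \in \supp(C')}.
\end{equation*}
Then $\supp(AB) \subseteq S$ by construction, and $|S| \le 2|\supp(C')| \le 2\OUT$. A single call $\Recover(A, B, S)$ now returns $AB$.

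For the running time, let $T(x)$ denote the time on an instance with row count $x$. Each recursive level halves~$x$ while preserving $y, z, \IN$, and $\OUT$, so the recursion bottoms out after $\Order(\log x) = \Order(\log \IN)$ levels. At each level the dominant cost is the call to $\Recover$, which by \cref{lem:recover} runs in time
\begin{equation*}
    \widetilde\Order\parens*{\max_{\substack{x' \le x, z' \le z\\x'z' \le 4|S|}} T_{\mathcal A}(x', y, z', \IN)} \;\le\; \widetilde\Order\parens*{\max_{\substack{x' \le x, z' \le z\\x'z' \le 8\OUT}} T_{\mathcal A}(x', y, z', \IN)}.
\end{equation*}
Summing over the $\Order(\log \IN)$ levels only introduces another logarithmic factor, which is absorbed into the $\widetilde\Order$ notation, proving the claimed bound. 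The only technical subtlety is that merging rows can inflate entry magnitudes; but each level at most doubles the maximum entry, so after $\Order(\log x)$ levels entries grow by a factor of at most $x$, which widens the machine word by only an additive $\Order(\log x)$ bits and is thus harmless in our model. This bookkeeping issue is the only place where nonnegativity---rather than, say, Boolean-ness---is used in an essential way, and it is precisely what prevents a direct adaptation of the argument to signed integers (handled later by the randomized densification lemmas).
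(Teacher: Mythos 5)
Your proposal is correct and follows essentially the same route as the paper: recursively halve the number of rows by summing adjacent row pairs (using nonnegativity to rule out cancellations), use the recursively computed support to build a superset $S$ with $|S| \le 2\OUT$, and finish with a single call to $\Recover$, paying only an $\Order(\log \IN)$ factor for the recursion depth. The extra remark on entry growth is a valid (and harmless) bookkeeping point that the paper only makes explicit in the integer variant.
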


\begin{algorithm}[t]
\caption{A deterministic algorithm to multiply sparse nonnegative matrices $A \in \Nat^{x \times y},\, B \in \Nat^{y \times z}$; see \cref{lem:densification-nonnegative}.} \label{alg:densification-det}
\begin{algorithmic}[1]
\Procedure{MultiplyNonnegative}{$A, B$}
    \If{$x \leq 1$}
        \State Solve the problem naively in time $\Order(\IN)$
    \Else
        \State Let $A'$ be the $\ceil{\frac x2} \times y$ matrix defined by $A'[i, k] = A[2i-1, k] + A[2i, k]$
        \State Compute $C' \gets \MultiplyNonnegative(A', B)$ recursively
        \State Compute $S \gets \set{(2i-1, j), (2i, j) : (i, j) \in \supp(C')}$
        \State\Return \Call{Recover}{$A, B, S$}
    \EndIf
\EndProcedure
\end{algorithmic}
\end{algorithm}

\begin{proof}
Throughout, let $A \in \Nat^{x \times y}$ and $B \in \Nat^{y \times z}$ denote the given matrices. Our goal is to compute their product $C = A B$ by an algorithm \Call{MultiplyNonnegative}{$A, B$}; see the pseudocode in \cref{alg:densification-det}. If~\makebox{$x \leq 1$}, then the problem deforms to a vector-matrix multiplication that is trivially solvable in linear time~$\Order(\IN)$. So assume that $x > 1$. By \cref{lem:recover} it suffices to compute a set $S \subseteq [x] \times [z]$ satisfying the following two properties:
\begin{enumerate}[label=(\roman*)]
    \item $S \supseteq \supp(C)$, and
    \item $|S| \leq 2|\supp(C)| \leq 2\OUT$.
\end{enumerate}
We will compute this set~$S$ by calling $\MultiplyNonnegative$ recursively. Specifically, let $A'$ be the~$\ceil{\frac x2} \times y$ matrix defined by $A'[i, k] = A[2i - 1, k] + A[2i, k]$. (In case that there is an overflow and $2i > x$, simply treat~$A[2i, k]$ as zero.) We recursively compute~$C' = A' B$, and let $S = \set{(2i - 1, j), (2i, j) : (i, j) \in \supp(C')}$. The interesting part is to prove the two properties.
\begin{enumerate}[label=(\roman*)]
    \item By definition we have that
    \begin{gather*}
        C'[i, j] = \sum_{k \in [y]} A'[i, k] \cdot B[k, j] \\
        \qquad= \sum_{k \in [y]} (A[2i - 1, k] + A[2i, k]) \cdot B[k, j] \\
        \qquad= C[2i - 1, j] + C[2i, j].
    \end{gather*}
    Hence, whenever $(2i - 1, j) \in \supp(C)$ or $(2i, j) \in \supp(C)$ then have $C'[i, j] > 0$---here, we use that the matrices are nonnegative. In this case, it follows that $(i, j) \in \supp(C')$ and that \makebox{$(2i - 1, j), (2i, j) \in S$}.
    \item From the same calculation one can see that $(i, j) \in \supp(C')$ only if at least one of the pairs $(2i - 1, j)$ and $(2i, j)$ is part of $\supp(C)$. It follows that $|S| = 2|\supp(C')| \leq 2|\supp(C)| \leq 2\OUT$, which proves property (ii).
\end{enumerate}
It remains to analyze the running time. The running time is dominated by the call to $\Recover$ with running time bounded by
\begin{equation*}
    \widetilde\Order\parens*{|S| + \max_{\substack{x' \leq x, z' \leq z\\x' \cdot z' \leq 4|S|}} T_{\mathcal A}(x', y, z', \IN)} = \widetilde\Order\parens*{\max_{\substack{x' \leq x, z' \leq z\\x' \cdot z' \leq 8\OUT}} T_{\mathcal A}(x', y, z', \IN)},
\end{equation*}
where the term $|S| = \Order(\OUT)$ is dominated by the running time of $\mathcal A$. $\MultiplyNonnegative$ reaches a recursion depth of at most~$\Order(\log x) = \Order(\log \IN)$ which only incurs a logarithmic factor.
\end{proof}

\paragraph{Integer Matrices}
This approach is promising to extend to integers and arbitrary rings, but there is a difficulty: Suppose that in the product matrix $C = AB$ there are two entries $C[2i - 1, j] = -C[2i, j]$. Then, if we compute $A'$ and $C'$ as before, these two entries \emph{cancel} leading to $C'[i, j] = 0$. We have therefore lost the support elements $(2i - 1, j), (2i, j)$ and the set $S$ is incorrect. While it is hard to deal with these cancellations deterministically, we show how to remedy this state of affairs by exploiting randomization.

\begin{lemma}[Randomized Densification for Integer Matrices] \label{lem:densification-integer}
Suppose there is an algorithm $\mathcal A$ for integer matrix multiplication with input size $\IN$ in time $T_{\mathcal A}(x, y, z, \IN)$. Then there is a randomized algorithm for integer matrix multiplication with input size $\IN$ and output size $\OUT$ in time:
\begin{equation*}
    \widetilde\Order\parens*{\max_{\substack{x' \leq x, z' \leq z\\x' \cdot z' \leq 8\OUT}} T_{\mathcal A}(x', y, z', \IN)}.
\end{equation*}
\end{lemma}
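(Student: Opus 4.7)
The plan is to mirror the recursive structure of \cref{lem:densification-nonnegative} while replacing its deterministic row-merging step $A'[i,k] = A[2i-1,k] + A[2i,k]$ by a \emph{random linear combination} that, with high probability, destroys no support element via cancellation. Concretely, I would sample fresh independent weights $r_1, \dots, r_x$ uniformly from $\mathbb{Z}/p\mathbb{Z}$ for a prime $p$ chosen at random from a range of size $\poly(\IN, \Delta)$, define
\[
    A'[i,k] \;\equiv\; r_{2i-1} \cdot A[2i-1,k] + r_{2i} \cdot A[2i,k] \pmod p,
\]
recursively compute $C' = A'B \bmod p$, form the candidate support $S = \set{(2i-1, j), (2i, j) : (i, j) \in \supp(C')}$, and finally invoke $\Recover(A, B, S)$ on the \emph{original} integer matrices to recover $AB$.

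For correctness one needs $S \supseteq \supp(AB)$ with high probability and $|S| \leq 2\OUT$ deterministically. The latter is immediate: $C'[i, j] \equiv r_{2i-1} (AB)[2i-1, j] + r_{2i} (AB)[2i, j] \pmod p$, so $(i,j) \in \supp(C')$ forces at least one of $(2i-1, j), (2i, j)$ to lie in $\supp(AB)$, giving $|\supp(C')| \leq \OUT$. The former is precisely where cancellations derailed the nonnegative proof: a support element $(2i-1, j) \in \supp(AB)$ (analogously for $(2i, j)$) is hidden from $C'$ only when the affine form $r_{2i-1} (AB)[2i-1, j] + r_{2i} (AB)[2i, j]$ vanishes modulo $p$, which by a Schwartz--Zippel-style argument happens with probability at most $2/p$ provided $p$ does not simultaneously divide $(AB)[2i-1, j]$ and $(AB)[2i, j]$. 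A standard random-prime argument (noting that entries of $AB$ are bounded by $\IN \cdot \Delta^2$ in absolute value) ensures this side condition with high probability, and a union bound over the $\leq \IN^2$ positions gives overall success probability $1 - \IN^{-\Omega(1)}$.

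The step I expect to be the main obstacle is the interplay between recursion and entry size: iterating the weighting naively over $\Int$ makes entries grow by a factor of $p$ per level, blowing them up to size $p^{\Order(\log \IN)}$ and breaking the assumed word-size bound. The cure is to perform all $\Order(\log \IN)$ recursive levels over the \emph{same} field $\mathbb{Z}/p\mathbb{Z}$; every intermediate matrix has entries in $\set{0, \dots, p-1}$, so the recursive calls still fit the integer-matrix-multiplication interface of $\mathcal A$ with $\polylog(\IN)$-size entries, and only the outermost $\Recover(A, B, S)$ operates over the true integers. Combined with \cref{lem:recover} applied at the top level, the total running time is
\[
    \widetilde\Order\parens*{\max_{\substack{x' \leq x,\, z' \leq z\\x' \cdot z' \leq 8\OUT}} T_{\mathcal A}(x', y, z', \IN)},
\]
as claimed, with the $\Order(\log \IN)$ recursion depth and polylogarithmic bookkeeping absorbed into $\widetilde\Order$.
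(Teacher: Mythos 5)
Your proposal is correct and follows essentially the same route as the paper: recursively halve $x$ by merging paired rows with a random linear combination so that cancellations destroy a support element only with polynomially small probability, then call $\Recover$ at each level. The only substantive difference is in the randomization: the paper uses a single random integer weight $r \in [\IN^{10}]$ and multiplies over $\Int$, accepting that entries grow by a factor $\poly(\IN)$ per level (costing a polylogarithmic overhead in word size), whereas your per-row weights over $\Int/p\Int$ for a random prime $p$ keep all intermediate entries bounded and sidestep that blowup at the price of the extra random-prime argument; both choices are valid and yield the claimed bound.
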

\begin{proof}
This proof is almost exactly as in \cref{lem:densification-nonnegative}, except for the following difference: We define the matrix $A'$ by $A'[i, k] = A[2i - 1, k] + r \cdot A[2i, k]$, where $r$ is a random integer in the range $[\IN^{10}]$. The only error event is that there are nonzero entries $C[2i - 1, j]$ and $C[2i, j]$ such that $C[2i - 1, j] + r \cdot C[2i, j] = 0$ (in which case we incorrectly miss the support elements $(2i - 1, j)$ and $(2i, j)$ in $S$). For any such pair, there is at most one bad value of $r$ and thus, this event happens with probability at most $\IN^{-10}$. Taking a union bound over the at most $xz \leq \IN^2$ pairs, the error event happens with small probability $\IN^{-8}$. (The constant~$8$ can be arbitrarily larger.)

There is a subtle consequence: With each recursive call, we increase the entries in $A$ by a factor $\poly(\IN)$. Therefore, at the deepest level of the recursion, the numbers have increased by a factor \smash{$\IN^{\Order(\log \IN)}$}. To represent such numbers, we need $\Order(\log \IN)$ memory cells per number, which worsens the time complexity by a polylogarithmic factor.
\end{proof}

\paragraph{Matrices over Arbitrary Rings}
So far, we have exploited the specific structure of the integers (namely, that any univariate linear equation has at most one integer solution, which is most likely not the randomly chosen value $r$). For arbitrary rings, we can neither assume any structure, nor that we have access to random ring elements. Densification is still possible, using the following simple key lemma:

\begin{lemma} \label{lem:ring-random-zero}
Let $R$ be a ring and let $a_1, \dots, a_w \in R$ at least one of which is nonzero. For a uniformly random subset $I \subseteq [w]$, the probability that $\sum_{i \in I} a_i = 0$ is at most $\frac12$.    
\end{lemma}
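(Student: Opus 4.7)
The plan is to use a classical pairing (fixed-point-free involution) argument on the power set $2^{[w]}$. First, I would fix an index $j \in [w]$ with $a_j \neq 0$, which exists by hypothesis. Then I would define the involution $\phi : 2^{[w]} \to 2^{[w]}$ by $\phi(I) = I \triangle \{j\}$ (symmetric difference), which has no fixed points and pairs up the $2^w$ subsets into $2^{w-1}$ disjoint pairs $\{I, \phi(I)\}$.

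The key observation is that within each pair the two subset-sums differ by $\pm a_j$: writing $s(I) = \sum_{i \in I} a_i$, we have $s(\phi(I)) = s(I) + a_j$ if $j \notin I$ and $s(\phi(I)) = s(I) - a_j$ if $j \in I$. In either case $s(I)$ and $s(\phi(I))$ cannot both vanish, since otherwise we would obtain $a_j = 0$, contradicting our choice of $j$. Note that this uses nothing about $R$ beyond being a ring (no commutativity, no cancellation laws beyond the tautological $0 + a_j = a_j \neq 0$), so the argument works verbatim over an arbitrary ring.

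Therefore, within each of the $2^{w-1}$ pairs at most one subset has zero sum, so the total number of subsets $I \subseteq [w]$ with $\sum_{i \in I} a_i = 0$ is at most $2^{w-1}$. Dividing by the total $2^w$ subsets gives the claimed probability bound of $\tfrac12$. There is no real obstacle here; the only thing worth flagging is that the argument is genuinely one-sided (it gives $\le \tfrac12$ rather than exactly $\tfrac12$), since $s(I) = 0$ might hold for zero out of the two elements of a pair, e.g.\ when $a_j$ is the only nonzero element and $|I| \neq 1$.
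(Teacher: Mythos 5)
Your proof is correct and is essentially the same argument as the paper's: the paper fixes a nonzero $a_1$, conditions on $I_2 = I \cap \{2,\dots,w\}$, and observes that at most one of the two choices for whether $1 \in I$ can make the sum vanish, which is exactly your pairing by $I \mapsto I \triangle \{j\}$ phrased probabilistically rather than as a counting involution. Your side remark that commutativity is not needed is also consistent with the paper's setting (ring addition is abelian in any case).
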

\begin{proof}
Without loss of generality, assume that $a_1 \neq 0$. Writing $I_1 = I \cap \set{1},\, I_2 = I \cap \set{2, \dots, w}$, note that the random sets $I_1$ and $I_2$ are independently selected. We therefore consider $I_2$ fixed and only rely on the randomness of $I_1$. The condition $\sum_{i \in I} a_i$ can be equivalently rewritten as $a_1 |I_1| = -\sum_{i \in I_2} a_i$ where $|I_1|$ is either $0$ or $1$ with equal probabilities \smash{$\frac12$}.  We thus distinguish two cases: If $-\sum_{i \in I_2} a_i = 0$, then we succeed in the event that $I_1$ is nonempty, and if $-\sum_{i \in I_2} a_i \neq 0$, then we succeed in the event that $I_1$ is empty.
\end{proof}

This lemma leads to the following densification result for arbitrary rings:

\lemdensificationrandom*

\begin{algorithm}[t]
\caption{A randomized algorithm to multiply sparse matrices $A \in R^{x \times y},\, B \in R^{y \times z}$ for general rings~$R$; see \cref{lem:densification-rings}. In the algorithm, we use the parameter \smash{$w = 2^{\sqrt{\log x}}$}.} \label{alg:densification-random}
\begin{algorithmic}[1]
\Procedure{Multiply}{$A, B$}
    \If{$x \leq 1$}
        \State Solve the problem naively in time $\Order(\IN)$
    \Else
        \For{$\ell \gets 1, \dots, L := 10 \ceil{\log(\IN)}$}
            \State Sample a random subset $I_\ell \subseteq [w]$
            \State Let $A_\ell$ be the $\ceil{\frac{x}{w}} \times y$ matrix defined by
            \medskip
            \Statex[4] $\displaystyle A_\ell[i, k] = \sum_{i' \in I_\ell} A[(i - 1)w + i', k]$
            \medskip
            \State $C_\ell \gets \textsc{Multiply}(A_\ell, B)$
            \State $S_\ell \gets \set{((i-1)w + i', j) : (i, j) \in \supp(C_\ell),\, i' \in [w]}$
        \EndFor
        \State $S \gets S_1 \cup \dots \cup S_L$
        \State\Return \Call{Recover}{$A, B, S$}
    \EndIf
\EndProcedure
\end{algorithmic}
\end{algorithm}

\begin{proof}
The idea is very similar to \cref{lem:densification-nonnegative}. We design an algorithm $\Multiply$; see the pseudocode in \cref{alg:densification-random}. If $x \leq 1$, then the problem is trivially solvable in time $\Order(\IN)$, so suppose otherwise. Let $w$ be a parameter to be fixed later. As before, our goal is to compute a set $S \subseteq [x] \times [z]$ satisfying the following two properties so that the matrix product can be computed by $\Recover$ (\cref{lem:recover}):
\begin{enumerate}[label=(\roman*)]
    \item $S \supseteq \supp(C)$ with high probability, and
    \item $|S| \leq w \cdot |\supp(C)| \leq w \OUT$.
\end{enumerate}
To compute $S$, we repeat the following steps for $\ell \gets 1, \dots, L := 10 \ceil{\log(\IN)}$: Sample a uniformly random subset $I_\ell \subseteq [w]$, and let $A_\ell$ be the $\ceil{\frac xw} \times y$ matrix defined by $A'[i, k] = \sum_{i' \in I_\ell} A[(i - 1)w + i', k]$ (again we ignore overflows and assume that $A$ is zero outside of its bounds). By calling $\Multiply$ recursively, we compute the matrix product $C' = A' B$. Then we pick $S_\ell = \set{((i-1) w + i', j) : (i, j) \in \supp(C'), i' \in [w]}$. Finally, take $S = S_1 \cup \dots \cup S_L$. We show that both properties (i) and (ii) are satisfied by this construction:
\begin{enumerate}[label=(\roman*)]
    \item Consider any support element from $\supp(C)$, written as $((i - 1) w + i', j)$ for $i \in [x],\, i' \in [w],\, j \in [z]$. We prove that with high probability, $((i - 1)w + i', j) \in S$. First, focus on any repetition $\ell$. By the definition of $A_\ell$ and $C_\ell$, we have
    \begin{gather*}
        C_\ell[i, j] = \sum_{k \in [y]} A_\ell[i, k] \cdot B[k, j] \\
        \qquad= \sum_{k \in [y]} \parens*{\sum_{i' \in I_\ell} A[(i-1)w + i']} \cdot B[k, j] \\
        \qquad= \sum_{i' \in I_\ell} C[(i-1)w + i', j].
    \end{gather*}
    Recall that $I_\ell \subseteq [w]$ is a uniformly random subset and that at least one of the terms in the sum is nonzero (namely, $C[(i - 1)w + i', j]$). Therefore, by \cref{lem:ring-random-zero}, $C_\ell[i, j]$ is nonzero with probability at most $\frac12$. Recall that in this case we include $((i - 1) + i', j)$ in $S_\ell$. It follows that $((i - 1) + i', j) \in S$ with high probability \smash{$1 - 2^{-L} \geq 1 - \IN^{-10}$}.
    \item From the same calculation we conclude that $(i, j) \in \supp(C_\ell)$ only if there is some $i' \in [w]$ such that $((i - 1)w + i', j) \in \supp(C)$. It follows that all sets $S_1, \dots, S_L$ (and thus also $S$) are contained in a superset of $\supp(C)$ of size at most $w \cdot |\supp(C)| = w \OUT$. 
\end{enumerate}
This completes the correctness proof, but it remains to analyze the running time. Ignoring the cost of the recursive calls, the running time of $\Multiply$ is dominated by calling $\Recover$. Each such call runs in time
\begin{equation*}
    \widetilde\Order\parens*{|S| + \max_{\substack{x' \leq x, z' \leq z\\x' \cdot z' \leq 4|S|}} T_{\mathcal A}(x', y, z', \IN)} = \widetilde\Order\parens*{\max_{\substack{x' \leq x, z' \leq z\\x' \cdot z' \leq 4w \OUT}} T_{\mathcal A}(x', y, z', \IN)}.
\end{equation*}
However, this time the recursive calls are considerably more costly than in \cref{lem:densification-nonnegative}: The recursion reaches depth $\Order(\log_w(x)) = \Order(\log_w(\IN))$ and each node in the recursion tree branches with degree $L = \Order(\log \IN)$. It follows that the number of nodes in the recursion tree is bounded by $(\log \IN)^{\Order(\log_w \IN)}$. Thus, by setting~\smash{$w = 2^{\sqrt{\log \IN}}$}, the total running time is
\smallskip
\begin{gather*}
    (\log \IN)^{\Order(\log_w \IN)} \cdot \!\!\!\max_{\substack{x' \leq x, z' \leq z\\x' \cdot z' \leq 4w \OUT}} T_{\mathcal A}(x', y, z', \IN) \\
    \qquad= 2^{\Order(\sqrt{\log \IN} \log\log \IN)} \cdot \!\!\!\!\!\!\!\!\!\max_{\substack{x' \leq x, z' \leq z\\x' \cdot z' \leq \OUT \cdot 2^{\sqrt{\log \IN} + 2}}} T_{\mathcal A}(x', y, z', \IN),
\end{gather*}
which is as claimed.
\end{proof}
\section{The Complete Algorithm} \label{sec:input-sparse}
In this section we complete our algorithm for sparse matrix multiplication. We structure this section as follows: We first give the heavy/light algorithm for input-sparse matrix multiplication in \cref{sec:input-sparse:sec:algo}, and combine this algorithm with the densification lemmas to obtain our main results in \cref{sec:input-sparse:sec:main-theorems}. Both algorithms rely on our sparse matrix multiplication exponent $\sigma(r)$ as defined in the following definition:\footnote{Strictly speaking, $\sigma(r)$ also depends on the underlying ring. That is, for a ring $R$ we define $\sigma_R(r)$ as the unique solution to the equation $\omega_R(\sigma - 1, 2 - \sigma, 1 + r - \sigma) = \sigma$. As for $\omega$, we typically omit the subscript when the ring $R$ is clear.}

\defsparseexp*

In \cref{sec:input-sparse:sec:sparse-exp} we analyze this exponent $\sigma(r)$ in detail: We argue that $\sigma(r)$ is well-defined, and derive readable upper bounds on $\sigma(r)$.

\subsection{Input-Sparse Matrix Multiplication} \label{sec:input-sparse:sec:algo}
The goal of this section is to design a heavy/light algorithm for input-sparse matrix multiplication. The analysis of this algorithm (in the heavy case) hinges on the following lemma:

\begin{lemma} \label{lem:rectangular-hardest}
Let $a, b, c \geq 0$ with $a \leq c$. Then, for any $0 \leq \delta \leq \frac{c-a}{2}$, it holds that $\omega(a + \delta, b, c - \delta) \leq \omega(a, b, c)$.
\end{lemma}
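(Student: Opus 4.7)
The plan is to express $(a+\delta, b, c-\delta)$ as a convex combination of $(a, b, c)$ and its permutation $(c, b, a)$, and then invoke subadditivity (\cref{fact:omega-subadditive}) together with scaling (\cref{fact:omega-scale}) and permutation invariance (\cref{fact:omega-permute}) of the exponent. This reflects the intuition that multiplying a more ``balanced'' rectangular matrix should never be harder than multiplying a less balanced one of the same total volume, because one can ``reshape'' a balanced product by combining two oppositely-skewed products.

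Concretely, I would set $\lambda = 1 - \tfrac{\delta}{c-a}$ (handling the degenerate case $a = c$, where $\delta = 0$ and the claim is trivial, separately). The hypothesis $0 \le \delta \le \tfrac{c-a}{2}$ gives $\lambda \in [\tfrac12, 1]$, so $\lambda, 1-\lambda \ge 0$. A direct check shows
\begin{equation*}
(a+\delta,\, b,\, c-\delta) \;=\; \bigl(\lambda a + (1-\lambda) c,\; \lambda b + (1-\lambda) b,\; \lambda c + (1-\lambda) a\bigr).
\end{equation*}
Applying \cref{fact:omega-subadditive} to this decomposition yields
\begin{equation*}
\omega(a+\delta, b, c-\delta) \;\le\; \omega(\lambda a, \lambda b, \lambda c) + \omega((1-\lambda) c, (1-\lambda) b, (1-\lambda) a).
\end{equation*}
Then \cref{fact:omega-scale} pulls out the scalars $\lambda$ and $1-\lambda$, and \cref{fact:omega-permute} identifies $\omega(c, b, a) = \omega(a, b, c)$, giving
\begin{equation*}
\omega(a+\delta, b, c-\delta) \;\le\; \lambda \cdot \omega(a, b, c) + (1-\lambda) \cdot \omega(a, b, c) \;=\; \omega(a, b, c),
\end{equation*}
as desired.

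There is no real obstacle here; the only subtlety is to make sure the convex combination has nonnegative coefficients, which is exactly what the constraint $\delta \le \tfrac{c-a}{2}$ (together with $\delta \ge 0$) guarantees. The proof is short and entirely black-box from the four listed facts about $\omega(\cdot,\cdot,\cdot)$, so it works uniformly over the ring $R$.
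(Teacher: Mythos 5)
Your proof is correct and is essentially identical to the paper's: both write $(a+\delta, b, c-\delta)$ as the convex combination $\lambda(a,b,c) + (1-\lambda)(c,b,a)$ with $\lambda = 1 - \tfrac{\delta}{c-a}$ and conclude via subadditivity, scaling, and permutation invariance. (A minor aside: only $\delta \le c-a$ is needed for nonnegativity of the coefficients, so the hypothesis $\delta \le \tfrac{c-a}{2}$ is stronger than the proof requires.)
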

\begin{proof}
If $a = c$, then also $\delta = 0$ and the claim is trivial. So assume that $c - a > 0$ and let $\epsilon = \frac{\delta}{c-a}$. By \cref{fact:omega-permute,fact:omega-scale}, we have that $\omega((1-\epsilon) a, (1-\epsilon) b, (1 - \epsilon) c) = (1 - \epsilon) \omega(a, b, c)$ and $\omega(\epsilon c, \epsilon b, \epsilon a) = \epsilon \omega(c, b, a) = \epsilon \omega(a, b, c)$. Thus, by \cref{fact:omega-subadditive} it follows that
\begin{gather*}
    \omega(a + \delta, b, c - \delta) = \omega(a + \epsilon (c - a), b, c - \epsilon (c - a)) \\
    \qquad= \omega((1 - \epsilon) a + \epsilon c, (1 - \epsilon) b + \epsilon b, (1 - \epsilon) c + \epsilon a) \\
    \qquad\leq \omega((1 - \epsilon) a, (1 - \epsilon) b, (1 - \epsilon) c) + \omega(\epsilon c, \epsilon b, \epsilon a) \\
    \qquad= (1 - \epsilon) \omega(a, b, c) + \epsilon \omega(a, b, c) \\
    \qquad= \omega(a, b, c),
\end{gather*}
which proves the claim.
\end{proof}

\begin{lemma}[Deterministic Input-Sparse Matrix Multiplication] \label{lem:input-sparse}
Let $R$ be a ring and let $r \in [0, 2]$. Given two matrices $A \in R^{x \times y}$ and $B \in R^{y \times z}$ with input size $\IN$ and \smash{$x z \leq \IN^r$}, we can compute their product $AB$ is in deterministic time \smash{$\Order(\IN^{\sigma(r)+\epsilon})$}, for any $\epsilon > 0$.
\end{lemma}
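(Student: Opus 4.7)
The plan is a heavy/light decomposition on the columns of $A$ (equivalently, on the rows of $B$) with threshold $\Delta := \IN^{\sigma(r)-1}$. Call column $k$ \emph{heavy} if column $k$ of $A$ contains more than $\Delta$ nonzero entries, and \emph{light} otherwise; decompose $A = A_L + A_H$ and $B = B_L + B_H$ along this split of the columns of $A$ and the corresponding rows of $B$, so that $AB = A_L B_L + A_H B_H$. The light product is computed by direct enumeration: for every nonzero entry $B[k, j]$ with $k$ light, loop through the at most $\Delta$ relevant nonzero entries of column $k$ of $A$ and accumulate their contributions. The total work is $\Order(\Delta \cdot \IN) = \Order(\IN^{\sigma(r)})$.

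For the heavy product, the number of heavy columns is at most $h \leq \IN / \Delta = \IN^{2-\sigma(r)}$, and we compute $A_H B_H$ as an $x \times h \times z$ dense rectangular matrix multiplication over $R$. Writing $x = \IN^a$ and $z = \IN^c$, this costs time $\Order(\IN^{\omega(a, 2-\sigma(r), c) + \epsilon/2})$. A useful free bonus: if $x \leq \Delta$, i.e.\ $a \leq \sigma(r)-1$, then every column of $A$ has at most $x \leq \Delta$ nonzeros, hence $A_H$ is empty and the heavy product vanishes; so we may assume $a > \sigma(r)-1$ in what follows.

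The main obstacle is to show that $\omega(a, 2-\sigma(r), c) \leq \sigma(r)$ for every admissible pair $(a, c)$, namely $\sigma(r) - 1 < a \leq 1$, $c \leq 1$, and $a + c \leq r$. By transposing (computing $B^T A^T$ and returning its transpose) we may assume $a \leq c$, hence $a \leq r/2$. Applying the trivial bounds of \cref{fact:omega-trivial-bounds} to the defining equation $\omega(\sigma(r)-1, 2-\sigma(r), 1+r-\sigma(r)) = \sigma(r)$ immediately yields $\sigma(r) \geq r$ and $\sigma(r) \leq 1 + r/2$. The first inequality gives $a > \sigma(r)-1 \geq r-1$, so $c \leq r-a < 1$; thus the constraint $c \leq 1$ is slack and, by monotonicity of $\omega$ in each coordinate (obtained from \cref{fact:omega-trivial-bounds} and padding with zeros), it suffices to consider the extremal case $c = r-a$ for $a \in (\sigma(r)-1, r/2]$. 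This pair is reached from the defining triple $(\sigma(r)-1, 2-\sigma(r), 1+r-\sigma(r))$ by moving the first coordinate up by $\delta := a - (\sigma(r)-1)$ and the third coordinate down by the same $\delta$, and the balancing condition $\delta \leq \bigl((1+r-\sigma(r))-(\sigma(r)-1)\bigr)/2 = 1 + r/2 - \sigma(r)$ required by \cref{lem:rectangular-hardest} is satisfied precisely because $\sigma(r) \leq 1 + r/2$. Therefore $\omega(a, 2-\sigma(r), r-a) \leq \omega(\sigma(r)-1, 2-\sigma(r), 1+r-\sigma(r)) = \sigma(r)$, closing the key inequality.

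Combining the light and heavy contributions yields the claimed $\Order(\IN^{\sigma(r)+\epsilon})$ running time.
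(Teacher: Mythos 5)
Your proposal is correct and matches the paper's proof essentially step for step: the same heavy/light split at threshold $\IN^{\sigma(r)-1}$, the same $\Order(\IN^{\sigma(r)})$ enumeration for the light part, and the same application of \cref{lem:rectangular-hardest} to show the rectangular product of the heavy part has exponent at most $\omega(\sigma(r)-1,2-\sigma(r),1+r-\sigma(r))=\sigma(r)$ (the paper parametrizes $x\le\IN^{\sigma-1+\delta}$, $z\le\IN^{1+r-\sigma-\delta}$ directly rather than reducing to the extremal $c=r-a$ via monotonicity, but this is the identical argument). One wording nit: the balancing condition $\delta\le 1+\tfrac r2-\sigma(r)$ holds because $a\le r/2$ (which you did establish via transposition), not ``precisely because $\sigma(r)\le 1+\tfrac r2$''.
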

\begin{proof}
We assume by symmetry that $x \leq z$ (otherwise we use the identity~$A B = (B^T A^T)^T$ to compute $A B$ and thereby exchange $x$ and $z$). Throughout, let $\sigma = \sigma(r)$ (and recall that $\sigma \geq 1$).

We apply a heavy-light approach: We say that $k \in [y]$ is \emph{light} if \makebox{$\abs{\set{i : A[i, k] \neq 0}} \leq \IN^{\sigma-1}$}, and \emph{heavy} otherwise. Let $y_1$ and~$y_2$ denote the number of light and heavy indices, respectively. We subdivide~$A$ into submatrices~\makebox{$A_1 \in \Int^{x \times y_1}$} and~\makebox{$A_2 \in \Int^{x \times y_2}$}, where the light indices participate in $A_1$ and the heavy indices participate in $A_2$. Subdivide~$B$ similarly into $B_1 \in \Int^{y_1 \times z}$ and~\makebox{$B_2 \in \Int^{y_2 \times z}$}. Then the algorithm executes the following two steps:
\begin{enumerate}
\item Compute $r_1 = A_1 \cdot B_1$ exploiting the sparsities of $A_1$ and $B_1$. Specifically:
\smallskip
\begin{algorithmic}[1]
\makeatletter\def\ALG@step{\normalsize--}\makeatother
\State Initialize $r_1$ as an all-zero $x \times z$ matrix
\For{$(k, j) \in [y_1] \times [z]$ with $B_1[k, j] \neq 0$}
    \For{$i \in [x]$ with $A_1[i, k] \neq 0$}
        \State $r_1[i, j] \gets r_1[i, j] + A_1[i, k] \cdot B_1[k, j]$
    \EndFor
\EndFor
\end{algorithmic}
\item Compute $r_2 = A_2 \cdot B_2$ using fast rectangular matrix multiplication (ignoring the assumption that $A_2$ and $B_2$ are sparse).
\end{enumerate}
We report $C = r_1 + r_2$ as the output.

\medskip
The correctness is fairly obvious: For any subdivision of $[y]$ we have that $A \cdot B = A_1 \cdot B_1 + A_2 \cdot B_2$. It is easy to verify that step 1 correctly computes $r_1 = A_1 \cdot B_1$ and immediate that step 2 correctly computes $r_2 = A_2 \cdot B_2$.

\medskip
The interesting part is to analyze the running time. We start with step~1. The outer loop runs for at most $\IN$ iterations (since $B$ contains at most $\IN$ nonzero entries). And for any~$k$, the inner loop runs for at most~\smash{$\IN^{\sigma-1}$} iterations since $A_1$ corresponds to the light indices. Therefore, the running time of step~1 is bounded by~\smash{$\Order(\IN^{\sigma})$}.

For the second step, we start with some observations: First, if there is no heavy index (i.e., $y_2 = 0$) then step~2 is trivial. We may therefore assume that there is at least one heavy index which implies that~\smash{$x \geq \IN^{\sigma-1}$}. Second, using the assumptions that $x z \leq \IN^r$ and $x \leq z$, there is some \smash{$0 \leq \delta \leq 1 + \frac{r}{2} - \sigma$} such that \smash{$x \leq \IN^{\sigma - 1 + \delta}$} and~\smash{$z \leq \IN^{1 + r - \sigma - \delta}$}. Third, the number of heavy indices is bounded by \smash{$y_2 \leq \IN / \IN^{\sigma-1} = \IN^{2-\sigma}$}. Recall that we compute the matrix product~$r_2 = A_2 \cdot B_2$ using fast matrix multiplication. The running time of this step is thus~\smash{$\Order(\IN^{\omega(\sigma-1+\delta, 2-\sigma, 1+r-\sigma-\delta)+\epsilon})$}, for all $\epsilon > 0$. This exponent can be bounded using \cref{lem:rectangular-hardest}:
\begin{equation*}
    \omega(\sigma - 1 + \delta, 2 - \sigma, 1 + r - \sigma - \delta) \leq \omega(\sigma - 1, 2 - \sigma, 1 + r - \sigma) = \sigma.
\end{equation*}
In the final step we have applied the definition of $\sigma = \sigma(r)$. In summary, the second step runs in time \smash{$\Order(\IN^{\sigma+\epsilon})$}, which completes the proof.
\end{proof}

\subsection{Sparse Matrix Multiplication} \label{sec:input-sparse:sec:main-theorems}
We obtain our main results by a combination of the densification technique from the previous \cref{sec:densification} with the input-sparse algorithm. Specifically, the proofs of our Main \cref{thm:sparse-mm-boolean,thm:integer-bivariate} are a straightforward combination of \cref{lem:input-sparse} with the densification algorithms from \cref{lem:densification-nonnegative,lem:densification-rings}:

\thmsparsemmboolean*

\thmsparsemminteger*

We will later argue that these theorems are conditionally optimal, but at this point it is far from clear how to even interpret the running time \smash{$\IN^{\sigma(r)+\epsilon}$}. Therefore, the next section is devoted to a detailed analysis of~$\sigma(r)$.

\subsection{The Sparse Matrix Multiplication Exponent} \label{sec:input-sparse:sec:sparse-exp}
\paragraph{The Sparse Matrix Multiplication Exponent Is Well-Defined}
Recall that we define $\sigma(r)$ as the unique solution to the equation $\omega(\sigma - 1, 2 - \sigma, 1 + r - \sigma) = \sigma$. However, it is a priori not clear that a solution~$\sigma$ to this equation even exists. Therefore, our first goal is to prove that $\sigma(r)$ is well-defined in the sense that there indeed exists unique solution (see \cref{lem:sparse-exp-well-defined}).

\begin{lemma} \label{lem:sparse-exp-increasing}
For any $r \in [0, 2]$, the function $\sigma - \omega(\sigma - 1, 2 - \sigma, 1 + r - \sigma)$ is strictly increasing for $\sigma \in [1, 1+\frac r2]$.
\end{lemma}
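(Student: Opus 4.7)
The plan is to show, for any $\sigma_1 < \sigma_2$ in $[1, 1+r/2]$, that $\phi(\sigma_1) < \phi(\sigma_2)$, where $\phi(\sigma) = \sigma - \omega(\sigma - 1, 2 - \sigma, 1 + r - \sigma)$. Writing $a = \sigma_1 - 1$, $b = 2 - \sigma_1$, $c = 1 + r - \sigma_1$, and $\delta = \sigma_2 - \sigma_1 > 0$, this rearranges to
\begin{equation*}
\omega(a + \delta,\, b - \delta,\, c - \delta) \;<\; \omega(a, b, c) + \delta.
\end{equation*}

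The heart of the argument handles the ``small-step'' case $\delta \leq m := \min(b, c)$, where $m > 0$ since $\sigma_1 < 1 + r/2 \leq 2$ forces both $b > 0$ and $c > 0$ (via a quick case analysis on $r \in [0, 2]$). I would use the convex decomposition
\begin{equation*}
(a + \delta, b - \delta, c - \delta) \;=\; \bigl(1 - \tfrac{\delta}{m}\bigr)(a, b, c) \;+\; \tfrac{\delta}{m}\,(a + m, b - m, c - m),
\end{equation*}
and combine scaling (\cref{fact:omega-scale}) with subadditivity (\cref{fact:omega-subadditive}) to bound
\begin{equation*}
\omega(a + \delta, b - \delta, c - \delta) \;\leq\; \bigl(1 - \tfrac{\delta}{m}\bigr)\omega(a, b, c) + \tfrac{\delta}{m}\,\omega(a + m, b - m, c - m).
\end{equation*}
Because $m$ equals $b$ or $c$, the second triple has a zero coordinate, so \cref{fact:omega-trivial-bounds} gives $\omega(a + m, b - m, c - m) = a + \max(b, c)$, and the same fact yields $a + \max(b, c) \leq \omega(a, b, c)$. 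The convex combination therefore collapses to $\omega(a + \delta, b - \delta, c - \delta) \leq \omega(a, b, c) < \omega(a, b, c) + \delta$, where the final strict inequality holds simply because $\delta > 0$.

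For arbitrary $\delta$, I would subdivide $[\sigma_1, \sigma_2]$ into $N$ equal subintervals $\sigma_1 = \tau_0 < \tau_1 < \cdots < \tau_N = \sigma_2$ with $N$ large enough that each step length $\delta/N$ is at most $m(\tau_i) := \min(2 - \tau_i, 1 + r - \tau_i)$; since $m(\tau_i)$ is decreasing in $i$, a short calculation shows this reduces to $0 \leq 2 - \sigma_2$ and $0 \leq 1 + r - \sigma_2$, both guaranteed by the range assumption. Chaining the strict inequalities from the small-step case then yields $\phi(\sigma_1) < \phi(\tau_1) < \cdots < \phi(\sigma_2)$. The only delicate point I anticipate is bookkeeping near the corner case $\sigma_2 = 1 + r/2$ with $r = 2$, where one coordinate vanishes at the right endpoint; this is harmless because the small-step argument is only ever invoked starting from some $\tau_i < \sigma_2$, where $m(\tau_i) > 0$.
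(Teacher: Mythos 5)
Your proof is correct, and it establishes the same key intermediate claim as the paper --- that $f(\sigma) = \omega(\sigma-1, 2-\sigma, 1+r-\sigma)$ is nonincreasing on $[1, 1+\tfrac r2]$, from which strict monotonicity of $\sigma - f(\sigma)$ is immediate --- but by a genuinely different decomposition. The paper first restores the middle coordinate ($\omega(\sigma'-1, 2-\sigma', 1+r-\sigma') \leq \omega(\sigma'-1, 2-\sigma, 1+r-\sigma')$, an implicit appeal to coordinate monotonicity of $\omega$) and then invokes \cref{lem:rectangular-hardest}, which shifts weight between the first and third coordinates and is itself proved by writing $(a+\delta, b, c-\delta)$ as a convex combination of $(a,b,c)$ and its reversal $(c,b,a)$ using \cref{fact:omega-permute,fact:omega-scale,fact:omega-subadditive}. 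You instead move all three coordinates at once, writing $(a+\delta, b-\delta, c-\delta)$ as a convex combination of $(a,b,c)$ and the degenerate triple $(a+m, b-m, c-m)$ with $m = \min(b,c)$, whose exponent you pin down exactly as $a + \max(b,c) \leq \omega(a,b,c)$ via \cref{fact:omega-trivial-bounds}. This buys you a self-contained argument that needs neither the permutation symmetry nor coordinate monotonicity (the latter is not among the paper's listed Facts, though standard). Two minor remarks: your subdivision into $N$ subintervals is unnecessary --- the single-step condition $\delta \leq m$ is automatic, since $\delta = \sigma_2 - \sigma_1 \leq \min(2-\sigma_1,\, 1+r-\sigma_1)$ is equivalent to $\sigma_2 \leq \min(2, 1+r)$, which follows from $\sigma_2 \leq 1 + \tfrac r2$; and the final strict inequality comes for free from the explicit $+\delta$ term, exactly as in the paper.
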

\begin{proof}
Let $f(\sigma) = \omega(\sigma - 1, 2 - \sigma, 1 + r - \sigma)$. To prove the claim it suffices to show that $f$ is nonincreasing. And indeed, for any $\sigma, \sigma' \in [1, 1 + \frac{r}{2}]$ with $\sigma \leq \sigma'$ we have
\begin{gather*}
    \omega(\sigma' - 1, 2 - \sigma', 1 + r - \sigma') \\
    \qquad\leq \omega(\sigma' - 1, 2 - \sigma, 1 + r - \sigma') \\
    \qquad= \omega(\sigma - 1 + (\sigma' - \sigma), 2 - \sigma, 1 + r - \sigma - (\sigma' - \sigma)) \\
    \qquad\leq \omega(\sigma - 1, 2 - \sigma, 1 + r - \sigma).
\end{gather*}
Here, in the last step, we have applied \cref{lem:rectangular-hardest} with $\delta = \sigma' - \sigma$; note that this choice satisfies the required condition \smash{$\delta \leq \frac{1 + r - \sigma - (\sigma - 1)}{2} = 1 + \frac{r}{2} - \sigma$}.
\end{proof}

\begin{lemma} \label{lem:sparse-exp-well-defined}
For any $r \in [0, 2]$, there is a unique solution to the equation $\omega(\sigma - 1, 2 - \sigma, 1 + r - \sigma) = \sigma$.
\end{lemma}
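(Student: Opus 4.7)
The plan is to combine the intermediate value theorem (using continuity of $\omega$, \cref{fact:omega-continuous}) with the monotonicity already established in \cref{lem:sparse-exp-increasing}. Set $g(\sigma) := \sigma - \omega(\sigma-1, 2-\sigma, 1+r-\sigma)$ on the natural domain $D = [1, \min(2, 1+r)]$ (the interval on which all three arguments of $\omega$ are nonnegative). Note that $1 + r/2 \in D$ for every $r \in [0, 2]$, and $g$ is continuous on $D$.

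For existence of a zero, I would evaluate $g$ at the two endpoints of the sub-interval $[1, 1+r/2]$. At $\sigma = 1$ the triple is $(0, 1, r)$, and since $\omega(0, b, c) = b+c$ (this is matched both below and above by \cref{fact:omega-trivial-bounds}, corresponding to a vector-matrix product), we obtain $g(1) = 1 - (1+r) = -r \leq 0$. At $\sigma = 1 + r/2$ the triple becomes $(r/2,\, 1-r/2,\, r/2)$, and the trivial upper bound $\omega(a,b,c) \leq a+b+c$ from \cref{fact:omega-trivial-bounds} gives
\begin{equation*}
    \omega(r/2,\, 1-r/2,\, r/2) \;\leq\; r/2 + (1-r/2) + r/2 \;=\; 1 + r/2,
\end{equation*}
so $g(1+r/2) \geq 0$. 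Applying the intermediate value theorem then produces some $\sigma^\star \in [1, 1+r/2]$ with $g(\sigma^\star) = 0$.

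Uniqueness of this zero within $[1, 1+r/2]$ is immediate from \cref{lem:sparse-exp-increasing}, which asserts that $g$ is strictly increasing on that interval. To rule out additional zeros in the remaining part of the domain, $(1+r/2,\, \min(2, 1+r)]$ (which is nonempty only when $r \in (0, 2)$), I would again invoke the trivial upper bound $\omega(a,b,c) \leq a + b + c$: for $\sigma$ in this range,
\begin{equation*}
    \omega(\sigma-1,\, 2-\sigma,\, 1+r-\sigma) \;\leq\; (\sigma-1) + (2-\sigma) + (1+r-\sigma) \;=\; 2 + r - \sigma,
\end{equation*}
and therefore $g(\sigma) \geq 2\sigma - 2 - r > 0$ whenever $\sigma > 1 + r/2$. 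Combined with the previous paragraph, $\sigma^\star$ is the unique zero of $g$ on all of $D$.

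I do not expect any single step to be a real obstacle here, since \cref{lem:sparse-exp-increasing} already does the heavy lifting for monotonicity. The only mildly delicate point is checking that $g(1+r/2) \geq 0$; it is pleasant that the completely elementary bound $\omega(a,b,c) \leq a+b+c$ happens to be exactly tight at the balanced point $\sigma = 1+r/2$, matching the equation $\omega = \sigma$ in the worst case $r = 2$ (where $\omega(1,0,1) = 2$) and giving room to spare otherwise.
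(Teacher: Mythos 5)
Your proof is correct and follows essentially the same route as the paper's: both restrict attention to $[1,\, 1+\tfrac r2]$ via the trivial bounds of \cref{fact:omega-trivial-bounds} (your exclusion of zeros on $(1+\tfrac r2, \min(2,1+r)]$ is just the paper's observation that any solution must satisfy $\sigma \le 2+r-\sigma$, phrased contrapositively), then apply the intermediate value theorem using $g(1)\le 0$ and $g(1+\tfrac r2)\ge 0$ together with continuity (\cref{fact:omega-continuous}), and conclude uniqueness from the strict monotonicity established in \cref{lem:sparse-exp-increasing}. No gaps.
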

\begin{proof}
Fix $r \in [0, 2]$. We start with an observation: By the trivial bounds $a + b \leq \omega(a, b, c) \leq a + b + c$, we have the following lower and upper bounds:
\begin{equation*}
    1 = \sigma - 1 + 2 - \sigma \leq \omega(\sigma - 1, 2 - \sigma, 1 + r - \sigma) \leq \sigma - 1 + 2 - \sigma + 1 + r - \sigma = 2 + r - \sigma.
\end{equation*}
It follows that any feasible solution to the equation $\omega(\sigma - 1, 2 - \sigma, 1 + r - \sigma) = \sigma$ must satisfy $1 \leq \sigma \leq 2 + r - \sigma$ and thus lies in the range $[1, 1 + \frac{r}{2}]$. Next, by the previous lemma the function $g(\sigma) = \sigma - \omega(\sigma - 1, 2 - \sigma, 1 + r - \sigma)$ is strictly increasing on $[1, 1 + \frac{r}{2}]$. Moreover, since $\omega(\cdot, \cdot, \cdot)$ is known to be a continuous function (\cref{fact:omega-continuous}), it follows that also $g$ is continuous. Finally, using again the trivial bounds $b + c \leq \omega(a, b, c) \leq a + b + c$, we can bound~$g$ at
\begin{gather*}
    g(1) = 1 - \omega(0, 1, r) \leq 1 - (1 + r) = -r \leq 0, \\
\intertext{and}
    g(1 + \tfrac r2) = 1 + \tfrac r2 - \omega(\tfrac r2, 1 - \tfrac r2, \tfrac r2) \geq 1 + \tfrac r2 - (\tfrac r2 + 1 - \tfrac r2 + \tfrac r2) = 0.
\end{gather*}
All in all, we showed that $g$ is continuous and strictly increasing on $[1, 1 + \frac{r}{2}]$ and that $g(1) \leq 0$ and $g(1 + \frac{r}{2}) \geq 0$. It follows that there is some $\sigma \in [1 + \frac{r}{2}]$ with $g(\sigma) = 0$, or equivalently, $\omega(\sigma - 1, 2 - \sigma, 1 + r - \sigma) = \sigma$.
\end{proof}

\paragraph{Algebraic Upper Bounds for \boldmath$\sigma(r)$}
We have established that $\sigma(r)$ is well-defined, but it still remains hard to grasp how $\sigma(r)$ behaves as a function of $r$. For this reason, we will now derive some explicit upper bounds on $\sigma(r)$. Our strategy is as follows: We first prove that $\sigma(r)$ is a convex function of $r$. Then, we evaluate $\sigma(r)$ at certain strategic points and conclude by the convexity that the linear interpolation between these points is an upper bound on $\sigma$.

\begin{lemma}[Convexity] \label{lem:sparse-exp-convexity}
The function $\sigma(r)$ is convex.
\end{lemma}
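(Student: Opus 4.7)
The plan is to verify convexity directly from the defining equation of $\sigma(r)$, exploiting that $\omega$ is subadditive (\cref{fact:omega-subadditive}) and scales linearly along rays (\cref{fact:omega-scale}). Fix $r_1, r_2 \in [0, 2]$ and $\lambda \in [0, 1]$, set $\sigma_i = \sigma(r_i)$, $r^\ast = \lambda r_1 + (1 - \lambda) r_2$, and $\sigma^\ast = \lambda \sigma_1 + (1 - \lambda) \sigma_2$. The goal will then be to show $\sigma(r^\ast) \leq \sigma^\ast$.

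The key observation I would make is that the triple $(\sigma - 1,\, 2 - \sigma,\, 1 + r - \sigma)$ appearing in the defining equation is affine in $(\sigma, r)$. Hence the triple at $(r^\ast, \sigma^\ast)$ is coordinatewise the convex combination of the triples at $(r_1, \sigma_1)$ and $(r_2, \sigma_2)$. Splitting this combination as a sum of two scaled triples and then invoking \cref{fact:omega-subadditive} followed by \cref{fact:omega-scale} yields
\begin{align*}
\omega(\sigma^\ast - 1,\, 2 - \sigma^\ast,\, 1 + r^\ast - \sigma^\ast)
&\leq \lambda\, \omega(\sigma_1 - 1,\, 2 - \sigma_1,\, 1 + r_1 - \sigma_1) + (1 - \lambda)\, \omega(\sigma_2 - 1,\, 2 - \sigma_2,\, 1 + r_2 - \sigma_2) \\
&= \lambda \sigma_1 + (1 - \lambda) \sigma_2 \;=\; \sigma^\ast,
\end{align*}
where the second line uses that each $\sigma_i$ satisfies the defining equation at~$r_i$.

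To finish, I would check that $\sigma^\ast \in [1,\, 1 + r^\ast/2]$; this is immediate from the analogous bound $\sigma_i \in [1,\, 1 + r_i/2]$ established in the proof of \cref{lem:sparse-exp-well-defined}, since both endpoints of the interval depend affinely on $r$. This places us in the regime of \cref{lem:sparse-exp-increasing}, in which the map $g(\sigma) := \sigma - \omega(\sigma - 1,\, 2 - \sigma,\, 1 + r^\ast - \sigma)$ is strictly increasing. The display above reads $g(\sigma^\ast) \geq 0$, while $g(\sigma(r^\ast)) = 0$ by definition, so strict monotonicity forces $\sigma(r^\ast) \leq \sigma^\ast$, which is convexity.

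I do not foresee any substantive obstacle: the entire argument reduces to recognizing that subadditivity plus homogeneity of $\omega$ is exactly a concavity-type statement along any affine slice of its domain, and that the slice $\sigma \mapsto (\sigma - 1,\, 2 - \sigma,\, 1 + r - \sigma)$ is affine jointly in $(\sigma, r)$. The only mild care needed will be confirming that $\sigma^\ast$ lies in the monotonicity interval of \cref{lem:sparse-exp-increasing} before converting the inequality on $\omega$ into the desired inequality on $\sigma$.
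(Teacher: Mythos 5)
Your argument is correct and follows essentially the same route as the paper's proof: write the triple at $(\sigma^\ast, r^\ast)$ as a convex combination of the triples at $(\sigma_i, r_i)$, apply \cref{fact:omega-subadditive} and \cref{fact:omega-scale} to get $\omega(\sigma^\ast - 1, 2 - \sigma^\ast, 1 + r^\ast - \sigma^\ast) \leq \sigma^\ast$, and then invoke the strict monotonicity from \cref{lem:sparse-exp-increasing} to conclude $\sigma(r^\ast) \leq \sigma^\ast$. Your explicit check that $\sigma^\ast \in [1, 1 + r^\ast/2]$ (needed to apply the monotonicity lemma on its stated interval) is a small point the paper leaves implicit, but otherwise the two proofs coincide.
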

\begin{proof}
Fix any $r_1, r_2 \in [0, 2]$ and any $\lambda \in [0, 1]$; we show that
\begin{equation*}
    \sigma(\lambda r_1 + (1 - \lambda) r_2) \leq \lambda \sigma(r_1) + (1 - \lambda) \sigma(r_2) =: \sigma^*.
\end{equation*}
We start with the following calculation that uses the subadditivity---or equivalently, convexity---of~$\omega$ (see \cref{fact:omega-subadditive}):
\begin{gather*}
    \omega(\sigma^* - 1, 2 - \sigma^*, 1 + \lambda r_1 + (1 - \lambda) r_2 - \sigma^*) \\
    \qquad= \omega(\lambda (\sigma(r_1) - 1) + (1 - \lambda)(\sigma(r_2) - 1), \\
    \qquad\qquad\quad\lambda (2 - \sigma(r_1)) + (1 - \lambda)(2 - \sigma(r_2)), \\
    \qquad\qquad\quad\lambda (1 + r_1 - \sigma(r_1)) + (1 - \lambda)(1 + r_2 - \sigma(r_2))) \\
    \qquad\leq
    \begin{multlined}[t]
        \lambda \cdot \omega(\sigma(r_1) - 1, 2 - \sigma(r_1), 1 + r_1 - \sigma(r_1)) \\+ (1 - \lambda) \cdot \omega(\sigma(r_2) - 1, 2 - \sigma(r_2), 1 + r_2 - \sigma(r_2))
    \end{multlined} \\
    \qquad= \lambda \sigma(r_1) + (1 - \lambda) \sigma(r_2) \\
    \qquad= \sigma^*.
\end{gather*}
By \cref{lem:sparse-exp-increasing} the function $g(\sigma) = \omega(\sigma - 1, 2 - \sigma, 1 + \lambda r_1 + (1 - \lambda) r_2 - \sigma)$ is strictly increasing. Since we just proved that $g(\sigma^*) \geq 0$, it follows that the unique zero $\sigma(\lambda r_1 + (1 - \lambda) r_2)$ of $g$ satisfies \makebox{$\sigma(\lambda r_1 + (1 - \lambda) r_2) \leq \sigma^*$}.
\end{proof}

\begin{lemma}[Trivial Bounds] \label{lem:sparse-exp-trivial-bounds}
$\max\set{1, r} \leq \sigma(r) \leq 1 + \frac{r}{2}$.
\end{lemma}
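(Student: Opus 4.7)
}
The plan is to exploit the fact, already established in the proof of \cref{lem:sparse-exp-well-defined}, that any solution to $\omega(\sigma-1,2-\sigma,1+r-\sigma)=\sigma$ lies in $[1,1+r/2]$. This immediately yields both the upper bound $\sigma(r)\le 1+r/2$ and the trivial half $\sigma(r)\ge 1$ of the lower bound. The only remaining work is to show that $\sigma(r)\ge r$ when $r\ge 1$.

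To establish $\sigma(r)\ge r$ for $r\in[1,2]$, I would leverage the fact that the function $g(\sigma):=\sigma-\omega(\sigma-1,2-\sigma,1+r-\sigma)$ is strictly increasing on $[1,1+r/2]$ by \cref{lem:sparse-exp-increasing}, and that $\sigma(r)$ is its unique zero in that interval. Hence it suffices to verify that $g(r)\le 0$, equivalently $\omega(r-1,2-r,1)\ge r$. Note that for $r\in[1,2]$ the value $\sigma=r$ does lie in $[1,1+r/2]$ (since $r\le 2$ implies $r\le 1+r/2$), so $g(r)$ is well-defined and the monotonicity argument applies.

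The bound $\omega(r-1,2-r,1)\ge r$ then follows from the trivial lower bound in \cref{fact:omega-trivial-bounds}, which gives $\omega(r-1,2-r,1)\ge (r-1)+1=r$. Combining with the pre-existing range $[1,1+r/2]$ from \cref{lem:sparse-exp-well-defined} completes the proof.

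There is no real obstacle here: the whole statement is a direct consequence of the trivial bounds on $\omega(a,b,c)$ together with the monotonicity of $g$ that we have already verified. The only mild subtlety is making sure that the point $\sigma=r$ actually lies inside the interval on which monotonicity was proved; this is immediate from $1\le r\le 2$.
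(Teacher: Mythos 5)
Your proof is correct. It differs mildly from the paper's in how the bound $\sigma(r)\ge r$ is obtained: the paper simply substitutes $\sigma=\sigma(r)$ into the defining equation and applies \cref{fact:omega-trivial-bounds} directly, getting $\sigma(r)=\omega(\sigma(r)-1,\,2-\sigma(r),\,1+r-\sigma(r))\ge (\sigma(r)-1)+(1+r-\sigma(r))=r$ in one line (and likewise $\ge 1$ from $a+b$ and $\le 1+\frac r2$ from $a+b+c$), whereas you detour through \cref{lem:sparse-exp-increasing}, evaluating $g$ at the separate point $\sigma=r$ and invoking strict monotonicity to conclude $\sigma(r)\ge r$. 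Both routes rest on the same trivial bounds on $\omega(\cdot,\cdot,\cdot)$, and your handling of the side condition $r\in[1,1+\frac r2]$ is sound; the paper's version is just slightly more economical since it avoids any appeal to monotonicity or to the range established in \cref{lem:sparse-exp-well-defined}. Nothing is missing from your argument.
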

\begin{proof}
Recall that $\sigma(r) = \omega(\sigma(r) - 1, 2 - \sigma(r), 1 + r - \sigma(r))$. On the one hand, from the trivial lower bound $\max\set{a + b, a + c} \leq \omega(a, b, c)$ (\cref{fact:omega-trivial-bounds}), we obtain $\sigma(r) \geq \max\set{\sigma(r) - 1 + 2 - \sigma(r), \sigma(r) - 1 + 1 + r - \sigma(r)} = \max\set{1, r}$. On the other hand, the trivial upper bound $\omega(a, b, c) \leq a +b + c$ (\cref{fact:omega-trivial-bounds}) entails that $\sigma(r) \leq \sigma(r) - 1 + 2 - \sigma(r) + 1 + r - \sigma(r)$ which can be rewritten as $\sigma(r) \leq 1 + \frac{r}{2}$.
\end{proof}

Note that these trivial bounds already imply tight values $\sigma(0) = 1$ and $\sigma(2) = 2$. We continue to evaluate $\sigma(r)$ for more points. For the following two lemmas, recall that we define the rectangular matrix multiplication constant~$\mu$ as the unique solution to $\omega(\mu, 1, 1) \leq 1 + 2\mu$, and $\alpha = \max\set{\alpha : \omega(\alpha, 1, 1) = 2}$.

\begin{lemma} \label{lem:sparse-exp-rho}
$\sigma(1) = 1 + \frac{\mu}{1+\mu}$.
\end{lemma}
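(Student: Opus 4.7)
The plan is to verify that $\sigma = 1 + \frac{\mu}{1+\mu}$ satisfies the defining equation $\omega(\sigma - 1, 2 - \sigma, 1 + r - \sigma) = \sigma$ at $r = 1$, and then appeal to uniqueness of the solution (\cref{lem:sparse-exp-well-defined}).

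First, I would substitute $r = 1$ and the candidate value of $\sigma$ into the three arguments of $\omega$. This gives $\sigma - 1 = \frac{\mu}{1+\mu}$ and $2 - \sigma = 1 + r - \sigma = \frac{1}{1+\mu}$. Hence the equation to verify becomes
\begin{equation*}
    \omega\!\left(\tfrac{\mu}{1+\mu},\, \tfrac{1}{1+\mu},\, \tfrac{1}{1+\mu}\right) = 1 + \tfrac{\mu}{1+\mu}.
\end{equation*}

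Next I would apply the scaling identity (\cref{fact:omega-scale}) with $\lambda = \tfrac{1}{1+\mu}$ to rewrite the left-hand side as $\tfrac{1}{1+\mu} \cdot \omega(\mu, 1, 1)$. By the very definition of $\mu$, we have $\omega(\mu, 1, 1) = 1 + 2\mu$, so the left-hand side equals $\tfrac{1+2\mu}{1+\mu} = 1 + \tfrac{\mu}{1+\mu}$, matching the right-hand side exactly.

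Finally, since \cref{lem:sparse-exp-well-defined} guarantees that the defining equation for $\sigma(1)$ has a unique solution, the verification above forces $\sigma(1) = 1 + \tfrac{\mu}{1+\mu}$. There is no real obstacle here; the proof is a direct algebraic check combined with the scaling property of $\omega$ and uniqueness, with the only subtle point being the convenient choice of $\lambda$ that reduces the computation to the defining equation of $\mu$.
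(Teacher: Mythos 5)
Your proposal is correct and follows essentially the same route as the paper: substitute $r=1$ and the candidate value into the defining equation, use the scaling identity (\cref{fact:omega-scale}) with $\lambda = \frac{1}{1+\mu}$ to reduce it to the defining equation of $\mu$, and conclude by uniqueness (\cref{lem:sparse-exp-well-defined}). No issues.
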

\begin{proof}
Recall that $\sigma = \sigma(1)$ is the unique solution to the equation (1) $\omega(\sigma - 1, 2 - \sigma, 2 - \sigma) = \sigma$ and that $\mu$ is the unique solution to the equation (2) $\omega(\mu, 1, 1) = 1 + 2\mu$, or equivalently (3) \smash{$\omega(\frac{\mu}{1+\mu}, \frac{1}{1+\mu}, \frac{1}{1+\mu}) = \frac{1+2\mu}{1+\mu}$}. By substituting $\sigma$ by \smash{$1 + \frac{\mu}{1+\mu} = \frac{1 + 2\mu}{1+\mu}$} in (1), we find that (1) and (3) are equivalent.
\end{proof}

\begin{lemma} \label{lem:sparse-exp-alpha}
$\sigma(1 + \frac{1}{1+\alpha}) = 1 + \frac{1}{1+\alpha}$.
\end{lemma}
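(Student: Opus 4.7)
The plan is to verify directly that the candidate value $\sigma = 1 + \tfrac{1}{1+\alpha}$ satisfies the defining equation $\omega(\sigma-1,\,2-\sigma,\,1+r-\sigma) = \sigma$ at $r = 1 + \tfrac{1}{1+\alpha}$. By uniqueness of the solution (\cref{lem:sparse-exp-well-defined}), this pins down $\sigma(r)$ to exactly this value.

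Plugging in the candidate, the three arguments become $\sigma - 1 = \tfrac{1}{1+\alpha}$, $2 - \sigma = \tfrac{\alpha}{1+\alpha}$, and $1 + r - \sigma = 1$. So the task reduces to showing
\[
\omega\!\left(\tfrac{1}{1+\alpha},\; \tfrac{\alpha}{1+\alpha},\; 1\right) \;=\; \tfrac{2+\alpha}{1+\alpha}.
\]
Applying the scaling identity (\cref{fact:omega-scale}) with $\lambda = \tfrac{1}{1+\alpha}$, the left-hand side equals $\tfrac{1}{1+\alpha}\cdot\omega(1,\alpha,1+\alpha)$. Hence it suffices to prove $\omega(1,\alpha,1+\alpha) = 2 + \alpha$.

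For the upper bound, I split the third dimension using subadditivity (\cref{fact:omega-subadditive}):
\[
\omega(1,\alpha,1+\alpha) \;\le\; \omega(1,\alpha,1) + \omega(0,0,\alpha).
\]
By permutation (\cref{fact:omega-permute}) and the definition of $\alpha$, $\omega(1,\alpha,1) = \omega(\alpha,1,1) = 2$; by the trivial upper bound (\cref{fact:omega-trivial-bounds}), $\omega(0,0,\alpha) \le \alpha$. This gives $\omega(1,\alpha,1+\alpha) \le 2 + \alpha$. For the matching lower bound, the trivial lower bound from \cref{fact:omega-trivial-bounds} yields $\omega(1,\alpha,1+\alpha) \ge \max\{1+\alpha,\,2+\alpha,\,1+2\alpha\} = 2+\alpha$, where the last equality uses $\alpha \le 1$. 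Combining, $\omega(1,\alpha,1+\alpha) = 2 + \alpha$, which back-substitutes to $\tfrac{2+\alpha}{1+\alpha} = 1 + \tfrac{1}{1+\alpha} = \sigma$, completing the verification.

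There is no real obstacle: the proof is a direct plug-in combined with one clean application of scaling, subadditivity, permutation, and the trivial bounds. The only mild subtlety is noting that the trivial lower bound's maximum is achieved by the $a+c$ term (which is why we need $\alpha \le 1$), and choosing the right decomposition $1 + \alpha = 1 + \alpha$ rather than $\alpha + 1$ when applying subadditivity so that the resulting $\omega(1,\alpha,1)$ term matches the definition of $\alpha$ exactly.
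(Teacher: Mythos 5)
Your proof is correct and follows essentially the same route as the paper: plug in $\sigma = r = 1+\frac{1}{1+\alpha}$, get the lower bound from the trivial $a+c$ bound, and get the upper bound by splitting off the excess in the long dimension via subadditivity so that the remaining term is $\omega(\alpha,1,1)=2$ (up to scaling). The paper performs this decomposition directly on $\omega(\frac{1}{1+\alpha},\frac{\alpha}{1+\alpha},1)$ without first rescaling to $\omega(1,\alpha,1+\alpha)$, but that difference is purely cosmetic.
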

\begin{proof}
We have to prove that $\omega(1 + \frac{1}{1+\alpha} - 1, 2 - (1 + \frac{1}{1+\alpha}), 2 + \frac{1}{1+\alpha} - (1 + \frac{1}{1+\alpha})) = \omega(\frac{1}{1+\alpha}, \frac{\alpha}{1+\alpha}, 1) = 1 + \frac{1}{1+\alpha}$. On the one hand, from the trivial lower bound $a + c \leq \omega(a, b, c)$ (\cref{fact:omega-trivial-bounds}), we get that \smash{$\omega(\frac{1}{1+\alpha}, \frac{\alpha}{1+\alpha}, 1) \geq 1 + \frac{1}{1+\alpha}$}. On the other hand, by \cref{fact:omega-scale,fact:omega-subadditive} we have that
\begin{equation*}
    \omega\parens*{\frac{1}{1+\alpha}, \frac{\alpha}{1+\alpha}, 1} \leq \omega\parens*{\frac{1}{1+\alpha}, \frac{\alpha}{1+\alpha}, \frac{1}{1+\alpha}} + \frac{\alpha}{1+\alpha} = \frac{2}{1+\alpha} + \frac{1}{1+\alpha} = 1 + \frac{1}{1+\alpha}. \qedhere
\end{equation*}
\end{proof}

In summary, we have the identities \smash{$\sigma(0) = 1,\, \sigma(1) = 1 + \frac{\mu}{1+\mu},\, \sigma\parens{1 + \frac{1}{1+\alpha}} = 1 + \frac{1}{1+\alpha},\, \sigma(2) = 2$} by \cref{lem:sparse-exp-trivial-bounds,lem:sparse-exp-rho,lem:sparse-exp-alpha}. By the convexity of $\sigma(r)$ it follows that $\sigma(r)$ is upper-bounded by the line segments that interpolate between these four points:

\begin{lemma}
$\sigma(r) \leq \max\set{1 + r \cdot \frac{\mu}{1+\mu}, \frac{(2+\alpha)\mu}{1+\mu} + r \cdot \frac{1-\alpha\mu}{1+\mu}, r}$, for any $r \in [0, 2]$.
\end{lemma}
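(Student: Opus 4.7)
The plan is to combine the convexity of $\sigma(r)$ established in \cref{lem:sparse-exp-convexity} with the four exact evaluations already in hand, namely $\sigma(0) = 1$ and $\sigma(2) = 2$ from \cref{lem:sparse-exp-trivial-bounds}, $\sigma(1) = 1 + \frac{\mu}{1+\mu}$ from \cref{lem:sparse-exp-rho}, and $\sigma\bigl(1 + \frac{1}{1+\alpha}\bigr) = 1 + \frac{1}{1+\alpha}$ from \cref{lem:sparse-exp-alpha}. Convexity of $\sigma$ means that on any interval $[r_1, r_2]$ the graph of $\sigma$ lies on or below the chord through $(r_1, \sigma(r_1))$ and $(r_2, \sigma(r_2))$.

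First I would split $[0,2]$ into the three consecutive subintervals $[0,1]$, $\bigl[1, 1 + \frac{1}{1+\alpha}\bigr]$, and $\bigl[1 + \frac{1}{1+\alpha}, 2\bigr]$ determined by the four anchor points. On each subinterval I bound $\sigma(r)$ from above by the corresponding chord, and then I identify each chord with one of the three affine functions appearing inside the maximum.

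The computation of the three chords is routine slope-intercept algebra. The chord through $(0,1)$ and $\bigl(1, 1 + \frac{\mu}{1+\mu}\bigr)$ is exactly $1 + r \cdot \frac{\mu}{1+\mu}$. The chord through $\bigl(1 + \frac{1}{1+\alpha}, 1 + \frac{1}{1+\alpha}\bigr)$ and $(2, 2)$ is simply $r$, since both endpoints lie on the diagonal. For the middle chord, connecting $\bigl(1, 1 + \frac{\mu}{1+\mu}\bigr)$ and $\bigl(1 + \frac{1}{1+\alpha}, 1 + \frac{1}{1+\alpha}\bigr)$, the slope works out to
\[
\frac{\tfrac{1}{1+\alpha} - \tfrac{\mu}{1+\mu}}{\tfrac{1}{1+\alpha}} = 1 - \frac{\mu(1+\alpha)}{1+\mu} = \frac{1 - \alpha\mu}{1+\mu},
\]
and substituting $r = 1$ with value $\frac{1+2\mu}{1+\mu}$ gives the intercept $\frac{(2+\alpha)\mu}{1+\mu}$, so this chord is $\frac{(2+\alpha)\mu}{1+\mu} + r \cdot \frac{1 - \alpha\mu}{1+\mu}$.

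Finally, since on each of the three subintervals $\sigma(r)$ is bounded by the associated chord, and each chord is one of the three linear expressions inside the max, the global bound $\sigma(r) \leq \max\bigl\{1 + r\frac{\mu}{1+\mu},\; \frac{(2+\alpha)\mu}{1+\mu} + r\frac{1-\alpha\mu}{1+\mu},\; r\bigr\}$ holds throughout $[0,2]$. There is no real obstacle: convexity does all the heavy lifting, and the only remaining work is the algebraic identification of the three chords, which is immediate.
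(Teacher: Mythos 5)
Your proposal is correct and matches the paper's argument exactly: the paper also derives this bound by combining the convexity of $\sigma$ (\cref{lem:sparse-exp-convexity}) with the four anchor values $\sigma(0)=1$, $\sigma(1)=1+\frac{\mu}{1+\mu}$, $\sigma(1+\frac{1}{1+\alpha})=1+\frac{1}{1+\alpha}$, $\sigma(2)=2$, and reads off the three interpolating chords. Your chord computations (slope $\frac{1-\alpha\mu}{1+\mu}$ and intercept $\frac{(2+\alpha)\mu}{1+\mu}$ for the middle segment) check out, so nothing is missing.
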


\paragraph{Numerical Upper Bounds for \boldmath$\sigma(r)$}
The previous consideration lead to readable upper bounds on $\sigma(r)$, but it is actually possible to improve these upper bounds using the currently best \emph{numerical} bounds on the complexity of rectangular matrix multiplication~\cite{LeGallU18}. We give our results in \cref{tab:sparse-exp-numerical}.

\begin{table}
\caption{Numerical bounds on the sparse matrix multiplication exponent $\sigma(r)$, based on the dense rectangular matrix multiplication bounds by Le Gall and Urrutia~\cite{LeGallU18}.} \label{tab:sparse-exp-numerical}
\begin{tabular*}{0.17\textwidth}[t]{c@{\extracolsep{\fill}}c}
    \toprule
    $r$ & $\sigma(r)$ \\
    \midrule
    0.00 & 1.0000 \\
    0.05 & 1.0171 \\
    0.10 & 1.0342 \\
    0.15 & 1.0513 \\
    0.20 & 1.0684 \\
    0.25 & 1.0855 \\
    0.30 & 1.1026 \\
    0.35 & 1.1197 \\
    0.40 & 1.1368 \\
    \bottomrule
\end{tabular*}
\hfill
\begin{tabular*}{0.17\textwidth}[t]{c@{\extracolsep{\fill}}c}
    \toprule
    $r$ & $\sigma(r)$ \\
    \midrule
    0.45 & 1.1539 \\
    0.50 & 1.1710 \\
    0.55 & 1.1881 \\
    0.60 & 1.2052 \\
    0.65 & 1.2223 \\
    0.70 & 1.2396 \\
    0.75 & 1.2569 \\
    0.80 & 1.2744 \\
    \bottomrule
\end{tabular*}
\hfill
\begin{tabular*}{0.17\textwidth}[t]{c@{\extracolsep{\fill}}c}
    \toprule
    $r$ & $\sigma(r)$ \\
    \midrule
    0.85 & 1.2921 \\
    0.90 & 1.3099 \\
    0.95 & 1.3277 \\
    1.00 & 1.3458 \\
    1.05 & 1.3665 \\
    1.10 & 1.3875 \\
    1.15 & 1.4086 \\
    1.20 & 1.4299 \\
    \bottomrule
\end{tabular*}
\hfill
\begin{tabular*}{0.17\textwidth}[t]{c@{\extracolsep{\fill}}c}
    \toprule
    $r$ & $\sigma(r)$ \\
    \midrule
    1.25 & 1.4513 \\
    1.30 & 1.4728 \\
    1.35 & 1.4943 \\
    1.40 & 1.5199 \\
    1.45 & 1.5476 \\
    1.50 & 1.5761 \\
    1.55 & 1.6060 \\
    1.60 & 1.6378 \\
    \bottomrule
\end{tabular*}
\hfill
\begin{tabular*}{0.17\textwidth}[t]{c@{\extracolsep{\fill}}c}
    \toprule
    $r$ & $\sigma(r)$ \\
    \midrule
    1.65 & 1.6720 \\
    1.70 & 1.7091 \\
    1.75 & 1.7505 \\
    1.80 & 1.8000 \\
    1.85 & 1.8500 \\
    1.90 & 1.9000 \\
    1.95 & 1.9500 \\
    2.00 & 2.0000 \\
    \bottomrule
\end{tabular*}
\end{table}

To achieve these results, suppose that we know a set of bounds of the form $\set{\omega(a_i, b_i, c_i) \leq \omega_i}_{i=1}^n$. We take~\cite[Table~3]{LeGallU18} as the basis for our bounds in \cref{tab:sparse-exp-numerical}. Then, consider the following linear program in the variables $\sigma, \lambda_1, \dots, \lambda_n$:
\begin{equation*}
    \begin{array}{ll@{\;}c@{\;}l}
        \text{minimize} & \sigma, \\[.3ex]
        \text{subject to} & \textstyle \sum_{i=1}^n \lambda_i \omega_i & = & \sigma, \\[.3ex]
        & \textstyle \sum_{i=1}^n \lambda_i a_i & = & 1 + \sigma, \\[.3ex]
        & \textstyle \sum_{i=1}^n \lambda_i b_i & = & 2 - \sigma, \\[.3ex]
        & \textstyle \sum_{i=1}^n \lambda_i c_i & = & 1 + r - \sigma, \\[.3ex]
        & \sigma, \lambda_1, \dots, \lambda_n \geq 0.\hspace{-10cm}\mbox{}
    \end{array}
\end{equation*}
Using \cref{fact:omega-subadditive}, it is easy to show that any solution $\sigma^*$ to this linear program yields an upper bound of the form $\sigma(r) \leq \sigma^*$.

\paragraph{Bounds when \boldmath$\omega = 2$.}
Finally, suppose that $\omega = 2$. In this case, we have $\omega(a, b, c) = \max\set{a + b, a + c, b + c}$. It follows that $\sigma(r)$ is the unique solution to the equation $\max\set{1, r, 3 + r - 2\sigma} = \sigma$, and the following lemma is immediate:

\begin{lemma}
If $\omega = 2$, then $\sigma(r) = \max\set{1 + \frac r3, r}$.
\end{lemma}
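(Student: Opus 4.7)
The excerpt already reduces the problem to solving $\max\{1, r, 3 + r - 2\sigma\} = \sigma$ for $\sigma$, using $\omega(a,b,c) = \max\{a+b, a+c, b+c\}$ when $\omega = 2$ together with Definition~\ref{def:sparse-exp}. My plan is therefore to \emph{guess and verify}: propose the candidate $\sigma^\star := \max\{1 + r/3,\, r\}$, check that it satisfies the equation, and appeal to Lemma~\ref{lem:sparse-exp-well-defined} for uniqueness.

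I would split the verification into the two natural regimes, separated by the crossover point $r = 3/2$ at which $1 + r/3 = r$. In the first regime $0 \le r \le 3/2$, we have $\sigma^\star = 1 + r/3$, so a direct computation gives $3 + r - 2\sigma^\star = 3 + r - 2 - 2r/3 = 1 + r/3 = \sigma^\star$, and moreover $\sigma^\star \geq 1$ (trivially) and $\sigma^\star \geq r$ (since $r \le 3/2$), which together yield $\max\{1, r, 3+r-2\sigma^\star\} = \sigma^\star$. In the second regime $3/2 \le r \le 2$, we have $\sigma^\star = r$, so $3 + r - 2\sigma^\star = 3 - r \le 3/2 \le r$, and again $\max\{1, r, 3+r-2\sigma^\star\} = r = \sigma^\star$.

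Uniqueness is already in hand: Lemma~\ref{lem:sparse-exp-well-defined} tells us that the equation $\omega(\sigma-1, 2-\sigma, 1+r-\sigma) = \sigma$ has a unique solution in $[1, 1+r/2]$, and one checks that $\sigma^\star$ lies in this interval (indeed $1 \le \sigma^\star \le 1 + r/2$ holds in both regimes). Hence $\sigma(r) = \sigma^\star = \max\{1 + r/3,\, r\}$, as claimed.

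There is essentially no obstacle here: once the formula $\omega(a,b,c) = \max\{a+b, a+c, b+c\}$ is plugged in, the proof is a one-line case analysis. The only thing to be slightly careful about is not to conflate the two regimes and to confirm the candidate lies in the admissible range so that the uniqueness statement from Lemma~\ref{lem:sparse-exp-well-defined} can be invoked directly.
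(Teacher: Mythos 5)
Your proposal is correct and matches the paper's approach: the paper reduces to the equation $\max\{1, r, 3+r-2\sigma\} = \sigma$ and declares the lemma immediate, and you simply fill in the routine verification (the two regimes split at $r = 3/2$, each checked correctly) together with the appeal to the uniqueness from \cref{lem:sparse-exp-well-defined}. No gaps.
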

\section{Relation to All-Edges Triangle} \label{sec:lower-bounds}
In this section we prove conditional lower bounds for sparse (Boolean) matrix multiplication under a hypothesis (\cref{hypo:AET}) about the All-Edges Triangle problem ($\AETriangle$, \cref{def:AET}). We also prove that in the \emph{fully sparse} setting (where $\IN + \OUT \leq m$), sparse (Boolean) matrix multiplication is \emph{equivalent} to a certain parameterization of $\AETriangle$. We start with a recap of the relevant definitions.

\defaetriangle*

\begin{definition}[$\#\AETriangle$]
The $\#\AETriangle(x, y, z, m)$ problem is to \emph{count}, in a given tripartite graph $G = (X, Y, Z, E)$ with $|X| \leq x,\, |Y| \leq y,\, |Z| \leq z$ and $|E| \leq m$, for each edge $(i, j) \in (X \times Z) \cap E$ how many triangles it is part of.
\end{definition}

\defpsaetriangle*

\hypoaet*

Recall that this hypothesis morally expresses that the best way to detect triangles (for all edges) in a graph is to combine two algorithms: Fast matrix multiplication, and enumerating all 2-paths in the graph. Our hardness result is that, under this hypothesis, sparse matrix multiplication has exponent $\sigma(r)$:

\thmlowerboundpsaetriangle*
\begin{proof}
Fix $r \in [0, 2]$, let $\sigma = \sigma(r)$, and suppose that sparse Boolean matrix multiplication is in time~\makebox{$\Order(\IN^{\sigma-\epsilon})$}. We prove that the $\PSAETriangle(m^a, m^b, m^c, m)$ problem for $a = \sigma - 1,\, b = 2 - \sigma,\, c = 1 + r - \sigma$ can be solved polynomially faster than $m^{\min\set{1+a,\, \omega(a, b, c)}}$. Let $G = (X, Y, Z, E)$ be a given instance of $\PSAETriangle(m^a, m^b, m^c, m)$. We solve this instance as follows: Let $A$ be the adjacency matrix of the bipartite subgraph with vertices~$X \cup Y$, and let~$B$ be the adjacency matrix of the bipartite subgraph with vertices $Y \cup Z$. We compute the matrix product~$A B$ using the efficient algorithm for sparse Boolean matrix multiplication---note that the input size is at most $x y + m = \Order(m)$ and the output size is at most $x z = m^r$ which yields the correct input-to-output ratio (possibly after padding). We report all edges $(i, j) \in (X \times Z) \cap E$ with $(A B)[i, j]$.

The correctness is easy: For any pair $(i, j) \in X \times Z$ we have $(A B)[i, j] = 1$ if and only if there is a 2-path in $G$ from $i$ to $j$ via some node $k \in Y$. Thus, the algorithm reports exactly all pairs $(i, j) \in X \times Z$ for which there is 2-path $(i, k, j)$ and there is an edge $(i, j) \in E$. This is the set of pairs involved in a triangle.

Next, we analyze the running time. The dominating step is the sparse Boolean matrix multiplication in time $\Order(m^{\sigma - \epsilon})$; afterwards, the reporting step runs in negligible time $\Order(y z) = \Order(m)$. But recall that $\sigma$ satisfies the equation $\omega(\sigma + 1, 2 - \sigma, 1 + r - \sigma) = \sigma$, and thus the running time can be written as
\begin{equation*}
    \Order(m^{\sigma - \epsilon}) = \Order(m^{\min\set{\sigma,\, \omega(\sigma + 1, 2 - \sigma, 1 + r - \sigma)}-\epsilon}) = \Order(m^{\min\set{1+a,\, \omega(a, b, c)} - \epsilon}),
\end{equation*}
which contradicts the $\PSAETriangle$ hypothesis.
\end{proof}

\subsection{Equivalence with All-Edges Triangle in the Fully Sparse Setting}
In this section we prove that in the fully sparse setting (i.e., when we measure the combined input plus output sparsity $m = \IN + \OUT$), the sparse matrix multiplication problem is \emph{equivalent} to a certain parameterization of $\AETriangle$. We start with the following lemmas, based on Le Gall and Urrutia's bounds on rectangular matrix multiplication:

\begin{lemma}[{\cite[Table~3]{LeGallU18}}] \label{lem:fast-rectangular-1.3}
$\omega(1, 1.3, 1) \leq 2.6217$ and $\omega(1, 1.4, 1) \leq 2.7085$.
\end{lemma}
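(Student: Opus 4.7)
The bounds follow from the numerical tables provided by Le~Gall and Urrutia~\cite{LeGallU18}, which list upper bounds on $\omega(1,k,1)$ for a grid of values of $k$ (obtained by feeding rectangular analogues of the Coppersmith–Winograd tensor through their laser-method optimizer). The plan is therefore to (i) locate the appropriate entries in their tables for $k=1.3$ and $k=1.4$, and, in the event that the exact values $1.3$ and $1.4$ are not tabulated, to (ii) derive the stated bounds by an elementary convexity argument that interpolates between tabulated nearby values.

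For step (ii), I would use the fact that the function $k \mapsto \omega(1, k, 1)$ is convex, which follows directly from the subadditivity of $\omega(\cdot,\cdot,\cdot)$ (\cref{fact:omega-subadditive}) combined with its scaling property (\cref{fact:omega-scale}): given $\lambda \in [0,1]$ and $k = \lambda k_1 + (1-\lambda) k_2$, one decomposes
\begin{equation*}
    (1,\, \lambda k_1 + (1-\lambda) k_2,\, 1) = (\lambda,\, \lambda k_1,\, \lambda) + (1-\lambda,\, (1-\lambda) k_2,\, 1-\lambda),
\end{equation*}
and applies \cref{fact:omega-scale,fact:omega-subadditive} to conclude that
\begin{equation*}
    \omega(1, k, 1) \le \lambda \cdot \omega(1, k_1, 1) + (1-\lambda) \cdot \omega(1, k_2, 1).
\end{equation*}
Plugging in the tabulated Le~Gall–Urrutia bounds at $k_1, k_2$ flanking $1.3$ (respectively $1.4$) then yields the claimed numerical estimates $2.6217$ and $2.7085$.

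The main (and essentially only) obstacle is purely bookkeeping: identifying the correct rows of \cite[Table~3]{LeGallU18} and verifying that the interpolated values indeed round up to the stated constants. There is no conceptual difficulty beyond trusting the externally computed bounds on rectangular matrix multiplication, so the proof in the paper is likely just a one-line reference to the Le~Gall–Urrutia tables together with the convexity interpolation sketched above.
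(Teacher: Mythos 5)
Your proposal matches the paper exactly: the lemma is stated as a direct citation of \cite[Table~3]{LeGallU18}, with no further argument given, and the values $1.3$ and $1.4$ are indeed tabulated there, so no interpolation is needed. Your convexity fallback (via \cref{fact:omega-scale,fact:omega-subadditive}) is correct and is in fact the same device the paper itself uses later in \cref{lem:rho-bounds} to bound $\omega(1,\tfrac43,1)$ from these two tabulated points.
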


\begin{restatable}{lemma}{lemrhobounds} \label{lem:rho-bounds}
For any $0 \leq \delta \leq \frac{1-\mu}2$, it holds that $\omega(\mu + \delta, 1, 1 - \delta) \leq 1 + 2\mu - 0.02\delta$.
\end{restatable}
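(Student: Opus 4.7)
The plan is to express the target triple $(\mu+\delta, 1, 1-\delta)$ as a convex combination of two triples whose $\omega$-values we can bound sharply: the ``endpoint'' $(\mu,1,1)$, for which we have the exact identity $\omega(\mu,1,1) = 1+2\mu$, and the ``balanced'' triple $(\frac{1+\mu}{2}, 1, \frac{1+\mu}{2})$, corresponding to $\delta^* := \frac{1-\mu}{2}$. A direct computation shows that for $t := \delta/\delta^* = \frac{2\delta}{1-\mu} \in [0,1]$,
\[
    (\mu + \delta,\, 1,\, 1 - \delta) = (1-t)\,(\mu, 1, 1) + t\,\bigl(\tfrac{1+\mu}{2}, 1, \tfrac{1+\mu}{2}\bigr),
\]
so by \cref{fact:omega-subadditive,fact:omega-scale} we obtain
\[
    \omega(\mu+\delta, 1, 1-\delta) \;\leq\; (1-t)(1+2\mu) + t \cdot \omega\bigl(\tfrac{1+\mu}{2}, 1, \tfrac{1+\mu}{2}\bigr).
\]
It thus suffices to prove the ``one-point'' bound $\omega(\tfrac{1+\mu}{2}, 1, \tfrac{1+\mu}{2}) \leq 1 + 2\mu - 0.01(1-\mu)$, since substituting this inequality and $t = \frac{2\delta}{1-\mu}$ above immediately yields $\omega(\mu+\delta,1,1-\delta) \leq 1+2\mu - 0.02\delta$.

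\textbf{The one-point bound.} Here I would apply \cref{fact:omega-scale} to rewrite $\omega(\tfrac{1+\mu}{2}, 1, \tfrac{1+\mu}{2}) = \tfrac{1+\mu}{2} \cdot \omega(1, \tfrac{2}{1+\mu}, 1)$. Since $\mu \geq 1/2$ (which follows from $\omega \geq 2$ together with $\omega(\mu,1,1) = 1+2\mu$), the middle parameter $b := \tfrac{2}{1+\mu}$ lies in the interval $[\tfrac{2}{1+0.5286}, \tfrac{4}{3}] \subseteq [1.3, 1.4]$. I would then linearly interpolate between the two rectangular bounds of \cref{lem:fast-rectangular-1.3} by writing $(1,b,1) = (1-s)(1,1.3,1) + s(1,1.4,1)$ with $s = 10(b - 1.3)$ and using \cref{fact:omega-subadditive} to obtain $\omega(1,b,1) \leq 2.6217 + 0.868 \cdot (b-1.3)$. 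Substituting $b = \tfrac{2}{1+\mu}$ and multiplying through by $\tfrac{1+\mu}{2}$, a short calculation reduces the desired inequality $\omega(\tfrac{1+\mu}{2}, 1, \tfrac{1+\mu}{2}) \leq 1 + 2\mu - 0.01(1-\mu)$ to a linear inequality in $\mu$ of the form $1.61465 + 0.74665\mu \leq 0.99 + 2.01\mu$, equivalently $\mu \geq 0.62465/1.26335 \approx 0.4944$. This is satisfied thanks to the lower bound $\mu \geq 1/2$, completing the proof.

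\textbf{Main obstacle.} Conceptually the proof is just ``convexity plus one nontrivial rectangular bound'', but the delicate part is that the numerical slack is tight: at $\mu = 1/2$ the linear inequality in the final step holds with margin only about $0.007$, which is why the constant on the right-hand side of the statement is $0.02$ rather than something larger. The key step therefore is to choose the right reference triple $(\tfrac{1+\mu}{2},1,\tfrac{1+\mu}{2})$---any attempt to interpolate between $(\mu,1,1)$ and other natural triples such as $(1,1,\mu)$ yields no improvement, because by \cref{fact:omega-permute} those triples have the same $\omega$-value $1+2\mu$. What makes the balanced triple useful is precisely that \cref{fact:omega-subadditive}, applied together with $\omega(1, 1.3, 1) \leq 2.6217$, provides a strictly better bound than the trivial $1+2\mu$, and this saving propagates linearly through the convex combination to produce the claimed $-0.02\delta$ term.
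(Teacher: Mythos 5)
Your proof is correct and follows essentially the same route as the paper's: both decompose the target tensor via \cref{fact:omega-scale,fact:omega-subadditive} so as to isolate a near-square block of shape $(1,\tfrac{2}{1+\mu},1)$, bound that block by interpolating the two numerical estimates of \cref{lem:fast-rectangular-1.3}, and close with $\mu \geq \tfrac12$ (the paper splits into three sub-tensors and first relaxes $\tfrac{2}{1+\mu}$ to $\tfrac43$, whereas you package the same idea as a two-term convex combination of $(\mu,1,1)$ and $(\tfrac{1+\mu}{2},1,\tfrac{1+\mu}{2})$, which is slightly cleaner but not a different argument). Your numerics check out, including the roughly $0.007$ slack at $\mu=\tfrac12$.
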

\begin{proof}
Let $\gamma = 1 - \frac{\delta}{1-\mu}$ and note that $\delta \leq \frac12 \leq \gamma \leq 1 - \delta$. By the subadditivity of $\omega$ (see \cref{fact:omega-subadditive}), we have that
\begin{gather*}
    \omega(\mu + \delta, 1, 1 - \delta) \\
    \qquad
    \begin{array}{@{}c@{\;\;}c@{\;}lll@{\;}l}
        = & \omega( & 1 - (1 - \mu) \gamma, & 1, & \mu + (1 - \mu) \gamma & ) \\[1ex]
        \leq & \omega( & (\gamma - \delta) \mu, & \gamma - \delta, & \gamma - \delta & ) \\[.1ex]
        + & \omega( & 1 - \gamma - \delta, & 1 - \gamma - \delta, & (1 - \gamma - \delta) \mu & ) \\[.1ex]
        + & \omega( & (1 + \mu) \delta, & 2 \delta, & (1 + \mu) \delta & ).
    \end{array}
\end{gather*}
We can bound these three terms by $(\gamma - \delta)(1 + 2\mu)$, $(1 - \gamma - \delta) (1 + 2 \mu)$ and \smash{$(1+\mu)\delta \cdot \omega(1, \frac{2}{1+\mu}, 1)$}, respectively. Further, we numerically bound \smash{$\omega(1, \tfrac{2}{1+\mu}, 1)$} as follows. Since $\mu \geq \frac12$, we have \smash{$\omega(1, \tfrac{2}{1+\mu}, 1) \leq \omega(1, \frac43, 1)$}. Next, we use that $\omega(\cdot, \cdot, \cdot)$ is convex and can thus be upper-bounded by any linear interpolation between two points. Specifically, we can bound
\begin{equation*}
    \omega(1, \tfrac43, 1) \leq \tfrac23 \cdot \omega(1, 1.3, 1) + \tfrac13 \cdot \omega(1, 1.4, 1) = \tfrac23 \cdot 2.6217 + \tfrac13 \cdot 2.7085 \leq 2.6507
\end{equation*}
by the bounds from the previous \cref{lem:fast-rectangular-1.3}. Therefore:
\begin{gather*}
    \omega(\mu + \delta, 1, 1 - \delta) \\
    \qquad\leq (\gamma - \delta)(1 + 2\mu) + (1 - \gamma - \delta) (1 + 2 \mu) + (1+\mu)\delta \cdot \omega(1, \smash{\tfrac{2}{1+\mu}}, 1) \\
    \qquad\leq 1 + 2\mu - 2\delta (1 + 2\mu) + \delta (1 + \mu) \cdot 2.6507 \\
    \qquad= 1 + 2\mu - \delta (2 + 4\mu - 2.6507 - 2.6507\mu) \\
    \qquad=1 + 2\mu - \delta (1.3493 \mu - 0.6507) \\
    \qquad\leq 1 + 2\mu - \delta (\tfrac12 \cdot 1.3493 - 0.6507) \\
    \qquad\leq 1 + 2\mu - 0.0240 \delta,
\end{gather*}
which completes the proof.
\end{proof}

\thmequivalenceboolean*
\begin{proof}[Proof: (1) implies (2).]
This part is very similar to \cref{thm:lower-bound-ps-ae-triangle}, and due to the similarity we only sketch this part. We use sparse Boolean matrix multiplication to compute all pairs of nodes $(i, j) \in X \times Z$ connected by a 2-path, and return all such pairs that are additionally connected by an edge $(i, j) \in E$. The time complexity is dominated by the Boolean matrix multiplication with input size $n^{1+\mu}$ and output size $n^{1+\mu}$ running in time $\Order((n^{1+\mu})^{1 + \frac{\mu}{1+\mu} - \epsilon}) = \Order(n^{1+2\mu - \epsilon})$.

\medskip\noindent
\emph{(2) implies (1).}
Assume that there is an algorithm $\mathcal A$ for $\AETriangle(n^\mu, n, n, n^{1+\mu})$ in time $\Order(n^{1+2\mu-\epsilon'})$ for some $\epsilon' > 0$. We design an efficient algorithm for sparse Boolean matrix multiplication. Let $A \in \set{0, 1}^{x \times y}$ and $B \in \set{0, 1}^{y \times z}$ be a given instance. By densification (\cref{lem:densification-nonnegative}), we can assume that $x z \leq 8 m$ and by symmetry we assume that $x \leq z$.

Let $\epsilon > 0$ be a parameter to be fixed later. In the same spirit as \cref{lem:input-sparse}, we say that an index~$k \in [y]$ is \emph{light} if \smash{$\abs{\set{i : A[i, k] \neq 0}} \leq m^{\frac{\mu}{1+\mu}-\epsilon}$}, and \emph{heavy} otherwise. We let $y_1$ and $y_2$ denote the number of light and heavy indices, respectively, and subdivide $A$ into submatrices $A_1 \in \set{0, 1}^{x \times y_1}$ and $A_2 \in \set{0, 1}^{x \times y_2}$, where the light indices participate in $A_1$ and the heavy indices participate in $A_2$. We similarly subdivide $B$ into~$B_1 \in \set{0, 1}^{y_1 \times z}$ and $B_2 \in \set{0, 1}^{y_2 \times z}$. Then we run the following two steps:
\begin{enumerate}
\item Compute $C_1 = A_1 \cdot B_1$ exploiting the sparsities of $A_1$ and $B_1$ exactly as in \cref{lem:input-sparse} (by enumerating all 2-paths $(i, k, j) \in [x] \times [y_1] \times [z]$ with $A_1[i, k] = B_1[k, j] = 1$).
\item To compute $C_2 = A_2 \cdot B_2$, we distinguish the following two cases:
\begin{enumerate}[label=2\alph*.]
    \item If $x \geq m^{\frac{\mu}{1+\mu}+300\epsilon}$, then compute $A_2 \cdot B_2$ using fast matrix multiplication (ignoring the assumption that $A_2$ and $B_2$ are $m$-sparse).
    \item If $x < m^{\frac{\mu}{1+\mu}+300\epsilon}$, then we compute $A_2 \cdot B_2$ with the help of algorithm $\mathcal A$. Let $G = (X, Y, Z, E)$ be a tripartite graph with vertex parts $X, Y, Z$ of sizes $|X| = x,\, |Y| = y_2,\,|Z| = z$. Add edges in $X \times Y$ as specified by $A_2$, add edges in $Y \times Z$ as specified by $B_2$ (i.e., view $A_2$ and $B_2$ as the bi-adjacency matrices for these respective parts), and add all edges in $X \times Z$. Run $\mathcal A$ on this instance $G$. We let~$C_2 \in \set{0, 1}^{x \times z}$ be the matrix with $C_2[i, j] = 1$ if and only if the edge $(i, j) \in X \times Z$ participated in a triangle in $G$.
\end{enumerate}
\end{enumerate}
Finally, report $C = C_1 + C_2$ as the output.

\medskip
Due to their similarity to \cref{lem:input-sparse} we omit the correctness proof of steps 1 and 2a, and only analyze step 2b. By the construction of the graph $G$, there is a 2-path between two nodes $i \in X$ and $j \in Z$ if and only if there is some node~\makebox{$k \in Y$} with~$A[i, k] = B[k, j] = 1$. Since the graph contains \emph{all} edges~$(i, j)$ any such 2-path can be completed to a triangle. Therefore, $G$ indeed contains a triangle involving $(i, j)$ if and only if~\makebox{$C_2[i, j] = 1$}.

Next, focus on the running time. As before, step 1 runs in time \smash{$\Order(m \cdot m^{\frac{\mu}{1+\mu}-\epsilon}) = \Order(m^{1+\frac{\mu}{1+\mu}-\epsilon})$}. To bound the running time of step 2, we may assume that $x \geq m^{\frac{\mu}{1+\mu}-\epsilon}$ (as otherwise there is no heavy index $k$ which renders step 2 trivial) and that \smash{$y_2 \leq m / m^{\frac{\mu}{1+\mu}-\epsilon} = m^{\frac{1}{1+\mu}+\epsilon}$}.

We start with step 2a: By the three assumptions that \smash{$x \geq m^{\frac{\mu}{1+\mu}+300\epsilon}$}, that $x z \leq 8m$ and that $x \leq z$, there must be some constant $\delta$ with $300 \epsilon \leq \delta \leq \frac{1-\mu}{2}$ such that \smash{$x \leq m^{\frac{\mu+\delta}{1+\mu}+\order(1)}$} and \smash{$z \leq m^{\frac{1-\delta}{1+\mu}+\order(1)}$}. Therefore, it takes time \smash{$\Order(m^{\omega(\frac{\mu+\delta}{1+\mu}, \frac{1}{1+\mu}+\epsilon, \frac{1-\delta}{1+\mu})+\epsilon})$}, say, to compute~$C_2$ by fast matrix multiplication. Using \cref{lem:rho-bounds,fact:omega-scale,fact:omega-subadditive} this exponent can be bounded by
\begin{gather*}
    \omega\parens*{\frac{\mu+\delta}{1+\mu}, \frac{1}{1+\mu} + \epsilon, \frac{1-\delta}{1+\mu} - \delta} + \epsilon \\
    \qquad\leq \frac{\omega(\mu + \delta, 1, 1 - \delta)}{1+\mu} + 2\epsilon \\
    \qquad\leq \frac{1 + 2\mu - 0.02\delta}{1+\mu} + 2\epsilon \\
    \qquad\leq \frac{1 + 2\mu}{1+\mu} + -0.01 \delta + 2\epsilon \\
    \qquad\leq 1 + \frac{\mu}{1+\mu} - \epsilon.
\end{gather*}

It remains to analyze the running time of step 2b. Recall that we can assume that~\smash{$y_2 \leq m^{\frac{1}{1+\mu}+\epsilon}$} and that~\smash{$x \geq m^{\frac{\mu}{1+\mu}-\epsilon}$} and thus~\smash{$z \leq m^{\frac{1}{1+\mu}+\epsilon}$}. By step 2b we further have \smash{$x < m^{\frac{\mu}{1+\mu}+300\epsilon}$}. Let \smash{$n = m^{\frac{1}{1+\mu}+600\epsilon}$}, then these bounds imply that vertex parts in the graph $G$ have sizes $|X| = x \leq n^\mu,\, |Y| = y_2 \leq n,\, |Z| = z \leq n$. Moreover, the number of edges in the graph $G$ is at most $m + x z \leq \Order(m) = \Order(n^{1+\mu})$. Therefore, the graph $G$ is an instance of $\AETriangle(n^\mu, n, n, \Order(n^{1+\mu}))$ and can be solved by $\mathcal A$ in time
\begin{equation*}
    \Order(n^{1+2\mu-\epsilon'}) = \Order(m^{(\frac{1}{1+\mu}+600\epsilon)(1 + 2\mu - \epsilon')}) = \Order(m^{1+\frac{\mu}{1+\mu}+1800\epsilon - \frac12\epsilon'}).
\end{equation*}
(In the last step we used the trivial bounds $\frac12 \leq \mu \leq 1$.) Setting $\epsilon = \frac{\epsilon'}{3602}$, this becomes \smash{$\Order(m^{1+\frac{\mu}{1+\mu}-\epsilon})$} and also the total running time is \smash{$\Order(m^{1+\frac{\mu}{1+\mu}-\epsilon})$}.
\end{proof}

In fact, this equivalence between Boolean matrix multiplication and $\AETriangle$ can be adapted to an equivalence between \emph{integer} matrix multiplication and $\#\AETriangle$:

\begin{theorem}[Equivalence with $\#\AETriangle$] \label{thm:equivalence-integer}
The following four statements are equivalent in terms of randomized algorithms:
\begin{enumerate}[label=(\arabic*)]
    \item There is some $\epsilon > 0$ such that sparse integer matrix multiplication (with entries bounded by $\poly(m)$) is in time $\Order(m^{1+\frac{\mu}{1+\mu}-\epsilon})$.
    \item There is some $\epsilon > 0$ such that sparse nonnegative integer matrix multiplication (with entries bounded by~$\poly(m)$) is in time $\Order(m^{1+\frac{\mu}{1+\mu}-\epsilon})$.
    \item There is some $\epsilon > 0$ such that sparse integer matrix multiplication of $\set{0, 1}$-matrices is in time $\Order(m^{1+\frac{\mu}{1+\mu}-\epsilon})$.
    \item There is some $\epsilon' > 0$ such that $\#\textsc{AllEdgesTriangle}(n^\mu, n, n, n^{1+\mu})$ is in time $\Order(n^{1+2\mu-\epsilon'})$.
\end{enumerate}
\end{theorem}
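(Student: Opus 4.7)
The plan is to close the cycle $(1) \Rightarrow (2) \Rightarrow (3) \Rightarrow (4) \Rightarrow (1)$. The first two implications are immediate specializations: $\{0,1\}$-matrices are a special case of nonnegative integer matrices, which are a special case of integer matrices. For $(3) \Rightarrow (4)$ I would adapt the construction of Theorem~\ref{thm:equivalence-boolean}: given an instance $G = (X, Y, Z, E)$ of $\#\AETriangle(n^\mu, n, n, n^{1+\mu})$, form the $\{0,1\}$-valued bi-adjacency matrices $A$ and $B$ of the $X \times Y$ and $Y \times Z$ parts of $E$, and compute their integer product using the algorithm of~(3). Since both factors are $\{0,1\}$-valued, for each edge $(i, j) \in (X \times Z) \cap E$ the integer entry $(AB)[i, j]$ equals the number of triangles through $(i, j)$. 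The bounds $\IN, \OUT = \Order(n^{1+\mu})$ translate the assumed running time $\Order(m^{1 + \frac{\mu}{1+\mu} - \epsilon})$ into $\Order(n^{1+2\mu-\epsilon'})$ for some $\epsilon' > 0$.

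For $(4) \Rightarrow (1)$ I would follow the $(2) \Rightarrow (1)$ argument of Theorem~\ref{thm:equivalence-boolean} essentially step by step. First, apply Lemma~\ref{lem:densification-integer} (randomized) to reduce to $xz \leq 8m$; the entry magnitudes grow by at most a $\poly(\IN)$ factor and hence remain $\poly(m)$-bounded. Then perform the same heavy/light split of $A$ at threshold $m^{\frac{\mu}{1+\mu}-\epsilon}$. The light product is computed by direct enumeration of 2-paths while summing signed integer contributions, giving the usual $\Order(m^{1 + \frac{\mu}{1+\mu} - \epsilon})$ bound. The heavy product splits into case~2a ($x \geq m^{\frac{\mu}{1+\mu}+300\epsilon}$), handled by fast rectangular integer matrix multiplication with the Boolean-case analysis unchanged, and case~2b ($x < m^{\frac{\mu}{1+\mu}+300\epsilon}$), which invokes the $\#\AETriangle$ oracle.

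The only new ingredient is implementing case~2b for signed integer entries. I would first split $A_2 = A_2^+ - A_2^-$ and $B_2 = B_2^+ - B_2^-$ into nonnegative parts, and then bit-decompose $A_2^{\pm} = \sum_{b=0}^{\Order(\log m)} 2^b \widehat{A}_b$ with $\widehat{A}_b \in \{0,1\}^{x \times y_2}$ and likewise for $B_2^{\pm}$, expressing $A_2 B_2$ as a signed sum of $\Order(\log^2 m)$ products $\widehat{A}_b \widehat{B}_{b'}$ of $\{0,1\}$-matrices. For each such product I build the tripartite graph with $\widehat{A}_b$-edges in $X \times Y$, $\widehat{B}_{b'}$-edges in $Y \times Z$, and \emph{all} edges in $X \times Z$, and query the $\#\AETriangle$ oracle; the triangle count through $(i, j) \in X \times Z$ equals the nonnegative integer $(\widehat{A}_b \widehat{B}_{b'})[i, j]$. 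The size and edge-count bounds of the Boolean analysis carry over verbatim (parameters match $\#\AETriangle(n^\mu, n, n, n^{1+\mu})$ for $n = m^{\frac{1}{1+\mu} + \Order(\epsilon)}$), so each oracle call runs in $\Order(m^{1 + \frac{\mu}{1+\mu} - \epsilon''})$ and the $\Order(\log^2 m)$ bit-decomposition overhead is absorbed into the $\epsilon''$-slack.

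The main obstacle is reconciling cancellations in signed integer multiplication with the purely additive nature of triangle counting; this is what forces the sign-splitting into four subproducts followed by the bit-decomposition. A delicate point is that an individual $\widehat{A}_b \widehat{B}_{b'}$ can have output sparsity far larger than $|AB|$ due to cross-cancellations, but this is harmless in case~2b because we compute the full matrix product directly rather than exploiting output sparsity at that level. A secondary subtlety is confirming that Lemma~\ref{lem:densification-integer} keeps entry magnitudes $\poly(m)$-bounded so that bit-decomposition costs only $\Order(\log m)$ per factor.
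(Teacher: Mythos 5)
Your proposal is correct, but it is organized differently from the paper's proof, so a comparison is worthwhile. The paper closes the cycle in the opposite direction, $(1)\Rightarrow(4)\Rightarrow(3)\Rightarrow(2)\Rightarrow(1)$: it isolates the bit-decomposition as the implication $(3)\Rightarrow(2)$ and the sign-splitting $A = A_0 - A_1$, $B = B_0 - B_1$ as the implication $(2)\Rightarrow(1)$, each applied \emph{globally} to the input matrices. This forces the paper to verify two support-containment facts: that $\supp(A_{\ell_1} B_{\ell_2}) \subseteq \supp(AB)$ for nonnegative matrices (so the bit-level products stay output-sparse), and that after densification $xz \leq m^{1+\order(1)}$ bounds the output size of the four signed products. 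Your route instead makes $(1)\Rightarrow(2)\Rightarrow(3)$ the trivial specializations and concentrates all the lifting work in a single implication $(4)\Rightarrow(1)$, applying the sign-split and bit-decomposition only \emph{locally} to the heavy blocks $A_2, B_2$ inside case~2b. This is a legitimate and arguably cleaner alternative: since the \AETriangle{} oracle returns the full $x \times z \leq 8m$ product anyway, the output sparsity of the individual bit-level products is irrelevant, exactly as you observe, and the two support-containment arguments of the paper become unnecessary. Your $(3)\Rightarrow(4)$ matches the paper's $(1)\Rightarrow(4)$ (which in fact only ever multiplies $\set{0,1}$ bi-adjacency matrices, so it already proves the stronger implication you state).

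One small correction to your final subtlety: \cref{lem:densification-integer} does \emph{not} keep entries $\poly(m)$-bounded. The recursion multiplies entries by a random $r \in [\IN^{10}]$ at each of $\Order(\log \IN)$ levels, so the innermost calls see entries of magnitude $\IN^{\Order(\log \IN)}$, i.e.\ quasi-polynomial. Your bit-decomposition in case~2b therefore needs $\Order(\log^2 m)$ bits per entry and $\Order(\log^4 m)$ oracle calls rather than $\Order(\log^2 m)$. This is still a polylogarithmic overhead and is absorbed into the $\epsilon''$-slack, so the argument survives, but the claim as written should be weakened accordingly.
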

\begin{proof}[Proof: (1) implies (4).]
This part of the proof is again very similar to \cref{lem:input-sparse}. The only difference is that since integer matrix multiplication supports to \emph{count} the number of 2-paths between two nodes $i$ and~$j$, we can also count the number of triangles involving $(i, j)$.

\medskip\noindent
\emph{(4) implies (3).}
This part of the proof is similar to \cref{thm:equivalence-boolean}. The steps 1 and 2a are already computing the integer-valued matrices $C_1$ and $C_2$. Since we restrict the matrices to have entries $\set{0, 1}$, in step 2b we can exactly express the matrix multiplication problem as an instance of $\#\AETriangle$. We omit further details and instead focus on the new aspects of this proof.

\medskip\noindent
\emph{(3) implies (2).}
Assume that for some $\epsilon > 0$ there is an \smash{$\Order(m^{1+\frac{\mu}{1+\mu}-\epsilon})$}-time algorithm for sparse matrix multiplication of $\set{0, 1}$-matrices. We give an algorithm to efficiently multiply two \emph{nonnegative} matrices~$A, B$ with entries bounded by $m^c$ for some constant $c$. Let $L = \ceil{c \log (m)}$, and construct the $\set{0, 1}$-matrices~$A_0, \dots, A_L$, where~$A_\ell[i, j] = 1$ if and only if the $\ell$-th bit of $A[i, j]$ is one. We similarly construct $B_0, \dots, B_L$. Note that
\begin{equation*}
    A = \sum_{\ell=0}^L 2^\ell A_\ell, \qquad B = \sum_{\ell=0}^L 2^\ell B_\ell,
\end{equation*}
and thus
\begin{equation*}
    A B = \parens*{\sum_{\ell=0}^L 2^\ell A_\ell} \parens*{\sum_{\ell=0}^L 2^\ell B_\ell} = \sum_{\ell_1 = 0}^L \sum_{\ell_2 = 0}^L 2^{\ell_1 + \ell_2} A_{\ell_1} B_{\ell_2}.
\end{equation*}
Note that the sparsity of $A_{\ell_1}$, $B_{\ell_2}$ and $A_{\ell_1} B_{\ell_2}$ does not blow up---more precisely, $\supp(A_{\ell_1}) \subseteq \supp(A)$, $\supp(B_{\ell_2}) \subseteq \supp(B)$ and $\supp(A_{\ell_1} B_{\ell_2}) \subseteq \supp(A B)$ for all $\ell_1, \ell_2$. Hence, we can compute the $L^2 = \Order(\log^2 m)$ matrix products $A_{\ell_1} B_{\ell_2}$ in time \smash{$\widetilde\Order(m^{1+\frac{\mu}{1+\mu}-\epsilon})$}, and obtain $AB$ by the previous equation.

\medskip\noindent
\emph{(2) implies (1).}
Assume that for some $\epsilon > 0$ there is an $\Order(m^{1+\frac{\mu}{1+\mu}-\epsilon})$-time algorithm for sparse nonnegative matrix multiplication. We give an algorithm to efficiently multiply two \emph{integer} matrices~\makebox{$A \in \Int^{x \times y}$} and~\makebox{$B \in \Int^{y \times z}$} (with possibly negative entries). By the randomized densification from \cref{lem:densification-integer}, we may assume that~\makebox{$x z \leq m^{1+\order(1)}$} at the cost of worsening the running time by a polylogarithmic factor.

Let $\Delta$ denote the largest entry in $A$ and $B$ in absolute value. We define two nonnegative matrices~$A_0, A_1$ as follows:
\begin{equation*}
    A_0[i, j] =
    \begin{cases}
        A[i, j] + \Delta &\text{if $A[i, j] \neq 0$,} \\
        0 &\text{otherwise,}
    \end{cases}
    \qquad A_1[i, j] =
    \begin{cases}
        \Delta &\text{if $A[i, j] \neq 0$,} \\
        0 &\text{otherwise.}
    \end{cases}
\end{equation*}
Note that $A = A_0 - A_1$. For similarly defined nonnegative matrices $B_0, B_1$ we have $B = B_0 - B_1$. It follows that
\begin{equation*}
    A B = (A_0 - A_1)(B_0 - B_1) = A_0 B_0 - A_0 B_1 - A_1 B_0 + A_1 B_1.
\end{equation*}
Since $\supp(A_0), \supp(A_1) \subseteq \supp(A)$ and similarly $\supp(B_0), \supp(B_1) \subseteq \supp(B)$, the number of nonzero entries in $A_0, A_1, B_0, B_1$ is bounded by $m$. Additionally, since all four products have dimensions $x \times z$ and we assumed that $x z \leq m^{1+\order(1)}$, the output size is trivially bounded by $m^{1+\order(1)}$. Therefore, we can compute the four matrix products in time $m^{1+\frac{\mu}{1+\mu}-\epsilon+\order(1)} \leq \Order(m^{1+\frac{\mu}{1+\mu}-\epsilon/2})$, and compute $AB$ by the previous equation.
\end{proof}
\pagebreak[2]
\section{Conclusions and Open Problems} \label{sec:conclusions}
We conclude with two important open questions.

\paragraph{Almost-Linear Time?}
If $\omega=2$ then matrix multiplication can be solved in linear time in the \emph{dense} case, but the $m^{4/3}$ barrier persists in the fully sparse case $m =\IN+\OUT$. The non-existence of linear-time algorithms for sparse matrix multiplication has been used as hardness assumption in~\cite{BerkholzGS20}. \cref{thm:equivalence-boolean} shows that future research can focus on a concrete special case, namely the $\AETriangle(\sqrt n, n, n, n^{3/2})$ problem. Any new techniques either for algorithms or for reductions (from other famous problems) should be tested against it.

\paragraph{Derandomization.}
Can we solve integer matrix multiplication in time $O(m^{1+\frac{\mu}{1+\mu}+\epsilon})$ \emph{deterministically}? A full derandomization of our $O(\IN^{\sigma(r)+\epsilon})$-time algorithm for $\OUT = \Theta(\IN^r)$ for all $r\in [0,2]$ is certainly challenging, as it would imply a deterministic $O(m^{1+\epsilon})$-time algorithm for verifying whether three given matrices $A,B,C$ with $m$ nonzeros satisfy $C=AB$, see \cref{sec:introduction:sec:apps}. Such a derandomization of Freivalds' algorithm, which even applies to sparse matrices, is perhaps too strong to hope for. The next best goal is to obtain a derandomization in the relaxed setting (pursued, e.g., in~\cite{Kutzkov13,Kunnemann18}) in which we are given a close estimate $b$ that satisfies $\OUT \le b \le \Order(\OUT)$.

\bibliographystyle{alphaurl}
\bibliography{references}

\newcommand{\etalchar}[1]{$^{#1}$}
\begin{thebibliography}{VGWWZ15}

\bibitem[ABF23]{AbboudBF23}
Amir Abboud, Karl Bringmann, and Nick Fischer.
\newblock Stronger {3-SUM} lower bounds for approximate distance oracles via additive combinatorics.
\newblock In {\em 55th Annual {ACM} Symposium on Theory of Computing ({STOC} 2023)}. {ACM}, 2023.
\newblock To appear.
\newblock \href {https://doi.org/10.48550/arXiv.2211.07058} {\path{doi:10.48550/arXiv.2211.07058}}.

\bibitem[ABKZ22]{AbboudBKZ22}
Amir Abboud, Karl Bringmann, Seri Khoury, and Or~Zamir.
\newblock Hardness of approximation in {P} via short cycle removal: {Cycle} detection, distance oracles, and beyond.
\newblock In Stefano Leonardi and Anupam Gupta, editors, {\em 54th Annual {ACM} Symposium on Theory of Computing ({STOC} 2022)}, pages 1487--1500. {ACM}, 2022.
\newblock \href {https://doi.org/10.1145/3519935.3520066} {\path{doi:10.1145/3519935.3520066}}.

\bibitem[ADKF70]{ArlazarovDKF70}
Vladimir~L'vovich Arlazarov, Yefim~A. Dinitz, M.~Kronrod, and Igor~Aleksandrovich Faradzhev.
\newblock On economical construction of the transitive closure of an oriented graph.
\newblock In {\em Doklady Akademii Nauk}, volume 194, pages 487--488. Russian Academy of Sciences, 1970.

\bibitem[AFG15]{AmbainisFG15}
Andris Ambainis, Yuval Filmus, and Fran{\c{c}}ois~Le Gall.
\newblock Fast matrix multiplication: {Limitations} of the coppersmith-winograd method.
\newblock In Rocco~A. Servedio and Ronitt Rubinfeld, editors, {\em 47th Annual {ACM} Symposium on Theory of Computing ({STOC} 2015)}, pages 585--593. {ACM}, 2015.
\newblock \href {https://doi.org/10.1145/2746539.2746554} {\path{doi:10.1145/2746539.2746554}}.

\bibitem[Alm21]{Alman21}
Josh Alman.
\newblock Limits on the universal method for matrix multiplication.
\newblock {\em Theory Comput.}, 17:1--30, 2021.
\newblock URL: \url{https://theoryofcomputing.org/articles/v017a001/}.

\bibitem[AP09]{AmossenP09}
Rasmus~Resen Amossen and Rasmus Pagh.
\newblock Faster join-projects and sparse matrix multiplications.
\newblock In Ronald Fagin, editor, {\em 12th International Conference on Database Theory ({ICDT} 2009)}, volume 361 of {\em {ACM} International Conference Proceeding Series}, pages 121--126. {ACM}, 2009.
\newblock \href {https://doi.org/10.1145/1514894.1514909} {\path{doi:10.1145/1514894.1514909}}.

\bibitem[AR15]{ArnoldR15}
Andrew Arnold and Daniel~S. Roche.
\newblock Output-sensitive algorithms for sumset and sparse polynomial multiplication.
\newblock In Kazuhiro Yokoyama, Steve Linton, and Daniel Robertz, editors, {\em 40th International Symposium on Symbolic and Algebraic Computation ({ISSAC 2015})}, pages 29--36. {ACM}, 2015.
\newblock \href {https://doi.org/10.1145/2755996.2756653} {\path{doi:10.1145/2755996.2756653}}.

\bibitem[AW14]{AbboudV14}
Amir Abboud and Virginia~Vassilevska Williams.
\newblock Popular conjectures imply strong lower bounds for dynamic problems.
\newblock In {\em 55th Annual {IEEE} Symposium on Foundations of Computer Science ({FOCS} 2014)}, pages 434--443. {IEEE} Computer Society, 2014.
\newblock \href {https://doi.org/10.1109/FOCS.2014.53} {\path{doi:10.1109/FOCS.2014.53}}.

\bibitem[AW18a]{AlmanW18a}
Josh Alman and Virginia~Vassilevska Williams.
\newblock Further limitations of the known approaches for matrix multiplication.
\newblock In Anna~R. Karlin, editor, {\em 9th Innovations in Theoretical Computer Science Conference ({ITCS} 2018)}, volume~94 of {\em LIPIcs}, pages 25:1--25:15. Schloss Dagstuhl - Leibniz-Zentrum f{\"{u}}r Informatik, 2018.
\newblock \href {https://doi.org/10.4230/LIPIcs.ITCS.2018.25} {\path{doi:10.4230/LIPIcs.ITCS.2018.25}}.

\bibitem[AW18b]{AlmanW18b}
Josh Alman and Virginia~Vassilevska Williams.
\newblock Limits on all known (and some unknown) approaches to matrix multiplication.
\newblock In Mikkel Thorup, editor, {\em 59th Annual {IEEE} Symposium on Foundations of Computer Science ({FOCS} 2018)}, pages 580--591. {IEEE} Computer Society, 2018.
\newblock \href {https://doi.org/10.1109/FOCS.2018.00061} {\path{doi:10.1109/FOCS.2018.00061}}.

\bibitem[AW21]{AlmanW21}
Josh Alman and Virginia~Vassilevska Williams.
\newblock A refined laser method and faster matrix multiplication.
\newblock In D{\'{a}}niel Marx, editor, {\em 32nd Annual {ACM-SIAM} Symposium on Discrete Algorithms ({SODA} 2021)}, pages 522--539. {SIAM}, 2021.
\newblock \href {https://doi.org/10.1137/1.9781611976465.32} {\path{doi:10.1137/1.9781611976465.32}}.

\bibitem[AY07]{AlonY07}
Noga Alon and Raphael Yuster.
\newblock Fast algorithms for maximum subset matching and all-pairs shortest paths in graphs with a (not so) small vertex cover.
\newblock In Lars Arge, Michael Hoffmann, and Emo Welzl, editors, {\em 15th Annual European Symposium on Algorithms (ESA 2007)}, volume 4698 of {\em Lecture Notes in Computer Science}, pages 175--186. Springer, 2007.
\newblock \href {https://doi.org/10.1007/978-3-540-75520-3\_17} {\path{doi:10.1007/978-3-540-75520-3\_17}}.

\bibitem[AYZ97]{AlonYZ97}
Noga Alon, Raphael Yuster, and Uri Zwick.
\newblock Finding and counting given length cycles.
\newblock {\em Algorithmica}, 17(3):209--223, 1997.
\newblock \href {https://doi.org/10.1007/BF02523189} {\path{doi:10.1007/BF02523189}}.

\bibitem[BCC{\etalchar{+}}16]{BlasiakCCGU16}
Jonah Blasiak, Thomas Church, Henry Cohn, Joshua~A. Grochow, and Chris Umans.
\newblock On cap sets and the group-theoretic approach to matrix multiplication.
\newblock {\em CoRR}, abs/1605.06702, 2016.
\newblock URL: \url{http://arxiv.org/abs/1605.06702}, \href {https://arxiv.org/abs/1605.06702} {\path{arXiv:1605.06702}}.

\bibitem[BCC{\etalchar{+}}17]{BlasiakCCGU17}
Jonah Blasiak, Thomas Church, Henry Cohn, Joshua~A. Grochow, and Chris Umans.
\newblock Which groups are amenable to proving exponent two for matrix multiplication?
\newblock {\em CoRR}, abs/1712.02302, 2017.
\newblock URL: \url{http://arxiv.org/abs/1712.02302}, \href {https://arxiv.org/abs/1712.02302} {\path{arXiv:1712.02302}}.

\bibitem[BCH15]{BorassiCH16}
Michele Borassi, Pierluigi Crescenzi, and Michel Habib.
\newblock Into the square: On the complexity of some quadratic-time solvable problems.
\newblock In Pierluigi Crescenzi and Michele Loreti, editors, {\em 16th Italian Conference on Theoretical Computer Science ({ICTCS} 2015)}, volume 322 of {\em Electronic Notes in Theoretical Computer Science}, pages 51--67. Elsevier, 2015.
\newblock \href {https://doi.org/10.1016/j.entcs.2016.03.005} {\path{doi:10.1016/j.entcs.2016.03.005}}.

\bibitem[BCRL79]{BiniCRL79}
Dario Bini, Milvio Capovani, Francesco Romani, and Grazia Lotti.
\newblock $o(n{2.7799})$ complexity for $n \times n$ approximate matrix multiplication.
\newblock {\em Inf. Process. Lett.}, 8(5):234--235, 1979.
\newblock \href {https://doi.org/10.1016/0020-0190(79)90113-3} {\path{doi:10.1016/0020-0190(79)90113-3}}.

\bibitem[BFN21]{BringmannFN21}
Karl Bringmann, Nick Fischer, and Vasileios Nakos.
\newblock Sparse nonnegative convolution is equivalent to dense nonnegative convolution.
\newblock In Samir Khuller and Virginia~Vassilevska Williams, editors, {\em 53rd Annual {ACM} Symposium on Theory of Computing ({STOC} 2021)}, pages 1711--1724. {ACM}, 2021.
\newblock \href {https://doi.org/10.1145/3406325.3451090} {\path{doi:10.1145/3406325.3451090}}.

\bibitem[BFN22]{BringmannFN22}
Karl Bringmann, Nick Fischer, and Vasileios Nakos.
\newblock Deterministic and {Las Vegas} algorithms for sparse nonnegative convolution.
\newblock In Joseph~(Seffi) Naor and Niv Buchbinder, editors, {\em 33rd Annual {ACM-SIAM} Symposium on Discrete Algorithms ({SODA} 2022)}, pages 3069--3090. {SIAM}, 2022.
\newblock \href {https://doi.org/10.1137/1.9781611977073.119} {\path{doi:10.1137/1.9781611977073.119}}.

\bibitem[BGS20]{BerkholzGS20}
Christoph Berkholz, Fabian Gerhardt, and Nicole Schweikardt.
\newblock Constant delay enumeration for conjunctive queries: {A} tutorial.
\newblock {\em {ACM} {SIGLOG} News}, 7(1):4--33, 2020.
\newblock \href {https://doi.org/10.1145/3385634.3385636} {\path{doi:10.1145/3385634.3385636}}.

\bibitem[Bl{\"a}99]{Blaser99}
Markus Bl{\"a}ser.
\newblock A {$5/2 n^2$}-lower bound for the rank of {$n \times n$} matrix multiplication over arbitrary fields.
\newblock In {\em 40th Annual {IEEE} Symposium on Foundations of Computer Science ({FOCS} 1999)}, pages 45--50. {IEEE} Computer Society, 1999.
\newblock \href {https://doi.org/10.1109/SFFCS.1999.814576} {\path{doi:10.1109/SFFCS.1999.814576}}.

\bibitem[Bl{\"{a}}13]{Blaser13}
Markus Bl{\"{a}}ser.
\newblock Fast matrix multiplication.
\newblock {\em Theory Comput.}, 5:1--60, 2013.
\newblock \href {https://doi.org/10.4086/toc.gs.2013.005} {\path{doi:10.4086/toc.gs.2013.005}}.

\bibitem[BS06]{BuhrmanS06}
Harry Buhrman and Robert Spalek.
\newblock Quantum verification of matrix products.
\newblock In {\em 17th Annual {ACM-SIAM} Symposium on Discrete Algorithms ({SODA} 2006)}, pages 880--889. {ACM} Press, 2006.
\newblock URL: \url{http://dl.acm.org/citation.cfm?id=1109557.1109654}.

\bibitem[BW12]{BansalW12}
Nikhil Bansal and Ryan Williams.
\newblock Regularity lemmas and combinatorial algorithms.
\newblock {\em Theory Comput.}, 8(1):69--94, 2012.
\newblock \href {https://doi.org/10.4086/toc.2012.v008a004} {\path{doi:10.4086/toc.2012.v008a004}}.

\bibitem[CGLZ20]{ChristandlGLZ20}
Matthias Christandl, Fran{\c{c}}ois~Le Gall, Vladimir Lysikov, and Jeroen Zuiddam.
\newblock Barriers for rectangular matrix multiplication.
\newblock {\em CoRR}, abs/2003.03019, 2020.
\newblock URL: \url{https://arxiv.org/abs/2003.03019}, \href {https://arxiv.org/abs/2003.03019} {\path{arXiv:2003.03019}}.

\bibitem[CH02]{ColeH02}
Richard Cole and Ramesh Hariharan.
\newblock Verifying candidate matches in sparse and wildcard matching.
\newblock In John~H. Reif, editor, {\em 34th Annual {ACM} Symposium on Theory of Computing ({STOC} 2002)}, pages 592--601. {ACM}, 2002.
\newblock \href {https://doi.org/10.1145/509907.509992} {\path{doi:10.1145/509907.509992}}.

\bibitem[Cha15]{Chan15}
Timothy~M. Chan.
\newblock Speeding up the four russians algorithm by about one more logarithmic factor.
\newblock In Piotr Indyk, editor, {\em 26th Annual {ACM-SIAM} Symposium on Discrete Algorithms ({SODA} 2015)}, pages 212--217. {SIAM}, 2015.
\newblock \href {https://doi.org/10.1137/1.9781611973730.16} {\path{doi:10.1137/1.9781611973730.16}}.

\bibitem[CKSU05]{CohnKSU05}
Henry Cohn, Robert~D. Kleinberg, Bal{\'{a}}zs Szegedy, and Christopher Umans.
\newblock Group-theoretic algorithms for matrix multiplication.
\newblock In {\em 46th Annual {IEEE} Symposium on Foundations of Computer Science ({FOCS} 2005)}, pages 379--388. {IEEE} Computer Society, 2005.
\newblock \href {https://doi.org/10.1109/SFCS.2005.39} {\path{doi:10.1109/SFCS.2005.39}}.

\bibitem[CL15]{ChanL15}
Timothy~M. Chan and Moshe Lewenstein.
\newblock Clustered integer {3SUM} via additive combinatorics.
\newblock In Rocco~A. Servedio and Ronitt Rubinfeld, editors, {\em 47th Annual {ACM} Symposium on Theory of Computing ({STOC} 2015)}, pages 31--40. {ACM}, 2015.
\newblock \href {https://doi.org/10.1145/2746539.2746568} {\path{doi:10.1145/2746539.2746568}}.

\bibitem[Cop82]{Coppersmith82}
Don Coppersmith.
\newblock Rapid multiplication of rectangular matrices.
\newblock {\em {SIAM} J. Comput.}, 11(3):467--471, 1982.
\newblock \href {https://doi.org/10.1137/0211037} {\path{doi:10.1137/0211037}}.

\bibitem[Cop97]{Coppersmith97}
Don Coppersmith.
\newblock Rectangular matrix multiplication revisited.
\newblock {\em J. Complex.}, 13(1):42--49, 1997.
\newblock \href {https://doi.org/10.1006/jcom.1997.0438} {\path{doi:10.1006/jcom.1997.0438}}.

\bibitem[CU03]{CohnU03}
Henry Cohn and Christopher Umans.
\newblock A group-theoretic approach to fast matrix multiplication.
\newblock In {\em 44th Annual {IEEE} Symposium on Foundations of Computer Science ({FOCS} 2003)}, pages 438--449. {IEEE} Computer Society, 2003.
\newblock \href {https://doi.org/10.1109/SFCS.2003.1238217} {\path{doi:10.1109/SFCS.2003.1238217}}.

\bibitem[CU13]{CohnU13}
Henry Cohn and Christopher Umans.
\newblock Fast matrix multiplication using coherent configurations.
\newblock In Sanjeev Khanna, editor, {\em 24th Annual {ACM-SIAM} Symposium on Discrete Algorithms ({SODA} 2013)}, pages 1074--1087. {SIAM}, 2013.
\newblock \href {https://doi.org/10.1137/1.9781611973105.77} {\path{doi:10.1137/1.9781611973105.77}}.

\bibitem[CVZ21]{ChristandlVZ21}
Matthias Christandl, P{\'{e}}ter Vrana, and Jeroen Zuiddam.
\newblock Barriers for fast matrix multiplication from irreversibility.
\newblock {\em Theory Comput.}, 17:1--32, 2021.
\newblock URL: \url{https://theoryofcomputing.org/articles/v017a002/}.

\bibitem[CW82]{CoppersmithW82}
Don Coppersmith and Shmuel Winograd.
\newblock On the asymptotic complexity of matrix multiplication.
\newblock {\em {SIAM} J. Comput.}, 11(3):472--492, 1982.
\newblock \href {https://doi.org/10.1137/0211038} {\path{doi:10.1137/0211038}}.

\bibitem[CW90]{CoppersmithW90}
Don Coppersmith and Shmuel Winograd.
\newblock Matrix multiplication via arithmetic progressions.
\newblock {\em J. Symb. Comput.}, 9(3):251--280, 1990.
\newblock \href {https://doi.org/10.1016/S0747-7171(08)80013-2} {\path{doi:10.1016/S0747-7171(08)80013-2}}.

\bibitem[CWX21]{ChanWX21}
Timothy~M. Chan, Virginia~Vassilevska Williams, and Yinzhan Xu.
\newblock Algorithms, reductions and equivalences for small weight variants of all-pairs shortest paths.
\newblock In Nikhil Bansal, Emanuela Merelli, and James Worrell, editors, {\em 48th International Colloquium on Automata, Languages, and Programming ({ICALP} 2021)}, volume 198 of {\em LIPIcs}, pages 47:1--47:21. Schloss Dagstuhl - Leibniz-Zentrum f{\"{u}}r Informatik, 2021.
\newblock \href {https://doi.org/10.4230/LIPIcs.ICALP.2021.47} {\path{doi:10.4230/LIPIcs.ICALP.2021.47}}.

\bibitem[DHK20]{DeepHK20}
Shaleen Deep, Xiao Hu, and Paraschos Koutris.
\newblock Fast join project query evaluation using matrix multiplication.
\newblock In David Maier, Rachel Pottinger, AnHai Doan, Wang{-}Chiew Tan, Abdussalam Alawini, and Hung~Q. Ngo, editors, {\em International Conference on Management of Data ({SIGMOD} 2020)}, pages 1213--1223. {ACM}, 2020.
\newblock \href {https://doi.org/10.1145/3318464.3380607} {\path{doi:10.1145/3318464.3380607}}.

\bibitem[DI05]{DemetrescuI05}
Camil Demetrescu and Giuseppe~F. Italiano.
\newblock Trade-offs for fully dynamic transitive closure on dags: {Breaking} through the $n^2$ barrier.
\newblock {\em J. {ACM}}, 52(2):147--156, 2005.
\newblock \href {https://doi.org/10.1145/1059513.1059514} {\path{doi:10.1145/1059513.1059514}}.

\bibitem[DKS18]{DasKS18}
Debarati Das, Michal Kouck{\'{y}}, and Michael~E. Saks.
\newblock Lower bounds for combinatorial algorithms for boolean matrix multiplication.
\newblock In Rolf Niedermeier and Brigitte Vall{\'{e}}e, editors, {\em 35th Annual Symposium on Theoretical Aspects of Computer Science ({STACS} 2018)}, volume~96 of {\em LIPIcs}, pages 23:1--23:14. Schloss Dagstuhl - Leibniz-Zentrum f{\"{u}}r Informatik, 2018.
\newblock \href {https://doi.org/10.4230/LIPIcs.STACS.2018.23} {\path{doi:10.4230/LIPIcs.STACS.2018.23}}.

\bibitem[DWZ23]{DuanWZ23}
Ran Duan, Hongxun Wu, and Renfei Zhou.
\newblock Faster matrix multiplication via asymmetric hashing.
\newblock In {\em 64th {IEEE} Annual Symposium on Foundations of Computer Science ({FOCS} 2023)}. {IEEE} Computer Society, 2023.
\newblock To appear.
\newblock URL: \url{https://doi.org/10.48550/arXiv.2210.10173}.

\bibitem[Fre79]{Freivalds79}
Rusins Freivalds.
\newblock Fast probabilistic algorithms.
\newblock In {\em 8th International Symposium on Mathematical Foundations of Computer Science ({MFCS} 1979)}, pages 57--69, 1979.
\newblock \href {https://doi.org/10.1007/3-540-09526-8_5} {\path{doi:10.1007/3-540-09526-8_5}}.

\bibitem[Gal12]{LeGall12}
Fran{\c{c}}ois~Le Gall.
\newblock Faster algorithms for rectangular matrix multiplication.
\newblock In {\em 53rd Annual {IEEE} Symposium on Foundations of Computer Science ({FOCS} 2012)}, pages 514--523. {IEEE} Computer Society, 2012.
\newblock \href {https://doi.org/10.1109/FOCS.2012.80} {\path{doi:10.1109/FOCS.2012.80}}.

\bibitem[GJC{\etalchar{+}}23]{GaoJCHWLW23}
Jianhua Gao, Weixing Ji, Fangli Chang, Shiyu Han, Bingxin Wei, Zeming Liu, and Yizhuo Wang.
\newblock A systematic survey of general sparse matrix-matrix multiplication.
\newblock {\em ACM Comput. Surv.}, 55(12), 2023.
\newblock \href {https://doi.org/10.1145/3571157} {\path{doi:10.1145/3571157}}.

\bibitem[GKLP16]{GoldsteinKLP16}
Isaac Goldstein, Tsvi Kopelowitz, Moshe Lewenstein, and Ely Porat.
\newblock How hard is it to find (honest) witnesses?
\newblock In Piotr Sankowski and Christos~D. Zaroliagis, editors, {\em 24th Annual European Symposium on Algorithms (ESA 2016)}, volume~57 of {\em LIPIcs}, pages 45:1--45:16. Schloss Dagstuhl - Leibniz-Zentrum f{\"{u}}r Informatik, 2016.
\newblock \href {https://doi.org/10.4230/LIPIcs.ESA.2016.45} {\path{doi:10.4230/LIPIcs.ESA.2016.45}}.

\bibitem[GLL{\etalchar{+}}17]{GasieniecLLPT17}
Leszek Gasieniec, Christos Levcopoulos, Andrzej Lingas, Rasmus Pagh, and Takeshi Tokuyama.
\newblock Efficiently correcting matrix products.
\newblock {\em Algorithmica}, 79(2):428--443, 2017.
\newblock \href {https://doi.org/10.1007/s00453-016-0202-3} {\path{doi:10.1007/s00453-016-0202-3}}.

\bibitem[GU18]{LeGallU18}
Francois~Le Gall and Florent Urrutia.
\newblock Improved rectangular matrix multiplication using powers of the coppersmith-winograd tensor.
\newblock In Artur Czumaj, editor, {\em 29th Annual {ACM-SIAM} Symposium on Discrete Algorithms ({SODA} 2018)}, pages 1029--1046. {SIAM}, 2018.
\newblock \href {https://doi.org/10.1137/1.9781611975031.67} {\path{doi:10.1137/1.9781611975031.67}}.

\bibitem[Gus78]{Gustavson78}
Fred~G. Gustavson.
\newblock Two fast algorithms for sparse matrices: {Multiplication} and permuted transposition.
\newblock {\em {ACM} Trans. Math. Softw.}, 4(3):250--269, 1978.
\newblock \href {https://doi.org/10.1145/355791.355796} {\path{doi:10.1145/355791.355796}}.

\bibitem[HP98]{HuangP98}
Xiaohan Huang and Victor~Y. Pan.
\newblock Fast rectangular matrix multiplication and applications.
\newblock {\em J. Complex.}, 14(2):257--299, 1998.
\newblock \href {https://doi.org/10.1006/jcom.1998.0476} {\path{doi:10.1006/jcom.1998.0476}}.

\bibitem[IS09]{IwenS09}
Mark~A. Iwen and Craig~V. Spencer.
\newblock A note on compressed sensing and the complexity of matrix multiplication.
\newblock {\em Inf. Process. Lett.}, 109(10):468--471, 2009.
\newblock \href {https://doi.org/10.1016/j.ipl.2009.01.010} {\path{doi:10.1016/j.ipl.2009.01.010}}.

\bibitem[JS15]{JacobS15}
Riko Jacob and Morten St{\"{o}}ckel.
\newblock Fast output-sensitive matrix multiplication.
\newblock In Nikhil Bansal and Irene Finocchi, editors, {\em 23th Annual European Symposium on Algorithms (ESA 2015)}, volume 9294 of {\em Lecture Notes in Computer Science}, pages 766--778. Springer, 2015.
\newblock \href {https://doi.org/10.1007/978-3-662-48350-3\_64} {\path{doi:10.1007/978-3-662-48350-3\_64}}.

\bibitem[JX23]{JinX23}
Ce~Jin and Yinzhan Xu.
\newblock Removing additive structure in {3SUM}-based reductions.
\newblock In {\em 55th Annual {ACM} Symposium on Theory of Computing ({STOC} 2023)}. {ACM}, 2023.
\newblock To appear.
\newblock \href {https://doi.org/10.48550/arXiv.2211.07058} {\path{doi:10.48550/arXiv.2211.07058}}.

\bibitem[KPP16]{KopelowitzPP16}
Tsvi Kopelowitz, Seth Pettie, and Ely Porat.
\newblock Higher lower bounds from the {3SUM} conjecture.
\newblock In Robert Krauthgamer, editor, {\em 27th Annual {ACM-SIAM} Symposium on Discrete Algorithms ({SODA} 2016)}, pages 1272--1287. {SIAM}, 2016.
\newblock \href {https://doi.org/10.1137/1.9781611974331.ch89} {\path{doi:10.1137/1.9781611974331.ch89}}.

\bibitem[KS93]{KimbrelS93}
Tracy Kimbrel and Rakesh~K. Sinha.
\newblock A probabilistic algorithm for verifying matrix products using $o(n^2)$ time and $\log_2 n + o(1)$ random bits.
\newblock {\em Inf. Process. Lett.}, 45(2):107--110, 1993.
\newblock \href {https://doi.org/10.1016/0020-0190(93)90224-W} {\path{doi:10.1016/0020-0190(93)90224-W}}.

\bibitem[KSV06]{KaplanSV06}
Haim Kaplan, Micha Sharir, and Elad Verbin.
\newblock Colored intersection searching via sparse rectangular matrix multiplication.
\newblock In Nina Amenta and Otfried Cheong, editors, {\em 22nd Annual Symposium on Computational Geometry ({SoCG} 2006)}, pages 52--60. {ACM}, 2006.
\newblock \href {https://doi.org/10.1145/1137856.1137866} {\path{doi:10.1145/1137856.1137866}}.

\bibitem[K{\"{u}}n18]{Kunnemann18}
Marvin K{\"{u}}nnemann.
\newblock On nondeterministic derandomization of {Freivalds}' algorithm: Consequences, avenues and algorithmic progress.
\newblock In Yossi Azar, Hannah Bast, and Grzegorz Herman, editors, {\em 26th Annual European Symposium on Algorithms (ESA 2018)}, volume 112 of {\em LIPIcs}, pages 56:1--56:16. Schloss Dagstuhl - Leibniz-Zentrum f{\"{u}}r Informatik, 2018.
\newblock \href {https://doi.org/10.4230/LIPIcs.ESA.2018.56} {\path{doi:10.4230/LIPIcs.ESA.2018.56}}.

\bibitem[Kut13]{Kutzkov13}
Konstantin Kutzkov.
\newblock Deterministic algorithms for skewed matrix products.
\newblock In Natacha Portier and Thomas Wilke, editors, {\em 30th Annual Symposium on Theoretical Aspects of Computer Science ({STACS} 2013)}, volume~20 of {\em LIPIcs}, pages 466--477. Schloss Dagstuhl - Leibniz-Zentrum f{\"{u}}r Informatik, 2013.
\newblock \href {https://doi.org/10.4230/LIPIcs.STACS.2013.466} {\path{doi:10.4230/LIPIcs.STACS.2013.466}}.

\bibitem[KW20]{KopelowitzV20}
Tsvi Kopelowitz and Virginia~Vassilevska Williams.
\newblock Towards optimal set-disjointness and set-intersection data structures.
\newblock In Artur Czumaj, Anuj Dawar, and Emanuela Merelli, editors, {\em 47th International Colloquium on Automata, Languages, and Programming ({ICALP} 2020)}, volume 168 of {\em LIPIcs}, pages 74:1--74:16. Schloss Dagstuhl - Leibniz-Zentrum f{\"{u}}r Informatik, 2020.
\newblock \href {https://doi.org/10.4230/LIPIcs.ICALP.2020.74} {\path{doi:10.4230/LIPIcs.ICALP.2020.74}}.

\bibitem[Lan14]{Landsberg14}
Joseph~M. Landsberg.
\newblock New lower bounds for the rank of matrix multiplication.
\newblock {\em {SIAM} J. Comput.}, 43(1):144--149, 2014.
\newblock \href {https://doi.org/10.1137/120880276} {\path{doi:10.1137/120880276}}.

\bibitem[Lin11]{Lingas11}
Andrzej Lingas.
\newblock A fast output-sensitive algorithm for boolean matrix multiplication.
\newblock {\em Algorithmica}, 61(1):36--50, 2011.
\newblock \href {https://doi.org/10.1007/s00453-010-9441-x} {\path{doi:10.1007/s00453-010-9441-x}}.

\bibitem[LM18]{Landsberg18}
Joseph~M. Landsberg and Mateusz Micha{\l}ek.
\newblock A lower bound for the border rank of matrix multiplication.
\newblock {\em International Mathematics Research Notices}, 2018(15):4722--4733, 2018.

\bibitem[LPW20]{LincolnPW20}
Andrea Lincoln, Adam Polak, and Virginia~Vassilevska Williams.
\newblock Monochromatic triangles, intermediate matrix products, and convolutions.
\newblock In Thomas Vidick, editor, {\em 11th Innovations in Theoretical Computer Science Conference ({ITCS} 2020)}, volume 151 of {\em LIPIcs}, pages 53:1--53:18. Schloss Dagstuhl - Leibniz-Zentrum f{\"{u}}r Informatik, 2020.
\newblock \href {https://doi.org/10.4230/LIPIcs.ITCS.2020.53} {\path{doi:10.4230/LIPIcs.ITCS.2020.53}}.

\bibitem[LR83]{LottiR83}
Grazia Lotti and Francesco Romani.
\newblock On the asymptotic complexity of rectangular matrix multiplication.
\newblock {\em Theor. Comput. Sci.}, 23:171--185, 1983.
\newblock \href {https://doi.org/10.1016/0304-3975(83)90054-3} {\path{doi:10.1016/0304-3975(83)90054-3}}.

\bibitem[Nak20]{Nakos20}
Vasileios Nakos.
\newblock Nearly optimal sparse polynomial multiplication.
\newblock {\em {IEEE} Trans. Inf. Theory}, 66(11):7231--7236, 2020.
\newblock \href {https://doi.org/10.1109/TIT.2020.2989385} {\path{doi:10.1109/TIT.2020.2989385}}.

\bibitem[NN93]{NaorN93}
Joseph Naor and Moni Naor.
\newblock Small-bias probability spaces: Efficient constructions and applications.
\newblock {\em {SIAM} J. Comput.}, 22(4):838--856, 1993.
\newblock \href {https://doi.org/10.1137/0222053} {\path{doi:10.1137/0222053}}.

\bibitem[Pag13]{Pagh13}
Rasmus Pagh.
\newblock Compressed matrix multiplication.
\newblock {\em {ACM} Trans. Comput. Theory}, 5(3):9:1--9:17, 2013.
\newblock \href {https://doi.org/10.1145/2493252.2493254} {\path{doi:10.1145/2493252.2493254}}.

\bibitem[Pan78]{Pan78}
Victor~Y. Pan.
\newblock Strassen's algorithm is not optimal: {Trilinear} technique of aggregating, uniting and canceling for constructing fast algorithms for matrix operations.
\newblock In {\em 19th Annual {IEEE} Symposium on Foundations of Computer Science ({FOCS} 1978)}, pages 166--176. {IEEE} Computer Society, 1978.
\newblock \href {https://doi.org/10.1109/SFCS.1978.34} {\path{doi:10.1109/SFCS.1978.34}}.

\bibitem[Pan80]{Pan80}
Victor~Y. Pan.
\newblock New fast algorithms for matrix operations.
\newblock {\em {SIAM} J. Comput.}, 9(2):321--342, 1980.
\newblock \href {https://doi.org/10.1137/0209027} {\path{doi:10.1137/0209027}}.

\bibitem[Pă10]{Patrascu10}
Mihai Pătraşcu.
\newblock Towards polynomial lower bounds for dynamic problems.
\newblock In Leonard~J. Schulman, editor, {\em 42nd Annual {ACM} Symposium on Theory of Computing ({STOC} 2010)}, pages 603--610. {ACM}, 2010.
\newblock \href {https://doi.org/10.1145/1806689.1806772} {\path{doi:10.1145/1806689.1806772}}.

\bibitem[Raz03]{Raz03}
Ran Raz.
\newblock On the complexity of matrix product.
\newblock {\em {SIAM} J. Comput.}, 32(5):1356--1369, 2003.
\newblock \href {https://doi.org/10.1137/S0097539702402147} {\path{doi:10.1137/S0097539702402147}}.

\bibitem[Roc18]{Roche18}
Daniel~S. Roche.
\newblock What can (and can't) we do with sparse polynomials?
\newblock In Manuel Kauers, Alexey Ovchinnikov, and {\'{E}}ric Schost, editors, {\em 43th International Symposium on Symbolic and Algebraic Computation ({ISSAC 2018})}, pages 25--30. {ACM}, 2018.
\newblock \href {https://doi.org/10.1145/3208976.3209027} {\path{doi:10.1145/3208976.3209027}}.

\bibitem[Rom82]{Romani82}
Francesco Romani.
\newblock Some properties of disjoint sums of tensors related to matrix multiplication.
\newblock {\em {SIAM} J. Comput.}, 11(2):263--267, 1982.
\newblock \href {https://doi.org/10.1137/0211020} {\path{doi:10.1137/0211020}}.

\bibitem[San04]{Sankowski04}
Piotr Sankowski.
\newblock Dynamic transitive closure via dynamic matrix inverse (extended abstract).
\newblock In {\em 45th Annual {IEEE} Symposium on Foundations of Computer Science ({FOCS} 2004)}, pages 509--517. {IEEE} Computer Society, 2004.
\newblock \href {https://doi.org/10.1109/FOCS.2004.25} {\path{doi:10.1109/FOCS.2004.25}}.

\bibitem[Sch81]{Schonhage81}
Arnold Sch{\"{o}}nhage.
\newblock Partial and total matrix multiplication.
\newblock {\em {SIAM} J. Comput.}, 10(3):434--455, 1981.
\newblock \href {https://doi.org/10.1137/0210032} {\path{doi:10.1137/0210032}}.

\bibitem[Sch82]{Schorr82}
Amir Schorr.
\newblock Fast algorithm for sparse matrix multiplication.
\newblock {\em Inf. Process. Lett.}, 15(2):87--89, 1982.
\newblock \href {https://doi.org/10.1016/0020-0190(82)90114-4} {\path{doi:10.1016/0020-0190(82)90114-4}}.

\bibitem[Shp03]{Shpilka03}
Amir Shpilka.
\newblock Lower bounds for matrix product.
\newblock {\em {SIAM} J. Comput.}, 32(5):1185--1200, 2003.
\newblock \href {https://doi.org/10.1137/S0097539702405954} {\path{doi:10.1137/S0097539702405954}}.

\bibitem[Sto10]{Stothers10}
Andrew~J. Stothers.
\newblock {\em On the complexity of matrix multiplication}.
\newblock PhD thesis, University of Edinburgh, 2010.

\bibitem[St{\"o}15]{Stockel15}
Morten St{\"o}ckel.
\newblock {\em Randomized Primitives for Big Data Processing}.
\newblock PhD thesis, IT-Universitetet i K{\o}benhavn, Denmark, 2015.

\bibitem[Str69]{Strassen69}
Volker Strassen.
\newblock Gaussian elimination is not optimal.
\newblock {\em Numerische Mathematik}, 13(4):354--356, Aug 1969.
\newblock \href {https://doi.org/10.1007/BF02165411} {\path{doi:10.1007/BF02165411}}.

\bibitem[Str86]{Strassen86}
Volker Strassen.
\newblock The asymptotic spectrum of tensors and the exponent of matrix multiplication.
\newblock In {\em 27th Annual {IEEE} Symposium on Foundations of Computer Science ({FOCS} 1986)}, pages 49--54. {IEEE} Computer Society, 1986.
\newblock \href {https://doi.org/10.1109/SFCS.1986.52} {\path{doi:10.1109/SFCS.1986.52}}.

\bibitem[VGWWZ15]{GuchtWWZ15}
Dirk Van~Gucht, Ryan Williams, David~P. Woodruff, and Qin Zhang.
\newblock The communication complexity of distributed set-joins with applications to matrix multiplication.
\newblock In Tova Milo and Diego Calvanese, editors, {\em 34th {ACM} Symposium on Principles of Database Systems ({PODS} 2015)}, pages 199--212. {ACM}, 2015.
\newblock \href {https://doi.org/10.1145/2745754.2745779} {\path{doi:10.1145/2745754.2745779}}.

\bibitem[VWY06]{VassilevskaWY06}
Virginia Vassilevska, Ryan Williams, and Raphael Yuster.
\newblock Finding the smallest {$H$}-subgraph in real weighted graphs and related problems.
\newblock In Michele Bugliesi, Bart Preneel, Vladimiro Sassone, and Ingo Wegener, editors, {\em 33th International Colloquium on Automata, Languages, and Programming ({ICALP} 2006)}, volume 4051 of {\em Lecture Notes in Computer Science}, pages 262--273. Springer, 2006.
\newblock \href {https://doi.org/10.1007/11786986\_24} {\path{doi:10.1007/11786986\_24}}.

\bibitem[Wil12]{Williams12}
Virginia~Vassilevska Williams.
\newblock Multiplying matrices faster than {Coppersmith}-{Winograd}.
\newblock In Howard~J. Karloff and Toniann Pitassi, editors, {\em 44th Annual {ACM} Symposium on Theory of Computing ({STOC} 2012)}, pages 887--898. {ACM}, 2012.
\newblock \href {https://doi.org/10.1145/2213977.2214056} {\path{doi:10.1145/2213977.2214056}}.

\bibitem[WW18]{VassilevskaWW18}
Virginia~Vassilevska Williams and R.~Ryan Williams.
\newblock Subcubic equivalences between path, matrix, and triangle problems.
\newblock {\em J. {ACM}}, 65(5):27:1--27:38, 2018.
\newblock \href {https://doi.org/10.1145/3186893} {\path{doi:10.1145/3186893}}.

\bibitem[WX20]{VassilevskaWilliamsX20}
Virginia~Vassilevska Williams and Yinzhan Xu.
\newblock Monochromatic triangles, triangle listing and {APSP}.
\newblock In Sandy Irani, editor, {\em 61st Annual {IEEE} Symposium on Foundations of Computer Science ({FOCS} 2020)}, pages 786--797. {IEEE}, 2020.
\newblock \href {https://doi.org/10.1109/FOCS46700.2020.00078} {\path{doi:10.1109/FOCS46700.2020.00078}}.

\bibitem[WY14]{WilliamsY14}
Ryan Williams and Huacheng Yu.
\newblock Finding orthogonal vectors in discrete structures.
\newblock In Chandra Chekuri, editor, {\em 25th Annual {ACM-SIAM} Symposium on Discrete Algorithms ({SODA} 2014)}, pages 1867--1877. {SIAM}, 2014.
\newblock \href {https://doi.org/10.1137/1.9781611973402.135} {\path{doi:10.1137/1.9781611973402.135}}.

\bibitem[Yu18]{Yu18}
Huacheng Yu.
\newblock An improved combinatorial algorithm for boolean matrix multiplication.
\newblock {\em Inf. Comput.}, 261:240--247, 2018.
\newblock \href {https://doi.org/10.1016/j.ic.2018.02.006} {\path{doi:10.1016/j.ic.2018.02.006}}.

\bibitem[YZ05]{YusterZ05}
Raphael Yuster and Uri Zwick.
\newblock Fast sparse matrix multiplication.
\newblock {\em {ACM} Trans. Algorithms}, 1(1):2--13, 2005.
\newblock \href {https://doi.org/10.1145/1077464.1077466} {\path{doi:10.1145/1077464.1077466}}.

\bibitem[Zwi99]{Zwick99}
Uri Zwick.
\newblock All pairs lightest shortest paths.
\newblock In Jeffrey~Scott Vitter, Lawrence~L. Larmore, and Frank~Thomson Leighton, editors, {\em 31st Annual {ACM} Symposium on Theory of Computing ({STOC} 1999)}, pages 61--69. {ACM}, 1999.
\newblock \href {https://doi.org/10.1145/301250.301271} {\path{doi:10.1145/301250.301271}}.

\bibitem[Zwi02]{Zwick02}
Uri Zwick.
\newblock All pairs shortest paths using bridging sets and rectangular matrix multiplication.
\newblock {\em J. {ACM}}, 49(3):289--317, 2002.
\newblock \href {https://doi.org/10.1145/567112.567114} {\path{doi:10.1145/567112.567114}}.

\end{thebibliography}

\appendix
\section{Yuster-Zwick Is Conditionally Optimal} \label{sec:yuster-zwick}
In this short section we provide some evidence that Yuster and Zwick's algorithm~\cite{YusterZ05} for input-sparse matrix multiplication is optimal. Recall that their algorithm computes the product of two sparse~\makebox{$n \times n$} matrices with at most $m$ nonzeros (and without any bound on the number of nonzeros in the output matrix) in time \smash{$m^{\frac{2\omega-4}{\omega-1-\alpha}} n^{\frac{2-\alpha\omega}{\omega-1-\alpha}+\order(1)}$}. For the current values of $\omega \leq 2.3719$~\cite{DuanWZ23} and $\alpha \geq 0.3139$~\cite{LeGallU18}, this running time becomes $\Order(m^{0.704} n^{1.186})$. Note that if $\omega = 2$, the Yuster-Zwick algorithm becomes irrelevant as we can solve the problem in optimal time $n^{2\pm \order(1)}$. We will therefore assume throughout this section that $\omega > 2$. It is easy to check that the Yuster-Zwick algorithm beats the matrix multiplication time $n^\omega$ in the regime with at most \smash{$m \ll n^{\frac{1+\omega}{2}}$} nonzeros. For exactly \smash{$m = \Theta(n^{\frac{1+\omega}{2}})$} nonzeros, their algorithm recovers the matrix multiplication running time. In this section, based on the previous work of the fine-grained complexity community, we show that improving upon Yuster-Zwick in the regime \smash{$m = \Theta(n^{\frac{1+\omega}{2}})$} would contradict some recent fine-grained assumptions. Consider the following problem:

\begin{definition}[Monochromatic All-Edges Triangle]
The Monochromatic All-Edges Triangle problem is, given an edge-colored graph, to decide for each edge whether it is part of a \emph{monochromatic} triangle (i.e., a triangle in which all three edges have the same color).
\end{definition}

The Monochromatic All-Edges Triangle problem can be solved in time $n^{\frac{3+\omega}{2}+\order(1)}$ (this is typically called an \emph{intermediate} running time---between fast matrix multiplication and cubic-time brute-force), and it is a recent conjecture that this time is optimal (up to subpolynomial factors)~\cite{LincolnPW20,VassilevskaWY06,VassilevskaWilliamsX20}:

\begin{hypothesis}[Monochromatic All-Edges Triangle]
The Monochromatic All-Edges Triangle problem cannot be solved in time \smash{$\Order(n^{\frac{3+\omega}{2}-\epsilon})$}, for any $\epsilon > 0$.
\end{hypothesis}

Evidence for this hypothesis is that any improvement for Monochromatic All-Edges Triangle beyond the $n^{\frac{3+\omega}{2}}$ barrier carries over to other well-studied intermediate problems, including the $(\min, \max)$-Product, $\exists$Dominance Product, $\exists$Equality Product, and many more~\cite{VassilevskaWilliamsX20} (assuming that $\omega > 2$). While this hypothesis is much more recent than many other well-established conjectures in fine-grained complexity (and may therefore seem less believable), it can certainly be viewed as an important algorithmic barrier that needs to be overcome to make progress on several interesting problems.

Based on similar reductions as in~\cite{LincolnPW20}, we prove that the Monochromatic All-Edges Triangle hypothesis implies that Yuster and Zwick's algorithm is optimal. The proof is simple, but we are not aware of any prior references.

\begin{lemma}
Assume that $\omega > 2$. Then the Boolean matrix product of two $n \times n$ matrices with at most \smash{$\Order(n^{\frac{1+\omega}{2}})$} nonzero entries cannot be computed in time $\Order(n^{\omega-\epsilon})$, for any $\epsilon > 0$, unless the Monochromatic All-Edges Triangle Hypothesis fails.
\end{lemma}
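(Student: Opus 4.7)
The plan is to reduce the Monochromatic All-Edges Triangle problem on $N$-vertex graphs to multiple instances of sparse Boolean matrix multiplication on $N \times N$ matrices with at most $N^{(1+\omega)/2}$ nonzero entries each, in the spirit of the intermediate-problem reductions of Lincoln, Polak, and Williams~\cite{LincolnPW20}.

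Given a Monochromatic All-Edges Triangle instance $G = (V, E)$ with $|V| = N$ and arbitrarily colored edges, I would proceed color by color. For each color $c$, let $A_c \in \{0, 1\}^{N \times N}$ be the adjacency matrix of the color-$c$ subgraph $G_c = (V, E_c)$, and observe that an edge $(u, v) \in E_c$ lies in a monochromatic triangle of color $c$ if and only if $(A_c^2)[u, v] > 0$. The problem thus reduces to computing $A_c^2$ for every $c$ and inspecting the entries indexed by $E_c$. The driving observation is that the conjectured worst case for the Monochromatic All-Edges Triangle hypothesis sits precisely at $C = \Theta(N^{(3-\omega)/2})$ balanced color classes, each with $|E_c| = \Theta(N^{(1+\omega)/2})$; here every $A_c$ matches \emph{exactly} the sparsity budget of the hypothesized sparse BMM algorithm, so invoking it per color yields total time $C \cdot \Order(N^{\omega - \epsilon}) = \Order(N^{(3+\omega)/2 - \epsilon})$, contradicting the hypothesis.

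To make the reduction cover \emph{all} instances, I would impose a heavy/light threshold slightly above the sparsity budget, say $N^{(1+\omega)/2 + \delta}$ for a small $\delta > 0$. Heavy colors with $|E_c| > N^{(1+\omega)/2 + \delta}$ are handled by dense $N \times N$ Boolean matrix multiplication in $\Order(N^\omega)$ time each; since the total edge count is at most $N^2$, there are at most $N^{(3-\omega)/2 - \delta}$ such colors, contributing $\Order(N^{(3+\omega)/2 - \delta})$ in total. Light colors with $|E_c| \leq N^{(1+\omega)/2}$ are handled by the hypothesized sparse BMM in $\Order(N^{\omega - \epsilon})$ per color; summed over at most $N^{(3-\omega)/2}$ light colors this contributes $\Order(N^{(3+\omega)/2 - \epsilon})$, whereas if there are polynomially more light colors, a per-color Alon-Yuster-Zwick triangle enumeration in $\Order(m^{2\omega/(\omega+1)})$, aggregated via H\"older's inequality, unconditionally delivers $\Order(N^{(3+\omega)/2 - \Omega(1)})$. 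The intermediate regime $|E_c| \in (N^{(1+\omega)/2}, N^{(1+\omega)/2 + \delta}]$ is handled by splitting $G_c$ into $O(1)$ subgraphs each within the sparse BMM sparsity budget and recombining the $O(1)$ resulting matrix products.

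The main obstacle lies in the heavy-color contribution: dense $N \times N$ matrix multiplication applied to $N^{(3-\omega)/2}$ heavy colors already saturates the conjectured $N^{(3+\omega)/2}$ bound up to constants, so the analysis must choose the heavy/light threshold with enough polynomial slack $\delta > 0$ that the heavy part strictly beats $N^{(3+\omega)/2}$, while simultaneously guaranteeing that every remaining color sits in a regime where either the sparse BMM hypothesis or an unconditional sparse-triangle algorithm delivers the required polynomial improvement. Combining the three cases then yields a Monochromatic All-Edges Triangle algorithm running in $\Order(N^{(3+\omega)/2 - \epsilon'})$ for some $\epsilon' > 0$, contradicting the Monochromatic All-Edges Triangle hypothesis.
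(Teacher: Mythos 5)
Your overall plan---a color-by-color reduction, a trichotomy on $|E_c|$, dense matrix multiplication for heavy colors, the hypothesized sparse BMM oracle near the threshold, and Alon--Yuster--Zwick for the remaining colors---is exactly the structure of the paper's proof, and your heavy case is handled correctly. However, the way you assign algorithms to the two lower regimes has a genuine flaw. You propose to run the oracle on \emph{every} color with $|E_c|\le N^{(1+\omega)/2}$ provided there are at most $N^{(3-\omega)/2}$ such colors, and otherwise to fall back to AYZ on all of them. Neither branch of this global dichotomy survives a mixed instance containing, say, on the order of $N^{(3-\omega)/2}$ colors with $\Theta(N^{(1+\omega)/2})$ edges each \emph{together with} $N^{1.9}$ singleton-edge colors: there are then far more than $N^{(3-\omega)/2}$ light colors, so your rule sends everything to AYZ, but AYZ on a color with $m_\chi=\Theta(N^{(1+\omega)/2})$ edges costs $m_\chi^{2\omega/(\omega+1)}=\Theta(N^{\omega})$, and summed over the $\Theta(N^{(3-\omega)/2})$ such colors this is $\Theta(N^{(3+\omega)/2})$---no polynomial savings, hence no contradiction with the hypothesis. (Conversely, calling the oracle once per light color fails when there are $N^2$ tiny colors, costing $N^{2+\omega-\epsilon}$.) The decision must be made \emph{per color} as a function of $m_\chi$, with an AYZ threshold that is polynomially \emph{below} the sparsity budget: AYZ beats $N^{(3+\omega)/2}$ in aggregate only when every color it is applied to satisfies $m_\chi\le N^{(1+\omega)/2-\delta}$, since then $\sum_\chi m_\chi^{2\omega/(\omega+1)}\le \bigl(\sum_\chi m_\chi\bigr)\cdot\max_\chi m_\chi^{(\omega-1)/(\omega+1)}\le N^{(3+\omega)/2-\delta(\omega-1)/(\omega+1)}$, while the oracle is reserved for the band $N^{(1+\omega)/2-\delta}\le m_\chi\le N^{(1+\omega)/2+\delta}$, which contains at most $N^{(3-\omega)/2+\delta}$ colors.

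A second, smaller slip is in your intermediate regime: a color with up to $N^{(1+\omega)/2+\delta}$ edges cannot be split into $O(1)$ subgraphs each within the $O(N^{(1+\omega)/2})$ budget---you need $N^{\delta}$ parts, hence $N^{2\delta}$ pairwise products to recover $A_c^2$, and you must then check that $N^{(3-\omega)/2+\delta}\cdot N^{2\delta}\cdot N^{\omega-\epsilon}$ still polynomially beats $N^{(3+\omega)/2}$ for $\delta$ small relative to $\epsilon$. (The paper instead pads $G_c$ with isolated vertices up to $n'=N^{1+\delta}$ so that $m_\chi\le O((n')^{(1+\omega)/2})$, paying a factor $N^{\delta\omega}$ in the oracle's running time.) Both issues are repairable, and once you adopt per-color thresholds at $N^{(1+\omega)/2\pm\delta}$ with $0<\delta<\epsilon/4$, your argument coincides with the paper's proof.
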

\begin{proof}
We design a reduction from the Monochromatic All-Edges Triangle problem. We will treat all colors separately. So fix any color $\chi$, let $G_\chi$ denote the subgraph with edges colored with $\chi$, and let~$m_\chi$ denote the number of edges in $G_\chi$. We distinguish the following three cases, based on the frequency~$m_\chi$. Throughout, let $\delta > 0$ be a parameter to be fixed later.
\begin{itemize}
    \item If $n^{\frac{1+\omega}{2}+\delta} \leq m_\chi$: Solve the All-Edges Triangle problem on $G_\chi$ in time $n^{\omega+\order(1)}$ using fast (Boolean) matrix multiplication.
    \item If $n^{\frac{1+\omega}{2}-\delta} \leq m_\chi \leq n^{\frac{1+\omega}{2}+\delta}$: By adding $n' = n^{1+\delta}$ isolated dummy nodes to the graph $G_\chi$, we obtain a larger graph $G_\chi'$ with $\Order(n')$ nodes and
    \begin{equation*}
        m_\chi \leq n^{\frac{1+\omega}{2}+\delta} \leq n^{(1+\delta) \cdot \frac{1+\omega}{2}} \leq \Order((n')^{\frac{1+\omega}{2}})
    \end{equation*}
    edges. We can use the oracle for input-sparse matrix multiplication to solve the All-Edges Triangle problem on that graph in time $\Order((n')^{\omega-\epsilon}) \leq \Order(n^{\omega+\delta\omega-\epsilon}) \leq \Order(n^{\omega+3\delta-\epsilon})$.
    \item If $m_\chi \leq n^{\frac{1+\omega}{2}-\delta}$: Solve the All-Edges Triangle problem on $G_{\chi}$ in time \smash{$(m_\chi)^{\frac{2\omega}{\omega+1}+\order(1)}$}~\cite{AlonYZ97}.
\end{itemize}
This completes the description of the algorithm. It remains to bound the running time. Clearly there are at most \smash{$n^2 / n^{\frac{1+\omega}{2}+\delta} = n^{\frac{3-\omega}{2}-\delta}$} many colors falling into the first category. Solving each such color in time $n^{\omega+\order(1)}$ takes time $n^{\frac{\omega+3}{2}-\delta+\order(1)}$ in total. Similarly, there can be at most $n^{\frac{3-\omega}{2}+\delta}$ colors falling into the second category. For each such color we spend time $\Order(n^{\omega + 3\delta - \epsilon})$, and thus the total time for the second case is \smash{$\Order(n^{\frac{\omega+3}{2}+4\delta-\epsilon})$}. Finally, discarding all colors from the first two categories, the remaining colors satisfy that~\smash{$m_\chi \leq n^{\frac{1+\omega}{2}-\delta}$} and that $\sum_\chi m_\chi \leq n^2$. Thus, the time for the third case is at most
\begin{equation*}
    \sum_\chi (m_\chi)^{\frac{2\omega}{\omega+1}+\order(1)} = \sum_\chi m_\chi \cdot (m_\chi)^{\frac{\omega-1}{\omega+1}+\order(1)} \leq n^2 \cdot (n^{\frac{1+\omega}{2}-\delta})^{\frac{\omega-1}{\omega+1}+\order(1)} \leq n^{\frac{3+\omega}{2} - \frac{\delta}{4}+\order(1)}.
\end{equation*}
All in all, the algorithm runs in time $n^{\frac{3+\omega}{2} - \min(\delta, \epsilon - 4\delta, \frac\delta4)+\order(1)}$. The claim follows for $0 < \delta < \frac{\epsilon}{4}$.
\end{proof}

\end{document}